\newcommand{\ket}[1]{\mbox{$ | #1 \rangle $}}
\newcommand{\bra}[1]{\mbox{$ \langle #1 | $}}
\newcommand{\braket}[2]{\mbox{$ \langle #1 | #2 \rangle $}}
\newcommand{\tr}{\mathrm{tr}}
\newcommand{\E}{\mathrm{e}}
\newcommand{\I}{\mathrm{i}}
\newtheoremstyle{note}
  {\topsep/2}              	
  {\topsep/2}            	
  {}                        
  {\parindent}             	
  {\itshape}                
  {.---}                    
  {0pt}                     
  {\thmname{#1}\thmnumber{ \itshape#2}\thmnote{ (#3)}} 
\newtheorem{theorem}{Theorem}
\newtheorem{lemma}{Lemma}
\theoremstyle{definition}
\theoremstyle{remark}
\begin{document}
\title{Efficient verification of entangled continuous-variable quantum states with local measurements}

\author{Ye-Chao Liu}
\affiliation{Key Laboratory of Advanced Optoelectronic Quantum Architecture and Measurement of
Ministry of Education, School of Physics, Beijing Institute of Technology, Beijing 100081, China}

\author{Jiangwei Shang}
\email{jiangwei.shang@bit.edu.cn}
\affiliation{Key Laboratory of Advanced Optoelectronic Quantum Architecture and Measurement of
Ministry of Education, School of Physics, Beijing Institute of Technology, Beijing 100081, China}

\author{Xiangdong Zhang}
\email{zhangxd@bit.edu.cn}
\affiliation{Key Laboratory of Advanced Optoelectronic Quantum Architecture and Measurement of
Ministry of Education, School of Physics, Beijing Institute of Technology, Beijing 100081, China}

\date{\today}
%

\begin{abstract}
Continuous-variable quantum states are of particular importance in various quantum information processing tasks including quantum communication and quantum sensing.
However, a bottleneck has emerged with the fast increasing in size of the quantum systems which severely hinders their efficient characterization.
In this work, we establish a systematic framework for verifying entangled continuous-variable quantum states by employing local measurements only.
Our protocol is able to achieve the unconditionally high verification efficiency which is quadratically better than quantum tomography as well as other nontomographic methods.
Specifically, we demonstrate the power of our protocol by showing the efficient verification of entangled two-mode and multimode coherent states with local measurements.
\end{abstract}

\maketitle
%

\section{Introduction}

Continuous-variable (CV) quantum systems have demonstrated their unique role in various quantum information processing applications \cite{Braunstein.etal2005, Weedbrook.etal2012}.
Because of the quantum description of the electromagnetic field, they are particularly relevant for
quantum communication and quantum-enhanced techniques including sensing, detecting, and imaging.
Also, the atomic and solid state CV systems have the potential for quantum computing.
The actual realization of all these applications must depend on the efficient and reliable characterization
of the quantum states in the first place.
The standard method of quantum state tomography (QST) \cite{DAriano.etal1994, Leonhardt1997, DAriano.etal2003,
Guta.etal2007, Glancy.etal2008, Lvovsky.etal2009, Donaldson.etal2015} is able to obtain all the information about the quantum states based on the Wigner function or the density matrix with a certain precision.
As powerful as it is, however, QST consumes too much resource, which is the reason why other nontomographic methods have been developed \cite{Aolita.etal2015, LiuNana.etal2019, Chabaud.etal2019, Chabaud.etal2020a, Chabaud.etal2021, Takeuchi.etal2019}.

Recently, a new characterization method called quantum state verification (QSV) has been systematically investigated in discrete-variable (DV) quantum systems \cite{Hayashi.etal2006, Pallister.etal2018}.
The task of QSV is to verify that a given quantum device does indeed produce a particular target state that we expect.
The core advantage of QSV lies in its asymptotically quadratic improvement of the verification efficiency as compared to other methods.
By the specific design, QSV can efficiently or even optimally verify many different kinds of multipartite quantum states with local measurements only \cite{Morimae.etal2017, Takeuchi.Morimae2018, Yu.etal2019, Li.etal2019, Wang.Hayashi2019, Zhu.Hayashi2019a, Zhu.Hayashi2019b, Zhu.Hayashi2019c, Zhu.Hayashi2019d, Liu.etal2019b, Li.etal2020b, Dangniam.etal2020, Zhang.etal2020a, Jiang.etal2020, Zhang.etal2020b, Li.etal2020a, Liu.etal2020b}, and the methodology can also be extended to verify quantum processes \cite{Liu.etal2020a,Zhu.Zhang2020,Zeng.etal2020}.
Considering CV quantum systems, however, even if the truncation method can reduce the dimension of the CV states from infinite to finite, the limited choice of measurements makes that the generalization of QSV to CV systems is, in general, hard.

To characterize CV quantum states, intensity measurements based on quadratures are usually employed in quantum tomography and other nontomographic methods.
The intensity measurements are especially suitable for Gaussian states, as they can be fully characterized by expectation values of the quadratic operators.
However, they are, in principle, inappropriate for the task of quantum verification since postprocessing of the experimental data is needed for estimating the quadratures.
Except for some special quantum states defined by quadratures directly like the CV cluster states \cite{Menicucci.etal2006}, of which the verification protocol can be generalized from the discrete scenario \cite{Takeuchi.etal2019}.
Hence, we consider the energy-based photon counting measurements, the realization of which relies on the single-photon detector (SPD) and the more general photon number resolution detector (PNRD) \cite{Mattioli.etal2016, Fukuda.etal2011, Divochiy.etal2008,  Kardynal.etal2008, Morais.etal2020}.
These measurements can realize projections on the Fock bases, as well as projections on the coherent states with the help of displacement operations.
Thus, postprocessing of the data can be avoided by using these ``deterministic'' measurements.
Note, in particular, different from the recent work by Wu \emph{et al.} \cite{Wu.etal2021} on the verification of CV quantum states which demands a necessary preprocessing step for the samples to satisfy the condition of independent and identical distribution, the intrinsic nature of QSV is in general exempt from this requirement.

In this work, we propose a systematic framework for verifying entangled CV quantum states with the help of local measurements only.
Our protocol is able to achieve the unconditionally high verification efficiency with the resource overhead given by ${N\propto O(\epsilon^{-1}\ln\delta^{-1})}$ within infidelity $\epsilon$ and confidence level ${1-\delta}$, which is quadratically better than quantum tomography as well as other nontomographic methods \cite{Aolita.etal2015, LiuNana.etal2019, Chabaud.etal2019, Chabaud.etal2020a, Chabaud.etal2021, Takeuchi.etal2019}.
For demonstration, we show the efficient verification of entangled two-mode as well as multimode coherent states with local measurements.
These states are crucial in various quantum information processing tasks including quantum teleportation \cite{vanEnk.etal2001, Xiaoguang2001}, quantum computation \cite{Cochrane.etal1999, Jeong.etal2002, Ralph.etal2003, Lund.etal2008}, and quantum metrology \cite{Joo.etal2011, Joo.etal2012, J.Liu.etal2016}.
Moreover, a general optimization strategy is given in order to achieve the optimal efficiency for the specific scenarios under consideration.

\section{General framework}

The task of quantum state verification is to determine whether the states ${\sigma_1,\sigma_2,\dots}$ output
from a device, all of which are supposed to be the target state \ket{\psi} (DV or CV), are cases either
${\sigma_i=\ket{\psi}\bra{\psi}}$ for all $i$, or ${\bra{\psi}\sigma_i\ket{\psi}\leq 1-\epsilon}$ for all $i$.
Then, a verification protocol $\Omega$ can be generally constructed by several dichotomic-outcome projective measurement settings $\{\Omega_l,\openone-\Omega_l\}$, such that
\begin{eqnarray}
  \Omega=\sum_l\mu_l\Omega_l\,,
\end{eqnarray}
where $\{\mu_l\}$ is a probability distribution.
With the requirement that the target state should always pass all the measurements, i.e., $\bra{\psi}\Omega_l\ket{\psi}=1$, errors of the verification protocol occur only when the noisy states $\sigma$ pass the protocol with the maximal probability \cite{Pallister.etal2018, Zhu.Hayashi2019c}
\begin{eqnarray}
  \max_{\langle\psi|\sigma|\psi\rangle\leq1-\epsilon} \tr(\Omega\sigma)=1-[1-\lambda_2(\Omega)]\epsilon=1-\nu\epsilon\,,
\end{eqnarray}
where $\lambda_2(\Omega)$ is the second-largest eigenvalue of $\Omega$, and
$\nu:=1-\lambda_2(\Omega)$ denotes the spectral gap from the maximal eigenvalue.
Then, after $N$ measurements, the maximal worst-case probability that the verifier fails to detect the ``bad'' case is given by $(1-\nu\epsilon)^N$.
Hence, to achieve a confidence level $1-\delta$, the number of copies of the states required is
\begin{eqnarray}\label{eq:number}
  N \geq \frac{1}{\ln[(1-\nu\epsilon)^{-1}]}\ln\delta^{-1} \approx \frac{1}{\nu}\epsilon^{-1}\ln\delta^{-1}\,.
\end{eqnarray}

In practice, searching for the optimal verification protocol is a demanding task, if not impossible at all.
For CV quantum states, in particular, the intrinsic nature of their infinite dimensional Hilbert space makes the spectral decomposition for $\Omega$ even more challenging.
To render the task, we may restrict the type of noisy states to be in some specific form.
Here, we focus on the noisy states such that ${\bra{\psi}\sigma_i\ket{\psi}=1-\epsilon}$ for all $i$,
which is allowable since other states with ${\bra{\psi}\sigma_i\ket{\psi}<1-\epsilon}$ would not make
the verification worse \cite{Pallister.etal2018}, thus the original task is retained.
By choosing a set of bases $\mathcal{S}$ constructed from the target state \ket{\psi} and all of the mutually orthonormal states $\{\ket{\psi_{i,j}^\perp}\}$ that form the subspace orthogonal to \ket{\psi},
we can write the noisy states as
\begin{eqnarray}\label{eq:noisy}
  \sigma_i = (1-\epsilon)\ket{\psi}\bra{\psi} +\sum_j\epsilon_{i,j}\ket{\psi_{i,j}^\perp}\bra{\psi_{i,j}^\perp}+{\rm N.D.}\,,
\end{eqnarray}
where $\sum_j\epsilon_{i,j}=\epsilon$ and ${\rm N.D.}$ represents the non-diagonal terms in $\mathcal{S}$,
i.e., $\ket{\phi'}\bra{\phi}$ for all ${\ket{\phi'}\neq\ket{\phi}}\in\mathcal{S}$.
Thus, the optimization for the spectral gap can be reduced to
\begin{eqnarray}\label{eq:v_opt}
     \nu_{\text{\rm opt}} := \max_{\Omega} \min_i \sum_l \mu_l k_{l,i}\,,
\end{eqnarray}
where ${k_{l,i}:=1-\sum_j\frac{\epsilon_{i,j}}{\epsilon}\bra{\psi_{i,j}^\perp}\Omega_l\ket{\psi_{i,j}^\perp} \in [0,1]}$;
see Appendix~A \cite{supp} for the detailed derivation.
Note that the above framework including the optimization method works for both the CV and DV scenarios.

\section{One-mode coherent state superpositions}

The one-mode coherent states are defined as
\begin{eqnarray}
  \ket{\alpha}=\E^{-\frac{|\alpha|^2}{2}}\sum_{n=0}^{\infty}\frac{\alpha^n}{\sqrt{n!}}\ket{n}\,,~~ \mbox{with}~\alpha\in\mathbb{C}\,,
\end{eqnarray}
where \ket{n} denotes the Fock states (or number states), which form a complete set of bases in the Hilbert space.
The verification of $\ket{\alpha}$ can be easily done by invoking the projective measurement $\ket{\alpha}\bra{\alpha}$, which is achievable via a Kennedy receiver \cite{Wittmann.etal2008, DiMario.etal2018},
i.e., a SPD combined with the displacement $D(-\alpha)$ in front,
\begin{eqnarray}\label{eq:Kenndy}
  \ket{\alpha}\bra{\alpha}\equiv D^{\dagger}(-\alpha)\ket{0}\bra{0}D(-\alpha)\,.
\end{eqnarray}

Then, we consider the superposition of one-mode coherent states, usually referred to as the coherent state superpositions (CSSs) \cite{Buzek.etal1992, Takahashi.etal2008}.
Typical examples of CSSs are the even and odd coherent cat states \cite{Dodonov.etal1974},
\begin{eqnarray}
  \ket{+_{\alpha}} = \frac{\ket{\alpha}+\ket{-\alpha}}{\sqrt{C_{+}^{(\alpha)}}}\,,\quad
  \ket{-_{\alpha}} = \frac{\ket{\alpha}-\ket{-\alpha}}{\sqrt{C_{-}^{(\alpha)}}}\,,
\end{eqnarray}
with the normalization $C_{\pm}^{(\alpha)}=2\bigl(1\pm \E^{{-2|\alpha|^2}}\bigr)$.
They are useful in various quantum information processing tasks including quantum teleportation \cite{vanEnk.etal2001}, quantum computation \cite{Ralph.etal2003, Lund.etal2008}, and quantum metrology \cite{Gilchrist.etal2004}.
These two cat states can be discriminated using the parity measurement \cite{Kuang.etal1996, Besse.etal2020}
\begin{eqnarray}
  \pi=\sum_n\ket{2n}\bra{2n}-\sum_n\ket{2n+1}\bra{2n+1}\,,
\end{eqnarray}
such that
\begin{equation}
\pi^{+}\ket{+_{\alpha}}=\ket{+_{\alpha}}\,,\quad
\pi^{-}\ket{-_{\alpha}}=\ket{-_{\alpha}}\,,
\end{equation}
which can be realized by PNRDs.
The superscripts $\pm$ indicate the projectors onto the
eigenspace with the corresponding eigenvalues $\pm 1$.
Note, however, that the parity measurement solely is not sufficient to verify the even nor odd coherent cat states, as for instance, \ket{+_{\alpha}} and \ket{+_{\beta}} with ${\alpha\neq\pm\beta}$ have the same behavior under the parity measurement.

Upon this point, we have revealed a significant difference between CV and DV quantum states regarding their verification, namely the CSSs including the symmetric cat states, cannot be verified straightforwardly.
The reason is due to the intrinsic nature of the infinite dimension of CV systems which results in the fact that realization of a deterministic arbitrary local operation in CV is still an open problem.
Nevertheless, some of the entanglement operations in CV are easier to implement instead.
For instance, in optical systems, consider the beam splitter (BS) with the form
\begin{eqnarray}
  B(\theta)=\exp\bigl[\I\theta\bigl(\hat{a}_1^{\dagger}\hat{a}_2+\hat{a}_1\hat{a}_2^{\dagger}\bigr)\bigr],
\end{eqnarray}
where $\hat{a}_{1(2)}^{\dagger},\hat{a}_{1(2)}$ are the creation and annihilation operators for the first (second) mode, respectively.
The parameter ${\theta\in(0,\frac{\pi}{2})}$ determines the transmissivity of the BS, i.e., ${\cos^2\theta\in(0,1)}$.
Then, by coupling to an ancilla vacuum mode \ket{0}, we can convert, for example, the even coherent cat state to
\begin{align}\label{eq:even_BS}
  B(\theta)\ket{+_{\alpha}}\ket{0}
  =
  \frac1{\sqrt{C_{+}^{(\alpha)}}}&\Bigl(\ket{\alpha\cos\theta}\ket{\I\alpha\sin\theta}\\[-5pt]
  &+\ket{-\alpha\cos\theta}\ket{-\I\alpha\sin\theta}\Bigr).\nonumber
\end{align}
This is the entangled coherent state, which we show how to verify in the next section.

\section{Two-mode entangled coherent states}

The entangled coherent states (ECSs) usually refer to the superposition of two-mode coherent states with the form $\ket{\alpha}\ket{0}+\ket{0}\ket{\alpha}$ (unnormalized) in specific \cite{Sanders1992,Sanders2012}.
Here, we consider a more general form such that
\begin{eqnarray}\label{eq:ECS_define}
  \ket{\psi^{\text{\rm ECS}}_{\alpha,\beta}} = \frac1{\sqrt{C_{\alpha,\beta}}}
  \Bigl(\ket{\alpha}\ket{0}+\ket{0}\ket{\beta}\Bigr),
\end{eqnarray}
where the normalization is $C_{\alpha,\beta}=2\bigl[1+\E^{-(|\alpha|^2+|\beta|^2)/2}\bigr]$.
Note that the transformed even coherent cat state in Eq.~\eqref{eq:even_BS} is a special case of $\ket{\psi^{\text{\rm ECS}}_{\alpha,\beta}}$ under proper local displacement operations.
Furthermore, a class of more general entangled states with the form (unnormalized)
\begin{eqnarray}\label{eq:two-mode}
  \ket{\Tilde{\psi}^{\text{\rm ECS}}}:=\ket{\alpha_1}\ket{\alpha_2}+\ket{\beta_1}\ket{\beta_2}
\end{eqnarray}
is locally equivalent to $\ket{\psi^{\text{\rm ECS}}_{\alpha,\beta}}$ when ${\alpha_{1,2}, \beta_{1,2}\in\mathbb{R}}$ \cite{Munro.etal2000,Jeong.etal2001}.
See Appendix~B \cite{supp} for all the derivations.

Although similar in form as the Bell states in the DV scenario, verification of the two-mode $\ket{\psi^{\text{\rm ECS}}_{\alpha,\beta}}$ cannot use the analytical technique as in Ref.~\cite{Li.etal2019} or the numerical approach as in Ref.~\cite{Yu.etal2019}.
The infinite dimensional Hilbert space of the CV systems severely restricts the type of measurements that can be realizable.
Hence, we start by fully exploring their physical properties.

Measurements of the SPDs can be described by $\{\tau^{+}=\ket{0}\bra{0},\tau^{-}=\openone -\ket{0}\bra{0}\}$, using which one finds the relation $(\tau^{-}\otimes\tau^{-})\ket{\psi^{\text{\rm ECS}}_{\alpha,\beta}}=0$.
Thus, to verify $\ket{\psi^{\text{\rm ECS}}_{\alpha,\beta}}$, the first measurement setting we can employ is
\begin{eqnarray}\label{eq:Omega1}
  \Omega_1^{\text{\rm ECS}} \!=\! \openone \!- \tau^{-}\!\otimes\tau^{-} \!=\! \openone \!-\! \sum_{n,m=1}^\infty \!\bigl(\ket{n}\bra{n}\otimes\ket{m}\bra{m}\bigr)\,,\quad
\end{eqnarray}
which satisfies $\Omega_1^\text{\rm ECS}\ket{\psi^{\text{\rm ECS}}_{\alpha,\beta}}=\ket{\psi^{\text{\rm ECS}}_{\alpha,\beta}}$.
Physically speaking, this setting implies that the target state must have at least one mode that has no photon.
However, the possibility that no photons emerge from both modes together cannot be ruled out since $\bra{\psi^{\text{\rm ECS}}_{\alpha,\beta}} (\tau^{+}\otimes\tau^{+}) \ket{\psi^{\text{\rm ECS}}_{\alpha,\beta}} \neq 0$, which is rather different from the property of the DV entangled states, like the Bell states.

Next, we define the displacement operation ${\mathbb{D}(-\alpha,-\beta):=D(-\alpha)\otimes D(-\beta)}$ acting on the two modes, which transforms the target state $\ket{\psi^{\text{\rm ECS}}_{\alpha,\beta}}$ to another ECS state $\ket{\psi^{\text{\rm ECS}}_{-\alpha,-\beta}}$.
With this, we have the following measurement setting
\begin{eqnarray}
  \Omega_2^{\text{\rm ECS}}=\mathbb{D}^{\dagger}\bigl(-\alpha,-\beta\bigr)\bigl(\openone \!- \tau^{-}\!\otimes\tau^{-}\bigr) \mathbb{D}\bigl(-\alpha,-\beta\bigr)\label{eq:Omega2}\,,
\end{eqnarray}
which is in fact a generalized Kennedy receiver for two-mode states.

The third measurement setting
\begin{eqnarray}\label{eq:Omega3}
  \Omega_3^{\text{\rm ECS}} = \mathbb{D}^{\dagger}\!\left(\!-\frac{\alpha}{2},-\frac{\beta}{2}\right)\!\bigl(\pi\otimes\pi\bigr)^{+} \mathbb{D}\!\left(\!-\frac{\alpha}{2},-\frac{\beta}{2}\right)
\end{eqnarray}
comes from the fact that ECSs have the parity symmetry under proper displacement, i.e.,
\begin{eqnarray}
  \mathbb{D}\!\left(\!-\frac{\alpha}{2},\!-\frac{\beta}{2}\right)\!\ket{\psi^{\text{\rm ECS}}_{\alpha,\beta}} \!=\!
  \frac{C_{\!+\!}
  \ket{+_{\frac{\alpha}{2}}}
  \ket{+_{\frac{\beta}{2}}}
  \!-\!
  C_{\!-\!}
  \ket{-_{\frac{\alpha}{2}}}
  \ket{-_{\frac{\beta}{2}}}}
  {2\sqrt{C_{\alpha,\beta}}}\,,\qquad
\end{eqnarray}
where $C_{\pm}=\sqrt{C_{\pm}^{(\alpha/2)}C_{\pm}^{(\beta/2)}}$.

\begin{figure}[t]
  \includegraphics[width=0.85\columnwidth]{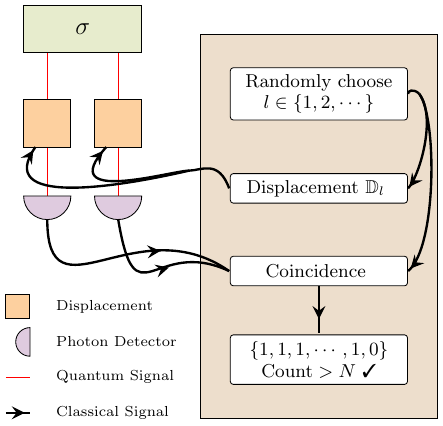}
  \caption{\label{fig:flow}General framework for the verification of two-mode entangled coherent states \ket{\psi^{\text{\rm ECS}}_{\alpha,\beta}} as in Eq.~\eqref{eq:ECS_define}.
  By randomly choosing the displacement operation $\mathbb{D}_l$, the outcomes with 1 or 0 of the coincidences of the two detectors are counted.
  The outcome 1 represents the ``pass'' instance and 0 for ``fail''.
  With the number of successive ``pass'' instances larger than $N$, \ket{\psi^{\text{\rm ECS}}_{\alpha,\beta}} is verified.
  }
\end{figure}

Briefly speaking, the first two measurement settings $\Omega_1^{\text{\rm ECS}}$ and $\Omega_2^{\text{\rm ECS}}$ check the existence of the vacuum mode and the coherent mode respectively.
Together they ensure the superposition of the two states $\ket{\alpha}\ket{0}$ and $\ket{0}\ket{\beta}$.
With the third setting $\Omega_3^{\text{\rm ECS}}$ confirming the balanced superposition, the ECSs can be verified.
Hence, these three measurement settings are sufficient to verify $\ket{\psi^{\text{\rm ECS}}_{\alpha,\beta}}$; see the following theorem with its proof postponed in Appendix~C \cite{supp}.
\begin{theorem}\label{thm:ECS}
  The two-mode entangled coherent states \ket{\psi^{\text{\rm ECS}}_{\alpha,\beta}} can be verified efficiently by the protocol
  \begin{eqnarray}
    \Omega^{\text{\rm ECS}}=\sum_{l=1}^3 \mu_l\Omega_l^{\text{\rm ECS}}\,,
  \end{eqnarray}
  where the probability distribution $\{\mu_l\}$ is arbitrary.
  An optimal efficiency can be obtained by optimizing $\{\mu_l\}$ under specific scenarios as constrained by Eq.~\eqref{eq:v_opt}.
\end{theorem}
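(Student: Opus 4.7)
\emph{Proof plan.---}The plan is to reduce the efficiency claim to a statement about common $+1$ eigenspaces of the three projectors. Because $\Omega^{\text{\rm ECS}}=\sum_l\mu_l\Omega_l^{\text{\rm ECS}}$ is a convex combination of projectors with positive weights, any unit vector $\ket{\phi}$ satisfying $\bra{\phi}\Omega^{\text{\rm ECS}}\ket{\phi}=1$ must simultaneously lie in every $+1$ eigenspace $V_l:=\ker(\openone-\Omega_l^{\text{\rm ECS}})$. Consequently $\lambda_2(\Omega^{\text{\rm ECS}})<1$ for arbitrary $\{\mu_l\}$ precisely when $V_1\cap V_2\cap V_3=\mathbb{C}\ket{\psi^{\text{\rm ECS}}_{\alpha,\beta}}$, and this strictly positive spectral gap $\nu$ is exactly what Eq.~\eqref{eq:number} needs in order to deliver the advertised scaling $N\propto\epsilon^{-1}\ln\delta^{-1}$.

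First I will confirm $\Omega_l^{\text{\rm ECS}}\ket{\psi^{\text{\rm ECS}}_{\alpha,\beta}}=\ket{\psi^{\text{\rm ECS}}_{\alpha,\beta}}$ for each $l$. The case $l=1$ is already noted in the main text; $l=2$ follows because $\mathbb{D}(-\alpha,-\beta)\ket{\psi^{\text{\rm ECS}}_{\alpha,\beta}}\propto\ket{-\alpha}\ket{0}+\ket{0}\ket{-\beta}$ is again of the ``at least one vacuum mode'' form; and $l=3$ is immediate from the displaced-target identity displayed just above the theorem, whose two summands both carry globally even two-mode parity.

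Next I will characterize the three eigenspaces explicitly: $V_1=(\ket{0}\otimes\mathcal{H})+(\mathcal{H}\otimes\ket{0})$, $V_2=\mathbb{D}^\dagger(-\alpha,-\beta)V_1$, and $V_3=\mathbb{D}^\dagger(-\alpha/2,-\beta/2)(\mathcal{H}_{\mathrm{ee}}\oplus\mathcal{H}_{\mathrm{oo}})$, where $\mathcal{H}_{\mathrm{ee}}$ and $\mathcal{H}_{\mathrm{oo}}$ denote the even--even and odd--odd joint Fock subspaces. Parametrizing a generic element of $V_1$ as $\ket{0}\ket{a}+\ket{b}\ket{0}$ with $\langle 0|b\rangle=0$ and imposing that $\mathbb{D}(-\alpha,-\beta)(\ket{0}\ket{a}+\ket{b}\ket{0})$ again lie in $V_1$ yields a pair of analytic constraints in the two displacement amplitudes; tested against coherent-state-indexed probes they force $\ket{a}$ and $\ket{b}$ onto a small finite-parameter family, reducing $V_1\cap V_2$ to a low-dimensional subspace explicitly containing $\ket{\psi^{\text{\rm ECS}}_{\alpha,\beta}}$. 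The parity constraint from $V_3$ then kills the remaining freedom modulo the target, leaving $V_1\cap V_2\cap V_3=\mathbb{C}\ket{\psi^{\text{\rm ECS}}_{\alpha,\beta}}$.

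The main obstacle will be this last intersection analysis in the infinite-dimensional Fock space, where naive matrix-element arguments are insufficient; I would control it by exploiting the analyticity (in $\alpha,\beta$) and overcompleteness of coherent states to rule out exotic solutions with nontrivial large-photon-number tails. Once $\nu>0$ is in hand, the optimality claim follows directly from the framework of Eq.~\eqref{eq:v_opt}: one substitutes the restricted noisy states of Eq.~\eqref{eq:noisy}, computes the matrix elements $\bra{\psi_{i,j}^\perp}\Omega_l^{\text{\rm ECS}}\ket{\psi_{i,j}^\perp}$ entering $k_{l,i}$ in closed form using standard coherent-state and displaced-Fock overlaps, and solves the resulting finite convex program in $\{\mu_l\}$ either analytically in limits such as large $|\alpha|,|\beta|$ or numerically for each target $(\alpha,\beta)$.
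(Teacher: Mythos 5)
Your skeleton coincides with the paper's: check that $\ket{\psi^{\mathrm{ECS}}_{\alpha,\beta}}$ passes all three settings, show that the joint $+1$ eigenspace of $\Omega_1^{\mathrm{ECS}},\Omega_2^{\mathrm{ECS}},\Omega_3^{\mathrm{ECS}}$ is $\mathbb{C}\ket{\psi^{\mathrm{ECS}}_{\alpha,\beta}}$, and then obtain the efficiency by optimizing $\{\mu_l\}$ via Eq.~\eqref{eq:v_opt}. Your identifications of $V_1$, $V_2$, $V_3$ and the eigenstate check for each $\Omega_l^{\mathrm{ECS}}$ are correct. However, the decisive step---pinning down $V_1\cap V_2$---is only asserted (``analytic constraints \dots{} tested against coherent-state-indexed probes force $\ket{a},\ket{b}$ onto a small finite-parameter family''), with no derivation, and you yourself flag it as the main unresolved obstacle; this is exactly where the content of the theorem lives, so as it stands there is a genuine gap. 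Moreover, your diagnosis that ``naive matrix-element arguments are insufficient'' is the opposite of what happens: the paper closes precisely this step with a Fock-basis matrix-element computation. Writing the $\Omega_1^{\mathrm{ECS}}$-invariant state as $\ket{\phi}=\sum_{k\ge0}\bigl(A_{k0}\ket{k}\ket{0}+A_{0k}\ket{0}\ket{k}\bigr)-A_{00}\ket{0}\ket{0}$ and using $\Omega_2^{\mathrm{ECS}}=\openone\otimes\ket{\beta}\bra{\beta}+\ket{\alpha}\bra{\alpha}\otimes\openone-\ket{\alpha}\bra{\alpha}\otimes\ket{\beta}\bra{\beta}$, the eigenvector equation combined with $A_{pq}=0$ for $p,q\ge1$ yields $A_{p0}\frac{\braket{0}{\alpha}}{\braket{p}{\alpha}}+A_{0q}\frac{\braket{0}{\beta}}{\braket{q}{\beta}}=A_{00}$ for all $p,q\ge1$, forcing $A_{p0}\propto\braket{p}{\alpha}$ and $A_{0q}\propto\braket{q}{\beta}$, i.e.\ $V_1\cap V_2=\mathrm{span}\{\ket{\alpha}\ket{0},\ket{0}\ket{\beta}\}$; the parity setting then fixes equal coefficients. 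No analyticity or overcompleteness machinery is required, and without either that computation or a worked-out version of your probe argument, the proof is incomplete.

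A second caution concerns your opening reduction. In an infinite-dimensional Hilbert space, triviality of $V_1\cap V_2\cap V_3$ guarantees that $1$ is a nondegenerate eigenvalue of $\Omega^{\mathrm{ECS}}$, but it does not by itself give $\lambda_2(\Omega^{\mathrm{ECS}})<1$: the spectrum can accumulate at $1$ without being attained, so your ``precisely when'' equivalence and the unconditional scaling via Eq.~\eqref{eq:number} do not follow automatically. The paper does not claim a uniform spectral gap over all states; it restricts to specified noise families as in Eq.~\eqref{eq:noisy} and defines the achievable efficiency through $\nu_{\mathrm{opt}}$ of Eq.~\eqref{eq:v_opt} with the scenario-dependent coefficients $k_{l,i}$ (cf.\ the PNRD$(5)$ example). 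Your final optimization paragraph is consistent with that framework, but the efficiency statement should be made relative to those scenarios rather than as a claim about the full spectrum of $\Omega^{\mathrm{ECS}}$.
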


In Fig.~\ref{fig:flow}, we show the general framework of the protocol.
The target states to be verified are input into two separate channels, followed by the displacement operation $\mathbb{D}_l$, and finally measured by photon detectors.
The displacement $\mathbb{D}_l$ has three different cases $\bigl\{\mathbb{D}_1\!=\!\openone\otimes\openone,\mathbb{D}_2\!=\!D(\!-\alpha)\otimes D(\!-\beta),\mathbb{D}_3\!=\!D(\!-\frac{\alpha}{2})\otimes D(\!-\frac{\beta}{2})\bigr\}$, which are applied on each mode locally.
PNRDs are required for the third setting $\Omega_3^{\text{\rm ECS}}$, while the first two settings only need SPDs.
The three measurement settings are randomly chosen in accordance with the probability distribution $\{\mu_l\}$.
Then, the numbers of coincidences of the two detectors are counted with outcome 1 representing a ``pass'' instance and 0 for ``fail''.
If the number of successive ``pass'' instances is larger than $N$, we confirm that the state is \ket{\psi^{\text{\rm ECS}}_{\alpha,\beta}} with certain confidence.
As mentioned by Theorem~\ref{thm:ECS}, the sample complexity $N$ is determined by the optimal verification efficiency $1/\nu_{\text{\rm opt}}$ which can be obtained by optimizing the probability distribution $\{\mu_l\}$ under specific scenarios as constrained by Eq.~\eqref{eq:v_opt}.
Note, furthermore, if changing the balanced superposition of the two terms in ECS from $+$ to $-$, to which the transformed odd coherent cat state is locally equivalent, all the experimental settings remain the same except for $\Omega_3^{\text{\rm ECS}}$ which differs by a sign; see Appendix~D \cite{supp} for more detailed discussions.

Before proceeding to give a concrete example, we have a quick remark regarding the PNRDs.
In practice, finite resolution of the PNRDs for the parity measurements always leads to a systematic error.
We can circumvent this problem by directly dismissing the results when PNRDs are saturated.
For instance, consider the PNRD$(3)$ with four outcomes $\{0,1,2,3+\}$, such that one can keep the outcomes $\{0,1,2\}$ only by discarding the rest.
Some efficiency will be lost during this process, and the probability to get the useful results with PNRD$(r)$ for a one-mode CV state \ket{\psi} is given by
\begin{eqnarray}
  p(r) = \sum_{i=0}^{r-1} |\bra{i}\psi\rangle|^2\,.
\end{eqnarray}
However, we emphasize that when $\alpha$ is small, the loss is negligible.
For example, considering the one-mode even coherent cat state \ket{+_{\alpha}} with $\alpha\!=\!1$, the efficiency is about 97.2\% by using PNRD$(3)$.
As for PNRD$(20)$ which is the highest resolution currently achievable in the laboratory \cite{Mattioli.etal2016}, the loss is around $10^{-16}$ which can be safely ignored.

Following the above discussion, here we demonstrate the high efficiency of our protocol by considering the verification of \ket{\psi^{\text{\rm ECS}}_{\alpha,\alpha}} with PNRD$(5)$.
The most significant decoherent noise for verifying ECSs comes from the photon loss from channels and the construction error from two-mode displacements.
Hence, as a demonstration, we assume that the input noisy states ${\sigma_i=\ket{\phi_i}\bra{\phi_i}}$ take the following forms, i.e.,
\begin{eqnarray}\label{eq:noisyStates}
  \ket{\phi_e(\eta)}&:=&\ket{\alpha-\kappa+\Delta}\ket{\kappa}
                        +\ket{\kappa+\Delta}\ket{\alpha-\kappa}\,,\\
  \ket{\phi_s(\eta)}&:=&\ket{\alpha-\kappa+\Delta}\ket{\kappa+\Delta}
                        +\ket{\kappa+\Delta}\ket{\alpha-\kappa+\Delta}\,.\nonumber
\end{eqnarray}
where the perturbation ${\kappa=(1-\sqrt{\eta})\alpha/2}$ is caused by photon loss with $\eta$ denoting the fraction of photons that survives the noisy channel \cite{vanEnk.etal2001}, and the small value $\Delta$ represents the displacement error.
These two types of noisy states thus correspond to the extreme one-mode error and the symmetric two-mode error respectively \footnote{Be reminded that the noisy states in Eq.~\eqref{eq:noisyStates} are different in form as the ones in Eq.~\eqref{eq:noisy}, but direct calculations can establish the relation between its passing probability and the infidelity $\epsilon$; see Appendix~E \cite{supp}}.
Then, the optimization in Eq.~\eqref{eq:v_opt} shows that the resource requirement for verifying \ket{\psi^{\text{\rm ECS}}_{\alpha,\alpha}} is
${N\approx 2.60(4)\epsilon^{-1}\ln\delta^{-1}}$ with the optimized probability
${\{\mu_l\}=\{0.49(7),0.40(2),0.10(1)\}}$.
As a comparison, for the tomographic detection of an ECS with a $99.99\%$ fidelity using PNRDs, the experiment in Ref.~\cite{Israel.etal2019} used more than $10^{10}$ measurements.
Our protocol, instead, is much more efficient which requires ${\sim\!10^5}$ measurements to reach the confidence level of $99\%$.
See Appendix~E \cite{supp} for more detailed discussions.

\section{multimode entangled coherent states}

Here we consider a class of multimode entangled coherent states with the form
\begin{eqnarray}\label{eq:GHZ}
  \ket{\psi^{\text{\rm GHZ-}m}}=\frac1{\sqrt{C}}\Biggl(\bigotimes_{i=1}^{m}\ket{\alpha_i}+\ket{0}^{\otimes m}\Biggr),
\end{eqnarray}
where the normalization is $C=2\bigl[1+\E^{-\sum_{i}|\alpha_i|^2/2}\bigr]$.
Depending on the number of modes, we refer to them as the $m$-mode GHZ-like coherent states,
which are generalizations of the states in Ref.~\cite{Jeong.etal2006}.
Note that, for $m=2$, states $\ket{\psi^{\text{\rm ECS}}_{\alpha,\beta}}$ and $\ket{\psi^{\text{\rm GHZ-}2}}$ are locally equivalent.
In fact, a class of generalized GHZ-like states
\begin{eqnarray}\label{eq:GGHZ}
  \ket{\Tilde{\psi}^{\text{\rm GHZ-}m}}:=\bigotimes_{i=1}^{m}\ket{\alpha_i}+\bigotimes_{i=1}^{m}\ket{\beta_i}
\end{eqnarray}
are locally equivalent to \ket{\psi^{\text{\rm GHZ-}m}} when $\alpha_i,\beta_i\in\mathbb{R}$ for all $i$; see Appendix B \cite{supp} for the derivations.

To verify $\ket{\psi^{\text{\rm GHZ-}m}}$, consider the following ${2(m\!-\!1)}$ measurement settings
\begin{eqnarray}\label{eq:GHZm_1}
  \Omega_{2l-1}^{\text{\rm GHZ-}m}&=&
  \mathcal{P}_{l}\biggl\{\!
  \Bigl[B_{2l\!-\!1}^\dagger\bigl(\openone \!- \tau^{-}\!\otimes\tau^{-}\bigr) B_{2l\!-\!1}\Bigr]\!
  \otimes\openone^{\otimes (m-2)}\!
  \biggr\},\nonumber\\
  \Omega_{2l}^{\text{\rm GHZ-}m}&=&
  \mathcal{P}_{l}\biggl\{\!
  \Bigl[B_{2l}^\dagger\bigl(\openone \!- \tau^{-}\!\otimes\tau^{-}\bigr) B_{2l}\Bigr]\!
  \otimes\openone^{\otimes (m-2)}\!
  \biggr\},
\end{eqnarray}
with $l\!=\!1,2,\cdots,m\!-\!1$, where ${B_{2l-1}=D(-\alpha_l)\otimes\openone}$ and $B_{2l}= \openone\otimes D(-\alpha_{l+1})$ are local operations.
The symbol $\mathcal{P}_{l}$ indicates the permutation that only the $l$ and $l+1$ modes are operated on for each setting.
Moreover, we need one more measurement setting
\begin{eqnarray}\label{eq:GHZm_2}
  \Omega_{2m-1}^{\text{\rm GHZ-}m}=\Biggl[\bigotimes_{i=1}^{m}D^{\dagger}\!\left(-\frac{\alpha_i}{2}\right)\Biggr]
  \bigl(\pi^{\otimes m}\bigr)^{+}
  \Biggl[\bigotimes_{i=1}^{m}D\!\left(-\frac{\alpha_i}{2}\right)\Biggr],\qquad
\end{eqnarray}
where $\bigl(\pi^{\otimes m}\bigr)^{+}$ is the projector onto the eigenspace with eigenvalue $+1$ of the parity measurement $\pi^{\otimes m}$.
With these, we have the following theorem for verifying $\ket{\psi^{\text{\rm GHZ-}m}}$; see Appendix~F \cite{supp} for the proof.
\begin{theorem}\label{thm:GHZ}
  The $m$-mode GHZ-like coherent states \ket{\psi^{\text{\rm GHZ-m}}} can be verified efficiently by the protocol
  \begin{eqnarray}
    \Omega^{\text{\rm GHZ-}m}=\sum_{l=1}^{2m-1}\mu_l\Omega_l^{\text{\rm GHZ-}m}\,,
  \end{eqnarray}
  where the probability distribution $\{\mu_l\}$ is arbitrary.
  An optimal efficiency can be obtained by optimizing $\{\mu_l\}$ under specific scenarios as constrained by Eq.~\eqref{eq:v_opt}.
\end{theorem}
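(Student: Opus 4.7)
---The plan is to establish Theorem~\ref{thm:GHZ} in two stages, adapting the two-mode argument of Theorem~\ref{thm:ECS} to the multi-mode topology. First, I would check that each $\Omega_l^{\text{\rm GHZ-}m}$ fixes the target, so that $\bra{\psi^{\text{\rm GHZ-}m}}\Omega_l^{\text{\rm GHZ-}m}\ket{\psi^{\text{\rm GHZ-}m}}=1$ for every $l=1,\dots,2m-1$. Second, I would show that the common $+1$ eigenspace of all $2m-1$ projectors is exactly the ray spanned by $\ket{\psi^{\text{\rm GHZ-}m}}$, which forces $\lambda_2(\Omega^{\text{\rm GHZ-}m})<1$ and hence a strictly positive spectral gap $\nu$ for every full-support $\{\mu_l\}$. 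The quantitative efficiency then follows by plugging into Eq.~\eqref{eq:v_opt} and optimizing over $\{\mu_l\}$ under the noise model of interest.

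For the first stage, the key identities are $D(-\alpha)\ket{\alpha}=\ket{0}$ and $\pi\ket{\alpha}=\ket{-\alpha}$. Conjugation by $B_{2l-1}=D(-\alpha_l)\otimes\openone$ sends the two branches $\bigotimes_i\ket{\alpha_i}$ and $\ket{0}^{\otimes m}$ of the target to states whose restriction to modes $(l,l+1)$ is $\ket{0,\alpha_{l+1}}$ and $\ket{-\alpha_l,0}$, respectively; both lie in the image of $\openone-\tau^-\otimes\tau^-$, so $\Omega_{2l-1}^{\text{\rm GHZ-}m}$ preserves each branch and therefore the target. The argument for $\Omega_{2l}^{\text{\rm GHZ-}m}$ is symmetric with modes $l$ and $l+1$ swapped. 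For $\Omega_{2m-1}^{\text{\rm GHZ-}m}$, the joint half-displacement sends the target to the balanced superposition $\propto\bigotimes_i\ket{\alpha_i/2}+\bigotimes_i\ket{-\alpha_i/2}$, which lies in the $+1$ eigenspace of $\pi^{\otimes m}$ because parity flips every coherent-state amplitude.

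For the second stage, I would characterize the common $+1$ eigenspace of the adjacent pair $\{\Omega_{2l-1}^{\text{\rm GHZ-}m},\Omega_{2l}^{\text{\rm GHZ-}m}\}$ on modes $(l,l+1)$. Each projector constrains the reduced two-mode state to the subspace ``at least one mode is in a prescribed coherent state'', and a direct Fock-basis expansion shows that the intersection of the two constraints collapses this reduced state to the two-dimensional span of $\ket{\alpha_l,\alpha_{l+1}}$ and $\ket{0,0}$. Gluing these local constraints across $l=1,\dots,m-1$ forces the global state into $\mathcal{V}=\mathrm{span}\{\bigotimes_i\ket{\alpha_i},\ket{0}^{\otimes m}\}$, and the parity setting $\Omega_{2m-1}^{\text{\rm GHZ-}m}$ then selects the symmetric combination inside $\mathcal{V}$, which is precisely $\ket{\psi^{\text{\rm GHZ-}m}}$.

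The main obstacle will be the gluing step: the pairwise constraints on adjacent mode pairs must be combined into a global statement on all $m$ modes, and because the coherent states $\ket{\alpha_i}$ and $\ket{0}$ are non-orthogonal the two branch labels cannot be treated as independent classical bits attached to each mode. I expect to address this by expanding a candidate common eigenstate in a global tensor-product Fock basis and using the local vacuum conditions coefficient by coefficient to eliminate everything outside the fully-$\ket{\alpha}$ and fully-$\ket{0}$ branches. Once the reduction to $\mathcal{V}$ is complete, the remaining steps---the parity projection, positivity of $\nu$, and the optimization of $\{\mu_l\}$---reduce to an essentially two-dimensional eigenvalue problem analogous to the one encountered in the ECS case.
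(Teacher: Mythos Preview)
Your proposal is correct and follows essentially the same route as the paper: the paper's Appendix~F first invokes the two-mode ECS argument to show that each adjacent pair $\{\Omega_{2l-1}^{\text{\rm GHZ-}m},\Omega_{2l}^{\text{\rm GHZ-}m}\}$ confines modes $(l,l+1)$ to $\mathrm{span}\{\ket{\alpha_l,\alpha_{l+1}},\ket{0,0}\}$ (Lemma~1), then glues overlapping pairs by a Fock-basis coefficient match to reach three modes (Lemma~2) and by iteration all $m$ modes, and finally applies the displaced parity setting to pick out the balanced superposition. The only stylistic difference is that the paper carries out the gluing inductively via an explicit three-mode lemma, whereas you propose a single global Fock-basis expansion; either execution works and the underlying mechanism is the same.
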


Two remarks are in order.
First, except for the last setting which requires PNRDs on each mode, the other $2(m\!-\!1)$ measurement settings have exactly the same framework as $\Omega^{\text{\rm ECS}}$ shown in Fig~\ref{fig:flow}, thus only two modes are operated on each time.
Second, if changing the superposition of the two terms in \ket{\psi^{\text{\rm GHZ-}m}} from $+$ to $-$, they can still be verified efficiently with similar experimental settings; see Appendix~G \cite{supp} for the details.

\section{Conclusion}

We have developed a systematic framework for verifying continuous-variable quantum states with local measurements only.
Same as in the discrete-variable scenario, the high verification efficiency with the resource overhead of ${N\propto O(\epsilon^{-1}\ln\delta^{-1})}$ within infidelity $\epsilon$ and confidence level ${1-\delta}$ is retained.
This efficiency is quadratically better than quantum tomography as well as other nontomographic methods which usually require the resource in the order of ${N\propto O(\epsilon^{-2}\ln\delta^{-1})}$.
The high verification efficiency of our protocol is confirmed with the demonstration for verifying entangled two-mode and multimode coherent states.

As an outlook, it is interesting to extend the current study to verifying other important CV quantum states, and even CV quantum processes.
Also, the adversarial scenario \cite{Zhu.Hayashi2019c,Zhu.Hayashi2019d,Hayashi.Morimae2015,Takeuchi.etal2019} is worth considering, of which correlations among the input states can be included.
In such a case the scaling of the verification efficiency is expected to be kept, but deteriorate by a constant factor \cite{Zhu.Hayashi2019c,Zhu.Hayashi2019d}.
Moreover, with the techniques including adaptive measurements \cite{Yu.etal2019,Liu.etal2019b} and nondemolition photon counting \cite{Liu.etal2020b,Munro.etal2005,Guerlin.etal2007}, our protocol has the potential for further improvement.

\bigskip

\acknowledgments
We are grateful to Rui Han and Xiao-Dong Yu for helpful discussions.
This work was supported by the National Key R\&D Program of China under
Grant No.~2017YFA0303800 and the National Natural Science Foundation of
China through Grants No.~11574031, No.~61421001, and No.~11805010.


%

\onecolumngrid

\appendix

\section{Appendix A: Detailed derivation of the general framework}\label{app:framework}
We focus on the noisy states such that ${\bra{\psi}\sigma_i\ket{\psi}=1-\epsilon}$, which is allowable since other states with ${\bra{\psi}\sigma_i\ket{\psi}<1-\epsilon}$ would not make the verification worse \cite{Pallister.etal2018}.
By choosing a set of bases $\mathcal{S}$ constructed from the target state \ket{\psi} and all of the mutually orthonormal states $\{\ket{\psi_{i,j}^\perp}\}$ that form the subspace orthogonal to \ket{\psi}, we can write the noisy states as
\begin{eqnarray}\label{eq:def_noise}
  \sigma_i = \sum_{{ \ket{\phi},\ket{\phi'}}\in\mathcal{S}} \bra{\phi'}\sigma_i\ket{\phi} \ket{\phi'}\bra{\phi} = (1-\epsilon)\ket{\psi}\bra{\psi}+\sum_j \epsilon_{i,j}\ket{\psi_{i,j}^\perp}\bra{\psi_{i,j}^\perp}+\text{\rm N.D.}\,,
\end{eqnarray}
where ${\tr(\sigma_i)=1}$ indicates ${\sum_j\epsilon_{i,j}=\epsilon}$, and $\text{\rm N.D.}$ corresponds to the non-diagonal terms in $\mathcal{S}$.
With the constraint $\Omega_l\ket{\psi}=\ket{\psi}$, which implies $\bra{\psi^{\perp}_{i,j}}\Omega_l\ket{\psi}=0$, we have
\begin{eqnarray}
  \tr(\Omega_l\sigma_i)=1-\epsilon+\sum_j\epsilon_{i,j}\bra{\psi_{i,j}^\perp}\Omega_l\ket{\psi_{i,j}^\perp}\,.
\end{eqnarray}
Notice that $\bra{\psi_{i,j}^\perp}\Omega_l\ket{\psi_{i,j}^\perp}$ is the probability for the state \ket{\psi_{i,j}^\perp} to pass the measurement setting $\Omega_l$ which is in the range $[0,1]$, and the normalized parameters $\frac{\epsilon_{i,j}}{\epsilon}$ are determined by the structure of the noisy states $\sigma_i$.
Hence, the summation $\sum_j\frac{\epsilon_{i,j}}{\epsilon}\bra{\psi_{i,j}^\perp}\Omega_l\ket{\psi_{i,j}^\perp}$ is fixed in accordance with the measurement setting $\Omega_l$ and the specific type of noisy states $\sigma_i$ that we consider.
Next, define
\begin{eqnarray}
  k_{l,i}=1-\sum_j\frac{\epsilon_{i,j}}{\epsilon}\bra{\psi_{i,j}^\perp}\Omega_l\ket{\psi_{i,j}^\perp}\,,
\end{eqnarray}
which is thus independent of the infidelity $\epsilon$.
Following this definition, we have the linear relationship between the passing probability of the noisy state $\sigma_i$ and the infidelity $\epsilon$ such that
\begin{eqnarray}\label{eq:linear}
  \tr(\Omega_l\sigma_i)=1-k_{l,i}\epsilon\,,\quad k_{l,i}\in[0,1]\,.
  \end{eqnarray}
Then we have
\begin{eqnarray}
  \max_{\bra{\psi}\sigma_i\ket{\psi} = 1-\epsilon} \tr(\Omega\sigma) = 1-\min_i \sum_l \mu_l k_{l,i}\epsilon
  \quad \overset{i \to \infty}{=} \quad
  \max_{\bra{\psi}\sigma\ket{\psi}\leq 1-\epsilon}\tr(\Omega\sigma) = 1-\nu\epsilon
  \,.
\end{eqnarray}
Therefore, optimization of the spectral gap $\nu$ can be written as
\begin{eqnarray}
  \nu_{\text{\rm opt}}=\max_{\Omega}\min_i \sum_l \mu_l k_{l,i}\,,
  \end{eqnarray}
with the number of measurements required is
\begin{eqnarray}
  N \approx \frac{1}{\nu_{\text{\rm opt}}}\epsilon^{-1}\ln\delta^{-1}\,.
\end{eqnarray}
Note that if all the possible noisy states with ${\bra{\psi}\sigma_i\ket{\psi} = 1-\epsilon}$ are considered, an optimal upper bound for $N$ can be obtained.
However, if only considering some typical noisy states, the optimization can be simplified, which is often enough for practical implementations and the number of measurements is reduced accordingly.

\section{Appendix B: Local equivalence between several classes of CV states}
We first discuss the nature of the displacement operators in more details.
The displacement operator is also known as the shift operator
\begin{eqnarray}
  D(\alpha)=\E^{\alpha\hat{a}^\dagger+\alpha^{*}\hat{a}}\,,
\end{eqnarray}
where ${\alpha\in\mathbb{C}}$ represents the amount of displacement, and $\hat{a}^{\dagger},\hat{a}$ are the creation and annihilation operators.
The coherent states can be obtained by displacing the vacuum, i.e.,
\begin{eqnarray}
  D(\alpha)\ket{0}=\ket{\alpha}\,.
\end{eqnarray}
The product of two displacement operators is
\begin{eqnarray}
  D(\alpha)D(\beta)=\E^{(\alpha\beta^{*}-\alpha^{*}\beta)/2}D(\alpha+\beta)\,.
\end{eqnarray}
Thus, all of the one-mode coherent states are locally equivalent under displacement operations, as
\begin{eqnarray}
  \E^{(\alpha\beta^{*}-\alpha^{*}\beta)/2}=\E^{\I \Im(\alpha\beta^{*})}
\end{eqnarray}
is the global phase which is physically irrelevant,
and $\Im(\cdot)$ denotes the imaginary part of the number.

{\bf Case 1:}
Consider the balanced two-mode coherent state as defined in Eq.~\eqref{eq:two-mode},
\begin{eqnarray}
  \ket{\Tilde{\psi}^{\text{\rm ECS}}}=\frac1{\sqrt{\tilde{C}^{(2)}}}\Bigl(\ket{\alpha_1}\ket{\alpha_2}+\ket{\beta_1}\ket{\beta_2}\Bigr),
\end{eqnarray}
where the normalization is given by $\tilde{C}^{(2)}=2+\exp\!\left[-\frac{1}{2}\bigl(|\alpha_1|^2+|\beta_1|^2+|\alpha_2|^2+|\beta_2|^2\bigr)\right]\!\left(\E^{\alpha_1^{*}\beta_1+\alpha_2^{*}\beta_2}+\E^{\alpha_1\beta_1^{*}+\alpha_2\beta_2^{*}}\right)$.
With displacements operated on both modes, we have
\begin{eqnarray}
  D(-\beta_1)\otimes D(-\alpha_2)\ket{\Tilde{\psi}^{\text{\rm ECS}}}
  =
  \frac1{\sqrt{\tilde{C}^{(2)}}}\Bigl[\E^{(\alpha_1\beta_1^{*}-\alpha_1^{*}\beta_1)/2}
  \ket{\alpha_1-\beta_1}\ket{0}
  +\E^{(\alpha_2^{*}\beta_2-\alpha_2\beta_2^{*})/2}
  \ket{0}\ket{\beta_2-\alpha_2}\Bigr].
\end{eqnarray}
It is easy to find that $D(-\beta_1)\otimes D(-\alpha_2)\ket{\Tilde{\psi}^{\text{\rm ECS}}}$ and
\ket{\psi^{\text{\rm ECS}}_{\alpha_1-\beta_1,\beta_2-\alpha_2}} are equivalent whenever the constraint
\begin{eqnarray}\label{eq:cons1}
  \Im(\alpha_1\beta_1^{*})+\Im(\alpha_2\beta_2^{*})=2n\pi\,, \quad(n\in\mathbb{Z})
\end{eqnarray}
is satisfied,
of which real values of $\alpha_{1,2}$ and $\beta_{1,2}$ can always meet.

{\bf Case 2:}
We consider a more general form of the equivalence between the transformed even coherent cat state as in Eq.~\eqref{eq:even_BS} and the ECS.
The one-mode balanced CSSs can be written as \cite{Buzek.etal1992}
\begin{eqnarray}
  \ket{\psi^{\text{\rm BCSS}}}=\frac1{\sqrt{C^{\text{\rm BCSS}}}}\Bigl(\ket{\alpha}+\ket{\beta}\Bigr),
\end{eqnarray}
where the normalization is $C^{\text{\rm BCSS}}=2+\E^{-(|\alpha|^2+|\beta|^2)/2}\bigl(\E^{\alpha^{*}\beta}+\E^{\alpha\beta^{*}}\bigr)$.
By coupling to a vacuum mode, we utilize a beam splitter $B(\theta)$ and get
\begin{eqnarray}
  \ket{\Psi_{\alpha,\beta}}=\frac1{\sqrt{C^{\text{\rm BCSS}}}}\Bigl(\ket{\alpha\cos\theta}\ket{\I\alpha\sin\theta}+\ket{\beta\cos\theta}\ket{\I\beta\sin\theta}\Bigr).
\end{eqnarray}
Next, with the displacement $D(-\beta\cos\theta)\otimes D(-\I\alpha\sin\theta)$, it is converted to
\begin{eqnarray}
  \ket{\Psi_{\alpha,\beta}^{D}}
  =\frac1{\sqrt{C^{\text{\rm BCSS}}}}\Bigl[\E^{\frac{1}{2}(\alpha\beta^{*}-\alpha^{*}\beta)\cos^2\theta}
  \ket{(\alpha-\beta)\cos\theta}\ket{0}
  +\E^{\frac{1}{2}(\alpha^{*}\beta-\alpha\beta^{*})\sin^2\theta}
  \ket{0}\ket{\I(\beta-\alpha)\sin\theta}\Bigr].
\end{eqnarray}
Then, in order to make \ket{\Psi_{\alpha,\beta}^{D}} equivalent to the ECS, one needs the constraint
\begin{eqnarray}
  \Im(\alpha\beta^{*})=2n\pi\,, \quad(n\in\mathbb{Z})\,,
\end{eqnarray}
and the even coherent cat state is such a case with $\alpha=\beta$.

{\bf Case 3:}
The generalized multimode GHZ-like states are
\begin{eqnarray}
  \ket{\Tilde{\psi}^{\text{\rm GHZ-}m}}
  =\frac1{\sqrt{\tilde{C}^{(m)}}}\biggl(\bigotimes_{i=1}^{m}\ket{\alpha_i}+\bigotimes_{i=1}^{m}\ket{\beta_i}\biggr),
\end{eqnarray}
with the normalization given by $\tilde{C}^{(m)}=2+\exp\!\left[-\sum\bigl(|\alpha_i|^2+|\beta_i|^2\bigr)/2\right]\!\left[\exp\bigl(\sum\alpha_i^{*}\beta_i\bigr)+\exp\bigl(\sum\alpha_i\beta_i^{*}\bigr)\right]$.
With displacements, they are transformed to
\begin{eqnarray}
  \bigotimes_{i=1}^{m}D(-\beta_i)\ket{\Tilde{\psi}^{\text{\rm GHZ-}m}}
  =\frac1{\sqrt{\tilde{C}^{(m)}}}\biggl(\exp\Bigl[\sum_{i=1}^{m}(\alpha_i\beta_i^{*}-\alpha_i^{*}\beta_i)/2\Bigr]\bigotimes_{i=1}^{m}\ket{\alpha_i-\beta_i}+\ket{0}^{\otimes m}\biggr).
\end{eqnarray}
Then, if the constraint
\begin{eqnarray}\label{eq:cons-GHZ}
  \sum_i\Im(\alpha_i\beta_i^{*})=2n\pi\,, \quad (n\in\mathbb{Z})
\end{eqnarray}
is satisfied, \ket{\psi^{\text{\rm GHZ-}m}} and \ket{\Tilde{\psi}^{\text{\rm GHZ-}m}} are locally equivalent.
Similarly, real values of $\alpha_i,\beta_i$ for all $i$ can meet this requirement.

Additionally, we note that \ket{\psi_{\alpha,\beta}^{\text{\rm ECS}}} and \ket{\psi^{\text{\rm GHZ-}2}} are equivalent under the local displacement $\openone\otimes D(-\beta)$ for arbitrary $\alpha,\beta\in\mathbb{C}$, as the constraint in Eq.~\eqref{eq:cons-GHZ} is always satisfied.

\section{Appendix C: Proof of Theorem~\ref{thm:ECS}}\label{app:thm_ECS}
\begin{proof}
  What we want to prove is that the state which satisfies $\Omega_l\ket{\phi}=\ket{\phi}$ for all $i=1,2,3$ must take the form of ECSs, namely, $\ket{\phi}=\ket{\psi^{\text{\rm ECS}}_{\alpha,\beta}}$.

  First consider an arbitrary two-mode state $\ket{\phi}=\sum A_{kl}\ket{k}\ket{l}$.
  The photon counting measurement $\Omega_1^\text{\rm ECS}$ in Eq.~\eqref{eq:Omega1} can be rewritten as
  \begin{eqnarray}
    \Omega_1^\text{\rm ECS}
    &=&\sum_n \ket{n}\bra{n }\otimes\ket{0}\bra{0}
    +\ket{0}\bra{0}\otimes \sum_n \ket{n}\bra{n}
    -\ket{0}\bra{0}\otimes\ket{0}\bra{0}\nonumber\\
    &=&\openone\otimes\ket{0}\bra{0}+\ket{0}\bra{0}\otimes\openone-\ket{0}\bra{0}\otimes\ket{0}\bra{0}\,.
  \end{eqnarray}
  Then, the constraint $\Omega_1^\text{\rm ECS} \ket{\phi}=\ket{\phi}$ leads to
  \begin{eqnarray}\label{eq:phi_1}
    \ket{\phi}=\sum_{k=0}^\infty \bigl[A_{k0}\ket{k}\ket{0}+A_{0k}\ket{0}\ket{k}-A_{00}\ket{0}\ket{0}\bigr].
  \end{eqnarray}
  Similarly, the measurement $\Omega_2^\text{\rm ECS}$ in Eq.~\eqref{eq:Omega2} can be rewritten as
  \begin{eqnarray}
    \Omega_2^\text{\rm ECS}=\openone\otimes\ket{\beta}\bra{\beta}+\ket{\alpha}\bra{\alpha}\otimes\openone-\ket{\alpha}\bra{\alpha}\otimes\ket{\beta}\bra{\beta}\,,
  \end{eqnarray}
  and
  \begin{eqnarray}
    \Omega_2^\text{\rm ECS} \ket{\phi}
    &=&\sum_k \bigl[
      A_{k0}\ket{k}\ket{\beta}\langle\beta\ket{0}
      +A_{0k}\ket{0}\ket{\beta}\langle\beta\ket{k}
      -A_{00}\ket{0}\ket{\beta}\langle\beta\ket{0}
      \bigr]\nonumber\\
    &+&\sum_k \bigl[
      A_{k0}\ket{\alpha}\ket{0}\langle\alpha\ket{k}
      +A_{0k}\ket{\alpha}\ket{k}\langle\alpha\ket{0}
      -A_{00}\ket{\alpha}\ket{0}\langle\alpha\ket{0}
      \bigr]\nonumber\\
    &-&\sum_k \bigl[
      A_{k0}\ket{\alpha}\ket{\beta}\langle\beta\ket{0}\langle\alpha\ket{k}
      +A_{0k}\ket{\alpha}\ket{\beta}\langle\alpha\ket{0}\langle\beta\ket{k}
      -A_{00}\ket{\alpha}\ket{\beta}\langle\alpha\ket{0}\langle\beta\ket{0}
      \bigr].
  \end{eqnarray}
  Converting to the Fock bases, one gets
  \begin{eqnarray}
    \Omega_2^\text{\rm ECS}\ket{\phi}
    &=&\sum_{k,m}  \ket{k}\ket{m}A_{k0}\braket{m}{\beta}\braket{\beta}{0}
      +\sum_{k,m}  \ket{0}\ket{m}A_{0k}\braket{m}{\beta}\braket{\beta}{k}
      -\sum_{m}    \ket{0}\ket{m}A_{00}\braket{m}{\beta}\braket{\beta}{0}\nonumber\\
    &+&\sum_{k,n}  \ket{n}\ket{0}A_{k0}\braket{n}{\alpha}\braket{\alpha}{k}
      +\sum_{k,n}  \ket{n}\ket{k}A_{0k}\braket{n}{\alpha}\braket{\alpha}{0}
      -\sum_{n}    \ket{n}\ket{0}A_{00}\braket{n}{\alpha}\braket{\alpha}{0}\nonumber\\
    &-&\sum_{k,n,m}\ket{n}\ket{m}A_{k0}\braket{n}{\alpha}\braket{\alpha}{k}\braket{m}{\beta}\braket{\beta}{0}
      -\sum_{k,n,m}\ket{n}\ket{m}A_{0k}\braket{n}{\alpha}\braket{\alpha}{0}\braket{m}{\beta}\braket{\beta}{k}\nonumber\\
    &+&\sum_{n,m} \ket{n}\ket{m}A_{00}\braket{n}{\alpha}\braket{\alpha}{0}\braket{m}{\beta}\braket{\beta}{0}\,.
  \end{eqnarray}
  Here we agree on that Greek letters $\ket{\alpha}$ and $\ket{\beta}$ represent the coherent states, and the Fock states (number states) are represented by Latin letters $\ket{n}$,$\ket{m}$ and so on.
  Considering the amplitudes $A_{pq},A_{p0},A_{0q},A_{00}$ ($p\neq0$ and $q\neq0$), we have
  \begin{subequations}
    \begin{align}
      A_{pq} &= A_{p0}\braket{q}{\beta}\braket{\beta}{0}
              + A_{0q}\braket{p}{\alpha}\braket{\alpha}{0}
              - \sum_{k}A_{k0}\braket{p}{\alpha}\braket{\alpha}{k}\braket{q}{\beta}\braket{\beta}{0}\nonumber\\
             &\qquad\qquad\qquad
              - \sum_{k}A_{0k}\braket{p}{\alpha}\braket{\alpha}{0}\braket{q}{\beta}\braket{\beta}{k}
              + A_{00}\braket{p}{\alpha}\braket{\alpha}{0}\braket{q}{\beta}\braket{\beta}{0}\,,\label{eq:A_pq}\\
      A_{p0} &= A_{p0}|\braket{0}{\beta}|^2
              + \sum_{k}A_{k0}\braket{p}{\alpha}\braket{\alpha}{k}\bigl[1-|\braket{0}{\beta}|^2\bigr]
              - \sum_{k}A_{0k}\braket{p}{\alpha}\braket{\alpha}{0}\braket{0}{\beta}\braket{\beta}{k}
              + A_{00}\braket{p}{\alpha}\braket{\alpha}{0}|\braket{0}{\beta}|^2\,,\label{eq:A_p0}\\
      A_{0q} &= A_{0q}|\braket{0}{\alpha}|^2
              + \sum_{k}A_{0k}\braket{q}{\beta}\braket{\beta}{k}\bigl[1-|\braket{0}{\alpha}|^2\bigr]
              - \sum_{k}A_{k0}\braket{0}{\alpha}\braket{\alpha}{k}\braket{q}{\beta}\braket{\beta}{0}
              + A_{00}\braket{q}{\beta}\braket{\beta}{0}|\braket{0}{\alpha}|^2\,,\label{eq:A_0q}\\
      A_{00} &= A_{00}|\braket{0}{\alpha}|^2 |\braket{0}{\beta}|^2
              + \sum_{k}A_{0k}\braket{0}{\beta}\braket{\beta}{k}\bigl[1-|\braket{0}{\alpha}|^2\bigr]
              - \sum_{k}A_{k0}\braket{0}{\alpha}\braket{\alpha}{k}\bigl[1-|\braket{0}{\beta}|^2\bigr].\label{eq:A_00}
    \end{align}
  \end{subequations}
  Substituting Eqs.~\eqref{eq:A_p0}-\eqref{eq:A_00} into Eq.~\eqref{eq:A_pq}, together with the constraint $A_{pq}=0$ from Eq.~\eqref{eq:phi_1}, we have
  \begin{eqnarray}
    A_{pq} = A_{p0}\frac{\braket{q}{\beta}}{\braket{0}{\beta}}
           + A_{0q}\frac{\braket{p}{\alpha}}{\braket{0}{\alpha}}
           - A_{00}\frac{\braket{q}{\beta}\braket{p}{\alpha}}{\braket{0}{\beta}\braket{0}{\alpha}}
           = 0\,,
  \end{eqnarray}
  which is equivalent to
  \begin{eqnarray}
      A_{p0}\frac{\braket{0}{\alpha}}{\braket{p}{\alpha}}
    + A_{0q}\frac{\braket{0}{\beta}}{\braket{q}{\beta}}
    - A_{00}
    = 0\,.
  \end{eqnarray}
  As $A_{00}$ is independent of $p$ and $q$, the two terms $A_{p0}\frac{\braket{0}{\alpha}}{\braket{p}{\alpha}}$ and $A_{0q}\frac{\braket{0}{\beta}}{\braket{q}{\beta}}$ should be independent of $p$ and $q$ as well.
  Thus,
  \begin{eqnarray}
    B = A_{p0}\frac{\braket{0}{\alpha}}{\braket{p}{\alpha}}\,,\qquad
    C = A_{0q}\frac{\braket{0}{\beta}}{\braket{q}{\beta}}\,,\qquad
    B+C=A_{00}\,,\qquad(\forall B,C\in\mathbb{C})\,.
  \end{eqnarray}
  Hence, if \ket{\phi} is the eigenstate of both $\Omega_1^\text{\rm ECS}$ and $\Omega_2^\text{\rm ECS}$, it can be written as
  \begin{eqnarray}
    \ket{\phi}
    =\sum_{k=1}^{\infty}\frac{B}{\braket{0}{\alpha}}\underline{\braket{k}{\alpha}\ket{k}}\ket{0}
    +\sum_{k=1}^{\infty}\frac{C}{\braket{0}{\beta}}\ket{0}\underline{\braket{k}{\beta}\ket{k}}
    +\biggl(B\frac{\braket{0}{\alpha}}{\braket{0}{\alpha}}+C\frac{\braket{0}{\beta}}{\braket{0}{\beta}}\biggr)\ket{0}\ket{0}
    =\frac{B}{\braket{0}{\alpha}}\ket{\alpha}\ket{0}
     + \frac{C}{\braket{0}{\beta}}\ket{0}\ket{\beta}\,,
  \end{eqnarray}
  where we note that $\displaystyle\sum_{k=0}^\infty \braket{k}{\alpha}\ket{k}=\ket{\alpha}$, as all of the kets \ket{k} form a complete bases.
  With $B'=\frac{B}{\braket{0}{\alpha}}$ and $C'=\frac{C}{\braket{0}{\beta}}$, the state can be simply written as
  \begin{eqnarray}
    \ket{\phi}=B'\ket{\alpha}\ket{0}+C'\ket{0}\ket{\beta}\,.
  \end{eqnarray}
  Finally, considering the parity measurement $\Omega_3^\text{\rm ECS}$, we have
  \begin{eqnarray}
    \Omega_3^\text{\rm ECS}\ket{\phi}
    &=&\mathbb{D}^{\dagger}\!\left(\!-\frac{\alpha}{2},-\frac{\beta}{2}\!\right)\!
    \Bigl[
      B'\sum\ket{2n}\ket{2m}\braket{2n}{\frac{\alpha}{2}}\braket{2m}{-\frac{\beta}{2}}
      +B'\sum\ket{2n+1}\ket{2m+1}\braket{2n+1}{\frac{\alpha}{2}}\braket{2m+1}{-\frac{\beta}{2}}\nonumber\\
      &&\qquad~~~
      +C'\sum\ket{2n}\ket{2m}\braket{2n}{\!-\frac{\alpha}{2}}\braket{2m}{\frac{\beta}{2}}
      +C'\sum\ket{2n+1}\ket{2m+1}\braket{2n+1}{-\frac{\alpha}{2}}\braket{2m+1}{\frac{\beta}{2}}
      \Bigr]\nonumber\\
    &=&\mathbb{D}^{\dagger}\!\left(\!-\frac{\alpha}{2},-\frac{\beta}{2}\!\right)\!
    \Bigl[
      B'\bigl(\ket{\frac{\alpha}{2}}+\ket{-\frac{\alpha}{2}}\bigr)\bigl(\ket{\frac{\beta}{2}}+\ket{-\frac{\beta}{2}}\bigr)
      +B'\bigl(\ket{\frac{\alpha}{2}}-\ket{-\frac{\alpha}{2}}\bigr)\bigl(-\ket{\frac{\beta}{2}}+\ket{-\frac{\beta}{2}}\bigr)
      \nonumber\\
      &&\qquad\qquad
      +C'\bigl(\ket{\frac{\alpha}{2}}+\ket{-\frac{\alpha}{2}}\bigr)\bigl(\ket{\frac{\beta}{2}}+\ket{-\frac{\beta}{2}}\bigr)
      +C'\bigl(-\ket{\frac{\alpha}{2}}+\ket{-\frac{\alpha}{2}}\bigr)\bigl(\ket{\frac{\beta}{2}}-\ket{-\frac{\beta}{2}}\bigr)
      \Bigr]\times\frac{1}{4}\nonumber\\
    &=&\frac{B'+C'}{2}\big(\ket{\alpha}\ket{0}+\ket{0}\ket{\beta}\big),
  \end{eqnarray}
  and the constraint $\Omega_3^\text{\rm ECS}\ket{\phi}=\ket{\phi}$ gives
  \begin{eqnarray}
    B'=C'\,.
  \end{eqnarray}
  Therefore, the eigenstate of all the measurement settings $\Omega_1^\text{\rm ECS},\Omega_2^\text{\rm ECS},\Omega_3^\text{\rm ECS}$ is the entangled coherent state
  \begin{eqnarray}
    \ket{\phi}=\ket{\psi^{\text{\rm ECS}}_{\alpha,\beta}}=\frac{1}{\sqrt{2(1+e^{-|\alpha|^2})}}\Bigl[\ket{\alpha}\ket{0}+\ket{0}\ket{\beta}\Bigr].
  \end{eqnarray}
\end{proof}

\section{Appendix D: Verification of a different kind of ECS and its generalization}
Following the proof of Theorem~\ref{thm:ECS} in Appendix~C, here we present the verification protocol for a different kind of entangled coherent state with the form \cite{Sanders1992}
\begin{eqnarray}\label{eq:ECS2_define}
  \ket{{\psi^{\text{\rm ECS}(-)}_{\alpha,\beta}}} = \frac1{\sqrt{C_{\alpha,\beta}^{(-)}}}\Bigl(\ket{\alpha}\ket{0}-\ket{0}\ket{\beta}\Bigr),
\end{eqnarray}
where the normalization is $C_{\alpha,\beta}^{(-)}=2\bigl[1-\E^{-(|\alpha|^2+|\beta|^2)/2}\bigr]$.
This state is different from $\ket{\psi^{\text{\rm ECS}}_{\alpha,\beta}}$ in Eq.~\eqref{eq:ECS_define} by a sign in the phase.

Using two-mode generalized Kennedy receivers, the state \ket{{\psi^{\text{\rm ECS}(-)}_{\alpha,\beta}}} behaves exactly the same as that of \ket{{\psi^{\text{\rm ECS}}_{\alpha,\beta}}}.
The difference between these two states lies in their difference of the parity symmetry under displacements.
For \ket{{\psi^{\text{\rm ECS}(-)}_{\alpha,\beta}}}, we have
\begin{eqnarray}
  \mathbb{D}\!\left(\!-\frac{\alpha}{2},-\frac{\beta}{2}\right)\!\ket{{\psi^{\text{\rm ECS}(-)}_{\alpha,\beta}}} =
  \frac1{2\sqrt{C_{\alpha,\beta}^{(-)}}}\Bigl[C_{1}
  \ket{-_{\frac{\alpha}{2}}}
  \ket{+_{\frac{\beta}{2}}}
  -
  C_{2}
  \ket{+_{\frac{\alpha}{2}}}
  \ket{-_{\frac{\beta}{2}}}\Bigr],
\end{eqnarray}
where $C_{1}=\sqrt{C_{-}^{(\alpha/2)}C_{+}^{(\beta/2)}}$ and $C_{2}=\sqrt{C_{+}^{(\alpha/2)}C_{-}^{(\beta/2)}}$.
Thus, to verify \ket{\psi^{\text{\rm ECS}(-)}_{\alpha,\beta}}, the measurement settings are
\begin{eqnarray}
  {\Omega_1^{\text{\rm ECS}(-)}}
  &=&  \Omega_1^{\text{\rm ECS}}
  = \openone - \tau^{-}\otimes\tau^{-}\,,\\
  {\Omega_2^{\text{\rm ECS}(-)}}
  &=&  \Omega_2^{\text{\rm ECS}}
  =\mathbb{D}^{\dagger}(-\alpha,-\beta)(\openone - \tau^{-}\otimes\tau^{-}) \mathbb{D}(-\alpha,-\beta)\,,\\
  {\Omega_3^{\text{\rm ECS}(-)}}
  &=&  \mathbb{D}^{\dagger}\!\left(\!-\frac{\alpha}{2},-\frac{\beta}{2}\right)\bigl(\pi\otimes\pi\bigr)^{-} \mathbb{D}\!\left(\!-\frac{\alpha}{2},-\frac{\beta}{2}\right)\!.
\end{eqnarray}
See the following theorem with its proof.
\begin{theorem}\label{thm:ECS2}
  The two-mode entangled coherent states \ket{{\psi^{\text{\rm ECS}(-)}_{\alpha,\beta}}} can be verified efficiently by the protocol
  \begin{eqnarray}
    {\Omega^{\text{\rm ECS}(-)}}=\sum_{l=1}^3 \mu_l{\Omega_l^{\text{\rm ECS}(-)}}\,,
  \end{eqnarray}
  where the probability distribution $\{\mu_l\}$ is arbitrary.
  An optimal verification efficiency can be obtained by optimizing $\{\mu_l\}$ under specific scenarios as constrained by Eq.~\eqref{eq:v_opt}.
\end{theorem}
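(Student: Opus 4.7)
The plan is to parallel the proof of Theorem~\ref{thm:ECS} in Appendix~C, exploiting the fact that two of the three measurement settings coincide with those of the $\ket{\psi^{\text{\rm ECS}}_{\alpha,\beta}}$ case and only the parity step is genuinely new. The argument splits into the standard two halves: (i) check that the target state passes each setting deterministically, then (ii) show that it is the unique $+1$ eigenstate common to $\Omega_1^{\text{\rm ECS}(-)}$, $\Omega_2^{\text{\rm ECS}(-)}$, and $\Omega_3^{\text{\rm ECS}(-)}$.

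For (i), the first setting is handled by the observation $(\tau^{-}\otimes\tau^{-})\ket{\psi^{\text{\rm ECS}(-)}_{\alpha,\beta}}=0$, which follows at once because each branch $\ket{\alpha}\ket{0}$ and $\ket{0}\ket{\beta}$ contains a vacuum mode; the minus sign in the superposition is irrelevant here. The second setting reduces to the first after applying $\mathbb{D}(-\alpha,-\beta)$ and noting that the displaced state again has the one-vacuum-mode structure up to global phases. For the third setting I would directly invoke the identity recorded in the excerpt, namely that $\mathbb{D}(-\alpha/2,-\beta/2)\ket{\psi^{\text{\rm ECS}(-)}_{\alpha,\beta}}$ is a superposition of $\ket{-_{\alpha/2}}\ket{+_{\beta/2}}$ and $\ket{+_{\alpha/2}}\ket{-_{\beta/2}}$; both lie in the $-1$ eigenspace of $\pi\otimes\pi$, so $(\pi\otimes\pi)^{-}$ acts as the identity.

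For (ii), since $\Omega_1^{\text{\rm ECS}(-)}=\Omega_1^{\text{\rm ECS}}$ and $\Omega_2^{\text{\rm ECS}(-)}=\Omega_2^{\text{\rm ECS}}$, the chain of manipulations in Appendix~C from Eq.~(C2) through the system (C4)--(C7) carries over verbatim and forces any common $+1$ eigenstate of the first two settings to take the form $\ket{\phi}=B'\ket{\alpha}\ket{0}+C'\ket{0}\ket{\beta}$ for some $B',C'\in\mathbb{C}$. The only new calculation is the action of $\Omega_3^{\text{\rm ECS}(-)}$ on this two-parameter family: mirroring the Appendix~C parity expansion but with $(\pi\otimes\pi)^{+}$ replaced by $(\pi\otimes\pi)^{-}=\ket{+}\!\bra{+}\otimes\ket{-}\!\bra{-}+\ket{-}\!\bra{-}\otimes\ket{+}\!\bra{+}$, the opposite-parity cross terms survive and the same-parity terms cancel, yielding (up to the $\mathbb{D}$ displacement undone at the end) $\Omega_3^{\text{\rm ECS}(-)}\ket{\phi}=\tfrac{B'-C'}{2}\bigl(\ket{\alpha}\ket{0}-\ket{0}\ket{\beta}\bigr)$. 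Equating this with $\ket{\phi}$ gives $B'=-C'$, which pins $\ket{\phi}$ to $\ket{\psi^{\text{\rm ECS}(-)}_{\alpha,\beta}}$ up to a normalization.

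The efficiency statement then follows directly from the general framework, since any convex combination $\sum_l\mu_l\Omega_l^{\text{\rm ECS}(-)}$ admits a strictly positive spectral gap $\nu$ whose optimum over $\{\mu_l\}$ under the restricted noise family in Eq.~\eqref{eq:v_opt} determines the sample complexity. I expect the main bookkeeping obstacle to be tracking the signs in the even/odd cat expansion under $(\pi\otimes\pi)^{-}$, but this differs from Appendix~C only by the sign flip of the diagonal parity terms, so the calculation is essentially mechanical.
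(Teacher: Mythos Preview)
Your proposal is correct and follows essentially the same route as the paper's proof: invoke the Appendix~C argument verbatim for $\Omega_1^{\text{\rm ECS}(-)}$ and $\Omega_2^{\text{\rm ECS}(-)}$ to reduce to $\ket{\phi}=B'\ket{\alpha}\ket{0}+C'\ket{0}\ket{\beta}$, then compute $\Omega_3^{\text{\rm ECS}(-)}\ket{\phi}=\tfrac{B'-C'}{2}\bigl(\ket{\alpha}\ket{0}-\ket{0}\ket{\beta}\bigr)$ and conclude $C'=-B'$. The only cosmetic difference is that the paper carries out the parity step via an explicit Fock-basis expansion, whereas you phrase it through the decomposition $(\pi\otimes\pi)^{-}=\pi^{+}\otimes\pi^{-}+\pi^{-}\otimes\pi^{+}$; the two are equivalent.
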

\begin{proof}
  The first part of the proof is the same as that in Appendix C, such that the eigenstate of both ${\Omega_1^{\text{\rm ECS}(-)}}$ and ${\Omega_2^{\text{\rm ECS}(-)}}$ can be written as
  \begin{eqnarray}
    \ket{\phi^{\prime}}=B\ket{\alpha}\ket{0}+C\ket{0}\ket{\beta}\,, \qquad(\forall B,C\in\mathbb{C})\,.
  \end{eqnarray}
  Then, considering the measurement $\Omega_3^{\text{\rm ECS}(-)}$, we have
  \begin{eqnarray}
    {\Omega_3^{\text{\rm ECS}(-)}}\ket{\phi^{\prime}}
    &=&\mathbb{D}^{\dagger}\!\left(\!-\frac{\alpha}{2},-\frac{\beta}{2}\!\right)\!
    \Bigl[
      B\sum\ket{2n}\ket{2m+1}\braket{2n}{\frac{\alpha}{2}}\braket{2m+1}{-\frac{\beta}{2}}
      +B\sum\ket{2n+1}\ket{2m}\braket{2n+1}{\frac{\alpha}{2}}\braket{2m}{-\frac{\beta}{2}}\nonumber\\
      &&\qquad~~~
      +C\sum\ket{2n}\ket{2m+1}\braket{2n}{-\frac{\alpha}{2}}\braket{2m+1}{\frac{\beta}{2}}
      +C\sum\ket{2n+1}\ket{2m}\braket{2n+1}{-\frac{\alpha}{2}}\braket{2m}{\frac{\beta}{2}}
      \Bigr]\nonumber\\
    &=&\mathbb{D}^{\dagger}\!\left(\!-\frac{\alpha}{2},-\frac{\beta}{2}\right)\!
    \Bigl[
      B\bigl(\ket{\frac{\alpha}{2}}+\ket{-\frac{\alpha}{2}}\bigr)\bigl(-\ket{\frac{\beta}{2}}+\ket{-\frac{\beta}{2}}\bigr)
      +B\bigl(\ket{\frac{\alpha}{2}}-\ket{-\frac{\alpha}{2}}\bigr)\bigl(\ket{\frac{\beta}{2}}+\ket{-\frac{\beta}{2}}\bigr)
      \nonumber\\
      &&\qquad\qquad
      +C\bigl(\ket{\frac{\alpha}{2}}+\ket{-\frac{\alpha}{2}}\bigr)\bigl(\ket{\frac{\beta}{2}}-\ket{-\frac{\beta}{2}}\bigr)
      +C\bigl(-\ket{\frac{\alpha}{2}}+\ket{-\frac{\alpha}{2}}\bigr)\bigl(\ket{\frac{\beta}{2}}+\ket{-\frac{\beta}{2}}\bigr)
      \Bigr]\times\frac{1}{4}\nonumber\\
    &=&\frac{B-C}{2}\bigl(\ket{\alpha}\ket{0}-\ket{0}\ket{\beta}\bigr).
  \end{eqnarray}
  The constraint ${\Omega_3^{\text{\rm ECS}(-)}}\ket{\phi^{\prime}}=\ket{\phi^{\prime}}$ leads to
  \begin{eqnarray}
    C=-B\,.
  \end{eqnarray}
  Therefore, the eigenstate of all the three measurement settings $\Omega_1^{\text{\rm ECS}(-)},\Omega_2^{\text{\rm ECS}(-)},\Omega_3^{\text{\rm ECS}(-)}$ is the entangled coherent state
  \begin{eqnarray}
    \ket{\phi^{\prime}}=\ket{{\psi^{\text{\rm ECS}(-)}_{\alpha,\beta}}} =
    \frac1{\sqrt{C_{\alpha,\beta}^{(-)}}}\Bigl(\ket{\alpha}\ket{0}-\ket{0}\ket{\beta}\Bigr).
  \end{eqnarray}
\end{proof}

Furthermore, we consider a class of two-mode entangled coherent state with the general form \cite{Sanders1992}
\begin{eqnarray}
  \ket{{\tilde{\psi}^{\text{\rm ECS}(-)}}}=\frac1{\sqrt{\tilde{C}^{(-)}}}\Bigl(\ket{\alpha_1}\ket{\alpha_2}-\ket{\beta_1}\ket{\beta_2}\Bigr),
\end{eqnarray}
which is locally equivalent to \ket{{\psi^{\text{\rm ECS}(-)}_{\alpha,\beta}}} whenever the constraint
\begin{eqnarray}
  \Im(\alpha_1\beta_1^{*})+\Im(\alpha_2\beta_2^{*})=2n\pi\,, \quad(n\in\mathbb{Z})
\end{eqnarray}
is satisfied.
The normalization is given by $\tilde{C}^{(-)}=2-\exp\bigl[-\frac{1}{2}(|\alpha_1|^2+|\beta_1|^2+|\alpha_2|^2+|\beta_2|^2)\bigr]\bigl(\E^{\alpha_1^{*}\beta_1+\alpha_2^{*}\beta_2}+\E^{\alpha_1\beta_1^{*}+\alpha_2\beta_2^{*}}\bigr)$.
Moreover, \ket{{\tilde{\psi}^{\text{\rm ECS}(-)}}} is also locally equivalent to \ket{{\psi^{\text{\rm ECS}}_{\alpha,\beta}}} whenever the constraint
\begin{eqnarray}
  \Im(\alpha_1\beta_1^{*})+\Im(\alpha_2\beta_2^{*})=(2n+1)\pi\,, \quad(n\in\mathbb{Z})
\end{eqnarray}
is satisfied.
Hence, we conclude that the state \ket{{\tilde{\psi}^{\text{\rm ECS}(-)}}} can be verified by $\Omega^{\text{\rm ECS}}$ or $\Omega^{\text{\rm ECS}(-)}$ under proper local displacement operations whenever we have
\begin{eqnarray}
  \Im(\alpha_1\beta_1^{*})+\Im(\alpha_2\beta_2^{*})=n\pi\,, \quad(n\in\mathbb{Z})\,.
\end{eqnarray}
Similarly, the constraint for verifying \ket{{\tilde{\psi}^{\text{\rm ECS}}}} as in Eq.~\eqref{eq:two-mode} can be relaxed as the same.

\section{Appendix E: Example of the optimization for $1/\nu_{\text{\rm opt}}$}\label{app:example}
In this section we show how to perform numerical optimizations to find the optimal efficiency of $1/\nu_{\text{\rm opt}}$.
Due to the infinite dimensional Hilbert space of the CV quantum states, we cannot consider all the possible noises as that in the DV scenario \cite{Pallister.etal2018}.
Instead, here we tackle the problem from a more practical perspective.

The most common noise in reality is the decoherence of a mode due to photon absorption losses, which can be described by a single parameter $\eta$ to represent the fraction of photons that survives the noisy channel, i.e.,
\begin{eqnarray}\label{eq:photon_loss}
  \ket{\alpha}\otimes\ket{0}_{E} \rightarrow \ket{\sqrt{\eta}\alpha}\otimes\ket{\sqrt{1-\eta}\alpha}_{E}\,,
\end{eqnarray}
where $\ket{\cdot}_E$ refers to the environment mode.
Then we make the reasonable assumption that (i): The output state is measured immediately after the displacement, so the photon loss occurs mainly before the displacement.
With this assumption, we have

\noindent{\bf Decoherent States Type A:}
The decoherent states caused by photon losses can be described as $\sigma_a=\ket{\phi_a}\bra{\phi_a}$ where
\begin{eqnarray}
  \ket{\phi_a}:=\ket{\alpha-(1-\sqrt{\eta})\alpha/2}\ket{(1-\sqrt{\eta})\alpha/2}+\ket{(1-\sqrt{\eta})\alpha/2}\ket{\alpha-(1-\sqrt{\eta})\alpha/2}\,,
\end{eqnarray}
in which the normalization is omitted.
The passing probability $\tr(\Omega_l\sigma_i)$ of the decoherent states depends on the transmission probability $\eta$, so does the infidelity $\epsilon=1-\bra{\psi}\sigma_i\ket{\psi}$.
Then we can use the parameter $\eta$ as a bridge to link the passing probability $\tr(\Omega_l\sigma_i)$ and the infidelity $\epsilon$.
Considering the case of $\alpha=1$, for the decoherent state $\sigma_a$ with three measurement settings, the parameters $k_{l,i}$ as defined in Eq.~\eqref{eq:linear} of Appendix A are given by
\begin{eqnarray}
  k_a=\{0.42(7),0.42(7),0.00(0)\}\,.
\end{eqnarray}
The numerical results show that ${\Omega_3^\text{\rm ECS}}$ is not necessary for verifying such a decoherent state as we always have $\tr({\Omega_3^\text{\rm ECS}}\sigma_a)=1$.
Moreover, we can confirm the symmetry of $\sigma_a$ with respect to the two settings ${\Omega_1^\text{\rm ECS}}$ and $\Omega_2^{\text{\rm ECS}}$.
Note that the numerical results presented here and below are all based on using the resolution PNRD$(5)$.

Another possible decoherence is from the imperfect construction of the states.
For instance, the construction decoherence from imperfect displacements can be described as
\begin{eqnarray}
  \ket{\alpha} \rightarrow \ket{\alpha+\Delta}\,,
\end{eqnarray}
where the small value $\Delta$ represents the displacement error.
Then we make another reasonable assumption that (ii): The construction decoherence mainly comes from displacement of the two modes.
With this assumption, we have

\noindent{\bf Decoherent States Type B:}
Considering the extreme case of one-mode error only, the decoherent states caused by displacement error can be described as $\sigma_e=\ket{\phi_e}\bra{\phi_e}$ where
\begin{eqnarray}
  \ket{\phi_e}:=\ket{\alpha+\Delta}\ket{0}+\ket{\Delta}\ket{\alpha}\,.
\end{eqnarray}
The numerical results are
\begin{eqnarray}
  k_e=\{0.31(5),0.31(5),0.99(3)\}\,.
\end{eqnarray}

\noindent{\bf Decoherent States Type C:}
Considering symmetric two-mode error, the decoherent states caused by displacement error can be described as $\sigma_s=\ket{\phi_s}\bra{\phi_s}$ where
\begin{eqnarray}
  \ket{\phi_s}:=\ket{\alpha+\Delta}\ket{\Delta}+\ket{\Delta}\ket{\alpha+\Delta}\,.
\end{eqnarray}
The numerical results are
\begin{eqnarray}
  k_s=\{0.39(9),0.33(1),0.99(2)\}\,.
\end{eqnarray}

Next, we move on to consider these two kinds of decoherence effects together.
With the above two reasonable assumptions, the decoherent states can be defined as
\begin{eqnarray}
  \ket{\phi_e(\eta)}&:=&\ket{\alpha-(1-\sqrt{\eta})\alpha/2+\Delta}\ket{(1-\sqrt{\eta})\alpha/2}+\ket{(1-\sqrt{\eta})\alpha/2+\Delta}\ket{\alpha-(1-\sqrt{\eta})\alpha/2}\,,\nonumber\\
  \ket{\phi_s(\eta)}&:=&\ket{\alpha-(1-\sqrt{\eta})\alpha/2+\Delta}\ket{(1-\sqrt{\eta})\alpha/2+\Delta}+\ket{(1-\sqrt{\eta})\alpha/2+\Delta}\ket{\alpha-(1-\sqrt{\eta})\alpha/2+\Delta}\,,
\end{eqnarray}
which are the noisy states in Eq.~\eqref{eq:noisyStates} of the main text used for demonstration.
Note that $\sigma_e=\sigma_e(100\%)$ and $\sigma_s=\sigma_s(100\%)$ are the cases if considering the displacement error only, and we relabel $\sigma_a=\sigma_{\Delta=0}$ for clarity which is the case if considering the photon loss only.
In the following we discuss various scenarios.

\noindent{\bf Case 1:}
If considering the photon loss and the displacement error independently, the decoherent states are given by
\begin{eqnarray}
  \{\sigma_i\}=\{\sigma_e(100\%), \sigma_s(100\%), \sigma_{\Delta=0}\}\,.
\end{eqnarray}
The parameters $k_{l,i}$ form the following $3\text{(number of settings)} \times 3\text{(types of noise)}$ matrix
\begin{eqnarray}
  K_A = \left[\begin{matrix}
    0.31(5)  &0.39(9)  &0.42(7) \\
    0.31(5)  &0.33(1)  &0.42(7) \\
    0.99(3)  &0.99(2)  &0.00(0)
  \end{matrix}\right]\!.
\end{eqnarray}
With this, we can optimize the QSV protocol as in Eq.~\eqref{eq:v_opt}.
The objective of the optimization is to find the probability distribution $\{\mu_l\}$ of the three measurement settings which makes the passing probability of the worst case to be minimal.
To be specific, we calculate the passing probability of all the possible decoherent states that we consider, and find the maximal one (i.e. the worst case) with certain $\{\mu_l\}$.
Then the probability distribution $\{\mu_l\}$ is varied in order to get the optimal protocol.
In this case the optimal efficiency is given by
${1/\nu_{\text{\rm opt}} \approx 2.60(4)}$ with
${\{\mu_l\}=\{0.49(7),0.40(2),0.10(1)\}}$.

\noindent{\bf Case 2:}
If considering the photon loss and the one-mode displacement error together, the decoherent states are given by
\begin{eqnarray}
  \{\sigma_i\}=\{\sigma_e(100\%), \sigma_e(99\%), \sigma_e(98\%), \sigma_{\Delta=0}\}\,.
\end{eqnarray}
The parameters $k_{l,i}$ form the following $3\text{(number of settings)} \times 4\text{(types of noise)}$ matrix
\begin{eqnarray}
  K_B = \left[\begin{matrix}
    0.31(5)  &0.33(4)  &0.35(3)  &0.42(7) \\
    0.31(5)  &0.29(4)  &0.27(2)  &0.42(7) \\
    0.99(3)  &0.99(2)  &0.99(0)  &0.00(0)
  \end{matrix}\right]\!.
\end{eqnarray}
Hence, the optimal efficiency is given by
${1/\nu_{\text{\rm opt}} \approx 2.60(4)}$ with
${\{\mu_l\}=\{0.49(7),0.40(2),0.10(1)\}}$.

\noindent{\bf Case 3:}
If considering the photon loss and the symmetric two-mode displacement error together, the decoherent states are given by
\begin{eqnarray}
  \{\sigma_i\}=\{\sigma_s(100\%), \sigma_s(99\%), \sigma_s(98\%), \sigma_{\Delta=0}\}\,.
\end{eqnarray}
The parameters $k_{l,i}$ form the following $3\text{(number of settings)} \times 4\text{(types of noise)}$ matrix
\begin{eqnarray}
  K_C = \left[\begin{matrix}
    0.39(9)  &0.42(2)  &0.44(5)  &0.42(7) \\
    0.33(1)  &0.30(9)  &0.28(8)  &0.42(7) \\
    0.99(2)  &0.99(1)  &0.98(9)  &0.00(0)
  \end{matrix}\right]\!.
\end{eqnarray}
Hence, the optimal efficiency is given by
${1/\nu_{\text{\rm opt}} \approx 2.40(7)}$ with
${\{\mu_l\}=\{0.97(2),0.00(0),0.02(8)\}}$.

\noindent{\bf Case 4:}
If considering the photon loss and two kinds of displacement errors altogether, the decoherent states are given by
\begin{eqnarray}\label{eq:all_noise}
  \{\sigma_i\}=\{\sigma_e(100\%), \sigma_e(99\%), \sigma_e(98\%), \sigma_s(100\%), \sigma_s(99\%), \sigma_s(98\%), \sigma_{\Delta=0}\}\,.
\end{eqnarray}
The parameters $k_{l,i}$ form the following $3\text{(number of settings)} \times 7\text{(types of noise)}$ matrix
\begin{eqnarray}
  K_D = \left[\begin{matrix}
    0.31(5)  &0.33(4)  &0.35(3)  &0.39(9)  &0.42(2)  &0.44(5)  &0.42(7) \\
    0.31(5)  &0.29(4)  &0.27(2)  &0.33(1)  &0.30(9)  &0.28(8)  &0.42(7) \\
    0.99(3)  &0.99(2)  &0.99(0)  &0.99(2)  &0.99(1)  &0.98(9)  &0.00(0)
  \end{matrix}\right]\!.
\end{eqnarray}
Hence, the optimal efficiency is given by
${1/\nu_{\text{\rm opt}} \approx 2.60(4)}$ with
${\{\mu_l\}=\{0.49(7),0.40(2),0.10(1)\}}$. See the results shown in Fig.~\ref{fig:fit_linear}

\begin{figure}[t]
  \includegraphics[width=0.8\columnwidth]{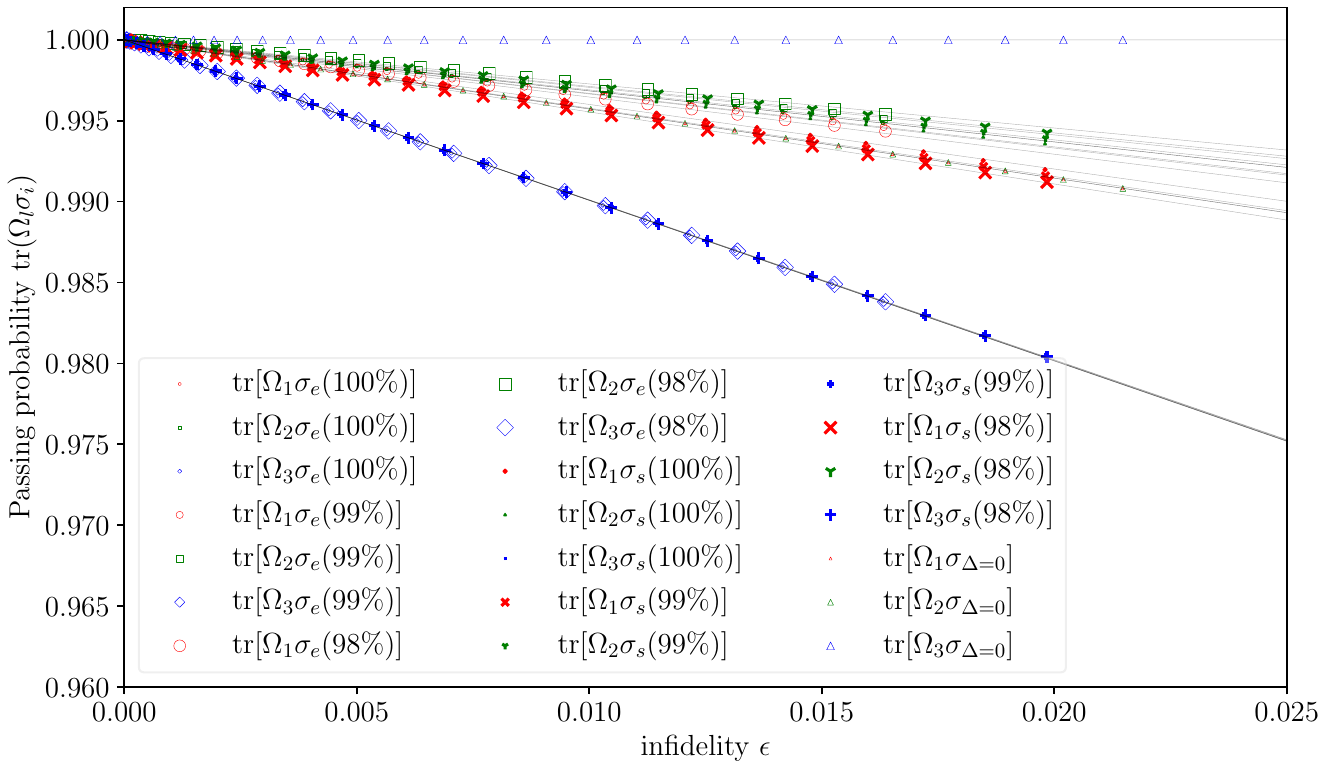}
  \caption{Numerical results showing the linear relationship between the passing probability $\tr(\Omega_l\sigma_i)$ and the infidelity $\epsilon$ by considering  different types of noisy states $\sigma_i$ as in Eq.~\eqref{eq:all_noise}.}
  \label{fig:fit_linear}
\end{figure}

Some remarks regarding the numerical process are in order.
First, the grey lines shown in Fig.~\ref{fig:fit_linear} are fitted with the function $y=1-kx$.
Although this linear relationship is proved only with the noisy states as defined in Eq.~\eqref{eq:def_noise} of Appendix A, all the decoherent states that we discuss here approximately satisfy this relationship, especially for the high-fidelity scenarios that we are interested in.
This property greatly simplifies the numerical processes.
Second, the examples we consider here cover several important decoherence effects only, thus more considerations shall give better results.
However, the optimization defined in Eq.~\eqref{eq:v_opt} of the main text (also in Appendix A) may not be applicable.
Nevertheless, the link between the parameter used to describe the decoherent states and the infidelity $\epsilon$ always exists, hence one can still perform the numerical optimization with more technical efforts.
Last, there is an interesting observation of the numerical results such that the independent consideration of various possible decoherent effects is able to get the worst case.
This observation is also helpful for the optimization process.

Finally, we do a comparison between our protocol and a real experiment reported in 
Ref.~\cite{Israel.etal2019} where an ECS in the form \ket{\psi^{\text{\rm ECS}}_{\alpha,\alpha}} was constructed.
The fidelity of the ECS with $\alpha\approx0.515$ was about $99.99\%$ by using quantum tomography.
The data was collected more than 24 hours with a Ti:sapphire oscillator operated at 80MHz, and 
the number of measurements used was more than $10^{10}$.
For the same state, instead, our protocol gives the optimal efficiency of ${1/\nu_{\text{\rm opt}} \approx 2.96(6)}$ with
${\{\mu_l\}=\{0.57(4),0.14(2),0.28(5)\}}$.
For the $99.99\%$ fidelity, it requires $1.36(6)\times10^5$ measurements to reach the confidence level of $99\%$ which is much more efficient.

\section{Appendix F: Proof of Theorem~\ref{thm:GHZ}}\label{app:GHZ}
Similar to the proof of Theorem~\ref{thm:ECS}, we prove Theorem~\ref{thm:GHZ} by showing that
\begin{itemize}
    \item[(1)] The eigenstate \ket{\phi} of all the $2(m\!-\!1)$ measurement settings as defined in Eq.~\eqref{eq:GHZm_1} is the superposition of the states $\bigotimes_{i=1}^{m}\ket{\alpha_i}$ and $\ket{0}^{\otimes m}$;
    \item[(2)] Such a state must be \ket{\psi^{\text{\rm GHZ-m}}} if it also satisfies $\Omega_{2m-1}^{\text{\rm GHZ-m}}\ket{\phi}=\ket{\phi}$, where the measurement setting is defined in Eq.~\eqref{eq:GHZm_2}.
\end{itemize}

As shown in Appendix~B, the two-mode GHZ-like state \ket{\psi^{\text{\rm GHZ-2}}} is locally equivalent to \ket{\psi^{\text{\rm ECS}}_{\alpha_1,-\alpha_2}} with the displacement $\openone\otimes D(-\alpha_2)$.
Then, we have the following two lemmas directly.
\begin{lemma}\label{lem:2mode}
  The eigenstates of $\Omega_{2l-1}^{\text{\rm GHZ-m}}$ and $\Omega_{2l}^{\text{\rm GHZ-m}}$ $(\forall l=1,2,\cdots,m\!-\!1)$ are the superposition
  \begin{eqnarray}
    \ket{\phi_l}=x\ket{\alpha_{l}}\ket{\alpha_{l+1}}\ket{\psi_1}+y\ket{0}\ket{0}\ket{\psi_2}\,,
  \end{eqnarray}
  where $x,y$ are the normalized superposition amplitudes, and \ket{\psi_1} and \ket{\psi_2} are the states in the subspace formed without the two modes $l$ and $l+1$.
\end{lemma}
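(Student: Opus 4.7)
The plan is to reduce the claim to a purely two-mode statement on modes $l$ and $l+1$ and then invoke the analysis already carried out in Appendix~C.

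First I would exploit the fact that both $\Omega_{2l-1}^{\text{GHZ-}m}$ and $\Omega_{2l}^{\text{GHZ-}m}$ act as the identity on the $m-2$ modes other than $l,l+1$. With the bipartition $A=\{l,l+1\}$ versus $B=\{1,\dots,m\}\setminus A$, expand
\begin{eqnarray*}
  \ket{\phi_l}=\sum_{k}\ket{\chi_k}_{A}\otimes\ket{\eta_k}_{B}
\end{eqnarray*}
in a Schmidt-like decomposition with orthonormal $\{\ket{\eta_k}\}$. Because the two operators act as the identity on $B$, the two eigenvalue-one conditions decouple across $k$, forcing each $\ket{\chi_k}$ to be a simultaneous $+1$ eigenstate of the two-mode restrictions
\begin{eqnarray*}
  \Omega_{2l-1}^{\text{GHZ-}m}\big|_{A}
  &=&\openone-\bigl(\openone-\ket{\alpha_l}\bra{\alpha_l}\bigr)\otimes\bigl(\openone-\ket{0}\bra{0}\bigr),\\
  \Omega_{2l}^{\text{GHZ-}m}\big|_{A}
  &=&\openone-\bigl(\openone-\ket{0}\bra{0}\bigr)\otimes\bigl(\openone-\ket{\alpha_{l+1}}\bra{\alpha_{l+1}}\bigr).
\end{eqnarray*}

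Second I would apply the local displacement $U:=D(-\alpha_l)\otimes\openone$ on $A$. Because the Weyl phase in $D(-\alpha_l)D(\alpha_l)$ vanishes, a direct conjugation yields $U\,\Omega_{2l-1}^{\text{GHZ-}m}|_{A}\,U^{\dagger}=\openone-\tau^{-}\otimes\tau^{-}=\Omega_{1}^{\text{ECS}}$ together with $U\,\Omega_{2l}^{\text{GHZ-}m}|_{A}\,U^{\dagger}=\openone-(\openone-\ket{-\alpha_l}\bra{-\alpha_l})\otimes(\openone-\ket{\alpha_{l+1}}\bra{\alpha_{l+1}})$, which is precisely $\Omega_{2}^{\text{ECS}}$ of Appendix~C with parameters $(\alpha,\beta)=(-\alpha_l,\alpha_{l+1})$. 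Appendix~C then tells us the joint $+1$ eigenspace of the displaced pair is the two-dimensional span of $\ket{-\alpha_l}\ket{0}$ and $\ket{0}\ket{\alpha_{l+1}}$; undoing $U$ via $U^{\dagger}=D(\alpha_l)\otimes\openone$ maps these generators back (phase-free) to $\ket{0}\ket{0}$ and $\ket{\alpha_l}\ket{\alpha_{l+1}}$, respectively.

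Combining the two steps, every $\ket{\chi_k}$ has the form $x_k\ket{\alpha_l}\ket{\alpha_{l+1}}+y_k\ket{0}\ket{0}$; regrouping the $B$-factors gives
\begin{eqnarray*}
  \ket{\phi_l}=\ket{\alpha_l}\ket{\alpha_{l+1}}\otimes\Bigl(\sum_{k}x_k\ket{\eta_k}\Bigr)+\ket{0}\ket{0}\otimes\Bigl(\sum_{k}y_k\ket{\eta_k}\Bigr),
\end{eqnarray*}
which is exactly the claimed form after absorbing norms into scalars $x,y$ and unit $B$-vectors $\ket{\psi_1},\ket{\psi_2}$. The main obstacle is the two-mode characterization, but once the displacement reduction is in place Appendix~C delivers it verbatim; the only subtlety is that the non-orthogonality of $\ket{\alpha_l}$ with $\ket{0}$ (and $\ket{\alpha_{l+1}}$ with $\ket{0}$) might have threatened the decoupling, but the Schmidt step relies only on the orthonormality of the $B$-side vectors $\{\ket{\eta_k}\}$, which we are free to enforce.
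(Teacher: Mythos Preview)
Your proposal is correct and follows essentially the same route as the paper: the paper also states that Lemma~\ref{lem:2mode} follows ``directly'' from the local equivalence of $\ket{\psi^{\text{GHZ-}2}}$ with $\ket{\psi^{\text{ECS}}_{\alpha_l,-\alpha_{l+1}}}$, i.e., by a one-mode displacement that recasts the two settings on modes $l,l+1$ as the pair $\Omega_1^{\text{ECS}},\Omega_2^{\text{ECS}}$ already analyzed in Appendix~C. The only cosmetic differences are that the paper displaces mode $l+1$ (via $\openone\otimes D(-\alpha_{l+1})$) rather than mode $l$, and that you make the Schmidt-side decoupling argument explicit, which the paper leaves implicit.
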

\begin{lemma}
  The eigenstates of $\Omega_{2l-1}^{\text{\rm GHZ-m}}$, $\Omega_{2l}^{\text{\rm GHZ-m}}$, $\Omega_{2l+1}^{\text{\rm GHZ-m}}$, and $\Omega_{2l+2}^{\text{\rm GHZ-m}}(\forall l=1,2,\cdots,m\!-\!2)$ are the superposition
  \begin{eqnarray}
    \ket{\phi_l}=x\ket{\alpha_{l}}\ket{\alpha_{l+1}}\ket{\alpha_{l+2}}\ket{\psi_1}+y\ket{0}\ket{0}\ket{0}\ket{\psi_2}\,,
  \end{eqnarray}
  where $x,y$ are the normalized superposition amplitudes, and \ket{\psi_1} and \ket{\psi_2} are the states in the subspace formed without the three modes $l,l+1$, and $l+2$.
\end{lemma}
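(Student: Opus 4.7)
The plan is to chain together two applications of Lemma~1. Suppose $\ket{\phi}$ is a common $+1$ eigenstate of $\Omega_{2l-1}^{\text{\rm GHZ-}m}$, $\Omega_{2l}^{\text{\rm GHZ-}m}$, $\Omega_{2l+1}^{\text{\rm GHZ-}m}$, and $\Omega_{2l+2}^{\text{\rm GHZ-}m}$. Applying Lemma~1 to the mode pair $(l,l+1)$ yields a decomposition
\begin{align*}
\ket{\phi}=\ket{\alpha_l}\ket{\alpha_{l+1}}\ket{A}+\ket{0}\ket{0}\ket{B},
\end{align*}
with $\ket{A},\ket{B}$ on the remaining $m-2$ modes (which include mode $l+2$). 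Applying Lemma~1 to $(l+1,l+2)$ yields a second decomposition
\begin{align*}
\ket{\phi}=\ket{C}\ket{\alpha_{l+1}}\ket{\alpha_{l+2}}+\ket{D}\ket{0}\ket{0},
\end{align*}
with $\ket{C},\ket{D}$ on the remaining $m-2$ modes (now including mode $l$).

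Next I would group both expressions by the vector carried on mode $l+1$. In each form the mode-$(l+1)$ factor is either $\ket{\alpha_{l+1}}$ or $\ket{0}$, and these two single-mode states are linearly independent whenever $\alpha_{l+1}\neq 0$. Treating the ``coefficients'' as vectors on the remaining $m-1$ modes, linear independence lets me match them term by term, giving
\begin{align*}
\ket{\alpha_l}\ket{A}=\ket{C}\ket{\alpha_{l+2}},\qquad \ket{0}\ket{B}=\ket{D}\ket{0}.
\end{align*}

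The final step is to extract the mode-$l$ and mode-$(l+2)$ structure from these two equalities. Expanding $\ket{A}=\sum_j\ket{j}_{l+2}\ket{a_j}$ and $\ket{C}=\sum_n\ket{n}_l\ket{c_n}$ in the Fock basis and comparing coefficients of $\ket{n}_l\ket{j}_{l+2}$ gives $\braket{n}{\alpha_l}\ket{a_j}=\braket{j}{\alpha_{l+2}}\ket{c_n}$; since $\braket{n}{\alpha_l}$ and $\braket{j}{\alpha_{l+2}}$ are nonzero for every $n,j$, the ratios $\ket{a_j}/\braket{j}{\alpha_{l+2}}$ and $\ket{c_n}/\braket{n}{\alpha_l}$ must be the same vector $\ket{\psi_1}$ on the remaining $m-3$ modes. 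Summing back over $j$ and $n$ yields $\ket{A}=\ket{\alpha_{l+2}}\ket{\psi_1}$ and $\ket{C}=\ket{\alpha_l}\ket{\psi_1}$. The same Fock-basis comparison applied to the second equality (now with $\ket{0}$ playing both roles) forces $\ket{B}=\ket{0}\ket{\psi_2}$ and $\ket{D}=\ket{0}\ket{\psi_2}$. Substituting back into either decomposition reproduces the claimed form with superposition amplitudes $x,y$.

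The main subtlety is that $\ket{0}$ and the coherent states are not orthogonal, so orthogonal projections cannot cleanly separate the two branches. Linear independence of $\{\ket{\alpha_{l+1}},\ket{0}\}$ on mode $l+1$ is nevertheless sufficient to justify the vector-coefficient matching above, and linear independence in the Fock basis expansions of $\ket{\alpha_l}$ and $\ket{\alpha_{l+2}}$ makes the final tensor-factor extraction routine, entirely parallel to the calculation in Appendix~C.
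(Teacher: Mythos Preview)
Your proof is correct and follows essentially the same strategy as the paper: apply Lemma~1 to the overlapping mode pairs $(l,l+1)$ and $(l+1,l+2)$, set the two decompositions equal, and match coefficients to force the three-mode structure. The only cosmetic difference is that the paper expands all three modes in the Fock basis at once and compares $\ket{n}\ket{m}\ket{p}$ term by term, whereas you first use linear independence of $\{\ket{0},\ket{\alpha_{l+1}}\}$ on the shared mode $l+1$ to split the two branches and then do Fock-basis matching only on modes $l$ and $l+2$; the two arguments are equivalent.
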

\begin{proof}
  For clarity, we assign the labels $\alpha,\beta,\gamma$ to the modes $\alpha_l,\alpha_{l+1},\alpha_{l+2}$ respectively.
  Following Lemma~\ref{lem:2mode}, we have the eigenstates for $\Bigl\{\Omega_{2l-1}^{\text{\rm GHZ-m}},\Omega_{2l}^{\text{\rm GHZ-m}}\Bigr\}$ and $\Bigl\{\Omega_{2(l+1)-1}^{\text{\rm GHZ-m}},\Omega_{2(l+1)}^{\text{\rm GHZ-m}}\Bigr\}$ given by
  \begin{eqnarray}
    \ket{\phi_1} &=& \sum_k A_k \ket{\alpha}\ket{\beta}\ket{k} + \sum_k B_k \ket{0}\ket{0}\ket{k}\,,\\
    \ket{\phi_2} &=& \sum_k C_k \ket{\alpha}\ket{k}\ket{\gamma} + \sum_k D_k \ket{0}\ket{k}\ket{0}\,,
  \end{eqnarray}
  where $A_k,B_k,C_k,D_k$ are the normalized superposition amplitudes.
  In order to get the eigenstate of all the four measurement settings, we let $\ket{\phi_1}=\ket{\phi_2}$
  and expand them under the Fock bases
  \begin{eqnarray}
    &&\sum_{n,m,p}
    A_p \braket{n}{\alpha} \braket{m}{\beta} \ket{n}\ket{m}\ket{p}
    + \sum_{p}
    B_p  \ket{0}\ket{0}\ket{p}\nonumber\\
    &=&
    \sum_{n,m,p}
    C_m \braket{n}{\alpha} \braket{p}{\gamma} \ket{n}\ket{m}\ket{p}
    + \sum_{m}
    D_m  \ket{0}\ket{m}\ket{0}\,.
  \end{eqnarray}
  Note that the same terms should have the same superposition amplitudes.
  To be specific, one has
  \begin{eqnarray}
  \begin{array}{lll}
    \left.
    \begin{array}{ll}
      \ket{0}\ket{0}\ket{p}\,, &(p\neq0)\\
      \ket{n}\ket{0}\ket{p}\,, &(n\neq0,p\neq0)
    \end{array}
    \right\}
    &\Rightarrow&
    B_p=0\,,\quad(p\neq0)\,,\\
    \left.
    \begin{array}{ll}
      \ket{0}\ket{m}\ket{0}\,, &(m\neq0)\\
      \ket{n}\ket{m}\ket{0}\,, &(n\neq0,m\neq0)
    \end{array}
    \right\}
    &\Rightarrow&
    D_m=0\,,\quad(m\neq0)\,,\\
    \left.
    \begin{array}{ll}
      \ket{n}\ket{m}\ket{p}\,, &(m\neq0,p\neq0)\\
      \ket{n}\ket{0}\ket{0}\,, &(n\neq0)
    \end{array}
    \right\}
    &\Rightarrow&
    A_p\braket{m}{\beta}=C_m\braket{p}{\gamma}\,,\quad(\forall m,p)\,,\\
    \left.
    \begin{array}{ll}
      \ket{0}\ket{0}\ket{0}\,, &\\
      \ket{n}\ket{0}\ket{0}\,, &(n\neq0)
    \end{array}
    \right\}
    &\Rightarrow&
    B_0=D_0\,.
  \end{array}
  \end{eqnarray}
  By defining $A_p/\braket{p}{\gamma}=C_m/\braket{m}{\beta}=T$ and $B_0=D_0=K$, we have
  \begin{eqnarray}
    \ket{\phi}
    = \ket{\phi_1}
    &=& \sum_{p=0}^\infty T\braket{p}{\gamma}\ket{\alpha}\ket{\beta}\ket{p} + K \ket{0}\ket{0}\ket{0}
    = T\ket{\alpha}\ket{\beta}\ket{\gamma} + K \ket{0}\ket{0}\ket{0}\nonumber\\
    &=& \sum_{m=0}^\infty T\braket{m}{\beta}\ket{\alpha}\ket{m}\ket{\gamma} + K \ket{0}\ket{0}\ket{0}
    =\ket{\phi_2}\,,
  \end{eqnarray}
  for the subspace $l,l+1,l+2$.
  Finally, we conclude that the eigenstate can be written as
  \begin{eqnarray}
    \ket{\phi}=T\ket{\alpha}\ket{\beta}\ket{\gamma}\ket{\psi_1}+K\ket{0}\ket{0}\ket{0}\ket{\psi_2}\,.
  \end{eqnarray}
\end{proof}

Therefore, we deduce that the eigenstates of all the $2(m\!-\!1)$ measurement settings of the verification protocol $\Omega^{\text{\rm GHZ-m}}$ are the superposition of the states $\bigotimes_{i=1}^{m}\ket{\alpha_i}$ and $\ket{0}^{\otimes m}$, of which we write as
\begin{eqnarray}
  \ket{\phi}= x\bigotimes_{i=1}^{m}\ket{\alpha_i} + y\ket{0}^{\otimes m}\,.
\end{eqnarray}
Then, we consider the last measurement setting $\Omega^{\text{\rm GHZ-m}}$, in which $\bigl(\pi^{\otimes m}\bigr)^{+}$ is the projector onto the eigenspace with eigenvalue 1 of the operator $\pi^{\otimes m}$.
Note that
\begin{eqnarray}
  \pi^{\pm}\ket{\alpha} &=& \frac{\sqrt{C_{\pm}}}{2}\ket{\pm_\alpha}=\frac{\ket{\alpha}\pm\ket{-\alpha}}{2}\,.
\end{eqnarray}
Considering the symmetry of $\bigl(\pi^{\otimes m}\bigr)^{+}$ that
\begin{eqnarray}
  \bigl(\pi^{\otimes m}\bigr)^{+}=\sum_{\substack{t_i\in\{0,1\}\\\sum t_i=\text{even}}} \bigotimes_{i=1}^{m} \pi^{t_i}\,,\quad(\text{here }\pi^{0}=\pi^{+},\pi^{1}=\pi^{-}\text{ for clarity})\,,
\end{eqnarray}
direct calculation shows
\begin{eqnarray}
  \Omega_{2m-1}^{\text{\rm GHZ-m}}\ket{\phi}
  &=& \biggl[\bigotimes_{i=1}^{m}D^{\dagger}(-\frac{\alpha_i}{2})\biggr]
  \bigl(\pi^{\otimes m}\bigr)^{+}
  \biggl[x\bigotimes_{i=1}^{m}\ket{\frac{\alpha_i}{2}}
  +y\bigotimes_{i=1}^{m}\ket{\frac{-\alpha_i}{2}}\biggr]
  \nonumber\\
  &=& \frac{x}{2^m}
  \sum_{ \substack{t_i\in\{0,1\}\\\sum t_i=\text{even}}}
  \bigotimes_{i=1}^{m}\Bigl[\ket{\alpha_i}+(-1)^{t_i}\ket{0}\Bigr]
  +\frac{y}{2^m}
  \sum_{ \substack{t_i\in\{0,1\}\\\sum t_i=\text{even}}}
  \bigotimes_{i=1}^{m}\Bigl[\ket{0}+(-1)^{t_i}\ket{\alpha_i}\Bigr]\nonumber\\
  &=& \frac{x}{2^m}
  \sum_{ \substack{t_i\in\{0,1\}\\\sum t_i=\text{even}}}
  \bigotimes_{i=1}^{m}\Bigl[\ket{\alpha_i}+(-1)^{t_i}\ket{0}\Bigr]
  +\frac{y}{2^m}
  \sum_{ \substack{t_i\in\{0,1\}\\\sum t_i=\text{even}}}
  (-1)^{\sum t_i}
  \bigotimes_{i=1}^{m}\Bigl[\ket{\alpha_i}+(-1)^{t_i}\ket{0}\Bigr]\nonumber\\
  &=& \frac{x+y}{2^m}
  \sum_{ \substack{t_i\in\{0,1\}\\\sum t_i=\text{even}}}
  \bigotimes_{i=1}^{m}\Bigl[\ket{\alpha_i}+(-1)^{t_i}\ket{0}\Bigr]\nonumber\\
  &=& \frac{x+y}{2} \Biggl(\bigotimes_{i=1}^{m}\ket{\alpha_i}+\ket{0}^{\otimes m}\Biggr).
\end{eqnarray}
With the constraint $\Omega_{2m-1}^{\text{\rm GHZ-m}}\ket{\phi}=\ket{\phi}$, we have $x=y$.
Then the eigenstate of all the measurement settings of the protocol $\Omega^{\text{\rm GHZ-m}}$ must be the coherent GHZ-like state
\begin{eqnarray}
  \ket{\phi} = \ket{\psi^{\text{\rm GHZ-}m}}=\frac1{\sqrt{C}}\Biggl(\bigotimes_{i=1}^{m}\ket{\alpha_i}+\ket{0}^{\otimes m}\Biggr).
\end{eqnarray}

\section{Appendix G: Verification of a different kind of GHZ-like state and its generalization}
Consider a different class of multimode entangled coherent state
\begin{eqnarray}\label{eq:GHZ2}
  \ket{{\psi^{\prime}}^{\text{\rm GHZ-}m}}=\frac1{\sqrt{C^{\prime}}}\Biggl(\bigotimes_{i=1}^{m}\ket{\alpha_i}-\ket{0}^{\otimes m}\Biggr),
\end{eqnarray}
where the normalization is $C^{\prime}=2\!\left[1-\E^{-\sum_{i}|\alpha_i|^2/2}\right]$.
Similar to the verification of \ket{{\psi}^{\text{\rm GHZ-}m}}, it is sufficient to verify \ket{{\psi^{\prime}}^{\text{\rm GHZ-}m}} with $(2m\!-\!1)$ measurement settings that
\begin{eqnarray}
  {\Omega'}_{2l-1}^{\text{\rm GHZ-}m}&=&\Omega_{2l-1}^{\text{\rm GHZ-}m}=
  \mathcal{P}_{l}\biggl\{\!
  \Bigl[B_{2l\!-\!1}^\dagger(\openone \!- \tau^{-}\!\otimes\tau^{-}) B_{2l\!-\!1}\Bigr]\!
  \otimes\openone^{\otimes (m-2)}\!
  \biggr\},\nonumber\\
  {\Omega'}_{2l}^{\text{\rm GHZ-}m}&=&\Omega_{2l}^{\text{\rm GHZ-}m}=
  \mathcal{P}_{l}\biggl\{\!
  \Bigl[B_{2l}^\dagger(\openone \!- \tau^{-}\!\otimes\tau^{-}) B_{2l}\Bigr]\!
  \otimes\openone^{\otimes (m-2)}\!
  \biggr\},\\
  {\Omega'}_{2m-1}^{\text{\rm GHZ-}m}&=&\Biggl[\bigotimes_{i=1}^{m}D^{\dagger}(-\frac{\alpha_i}{2})\Biggr]
  \bigl(\pi^{\otimes m}\bigr)^{-}
  \Biggl[\bigotimes_{i=1}^{m}D(-\frac{\alpha_i}{2})\Biggr],\nonumber
\end{eqnarray}
with $l=1,2,\cdots,m$, where $B_{2l-1}=D(-\alpha_l)\otimes\openone$ and $B_{2l}= \openone\otimes D(-\alpha_{l+1})$ are local operations.
The symbol $\mathcal{P}_{l}$ indicates the permutation that only the $l$ and $l+1$ modes are operated on for each setting.
See the following theorem with its proof.
\begin{theorem}\label{thm:GHZ2}
  The $m$-mode GHZ-like coherent states \ket{{\psi'}^{\text{\rm GHZ-m}}} can be verified efficiently by the protocol
  \begin{eqnarray}
    {\Omega'}^{\text{\rm GHZ-}m}=\sum_{l=1}^{2m-1}\mu_l{\Omega'}_l^{\text{\rm GHZ-}m}\,,
  \end{eqnarray}
  where the probability distribution $\{\mu_l\}$ is arbitrary.
  An optimal verification efficiency can be obtained by optimizing $\{\mu_l\}$ under specific scenarios as constrained by Eq.~\eqref{eq:v_opt}.
\end{theorem}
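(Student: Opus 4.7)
The plan is to mirror the structure of the proof of Theorem~\ref{thm:GHZ} in Appendix~F, exploiting the fact that the first $2(m-1)$ measurement settings of $\Omega'^{\text{\rm GHZ-}m}$ coincide exactly with those of $\Omega^{\text{\rm GHZ-}m}$. Only the final setting differs, replacing the projector onto the $+1$ eigenspace of $\pi^{\otimes m}$ by the projector onto the $-1$ eigenspace, so the argument naturally splits into two parts: a reuse step and a single new calculation.

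First I would invoke directly Lemma~\ref{lem:2mode} and its three-mode inductive extension (both already established in Appendix~F), which depend only on the settings ${\Omega'_{2l-1}}^{\text{\rm GHZ-}m}={\Omega_{2l-1}}^{\text{\rm GHZ-}m}$ and ${\Omega'_{2l}}^{\text{\rm GHZ-}m}={\Omega_{2l}}^{\text{\rm GHZ-}m}$. Iterating the two-mode coincidence argument across the $m$ modes then forces any simultaneous $+1$ eigenstate of these $2(m-1)$ settings to take the form
\begin{equation}
\ket{\phi} \;=\; x\bigotimes_{i=1}^{m}\ket{\alpha_i} \;+\; y\,\ket{0}^{\otimes m},
\end{equation}
with $x,y\in\mathbb{C}$ free. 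This reduces the proof to showing that the remaining setting ${\Omega'}_{2m-1}^{\text{\rm GHZ-}m}$ selects precisely the antisymmetric combination $x=-y$.

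Next I would compute the action of ${\Omega'}_{2m-1}^{\text{\rm GHZ-}m}$ on $\ket{\phi}$. The displacements $\bigotimes D(-\alpha_i/2)$ map $\ket{\phi}$ to $x\bigotimes\ket{\alpha_i/2}+y\bigotimes\ket{-\alpha_i/2}$ (up to a global phase), and then I would decompose $\bigl(\pi^{\otimes m}\bigr)^{-}=\sum_{\vec{t}\in\{0,1\}^m,\,\sum t_i\,\text{odd}}\bigotimes_i\pi^{t_i}$. Following exactly the same algebra as in the plus-case, the crucial change is that the identity $\pi^{t_i}\ket{-\alpha_i/2}=(-1)^{t_i}\pi^{t_i}\ket{\alpha_i/2}$ produces a factor $(-1)^{\sum t_i}=-1$ on the $y$-branch, so the two branches combine with coefficient $x-y$ instead of $x+y$. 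A short combinatorial step (expanding $\bigotimes[\ket{\alpha_i}+(-1)^{t_i}\ket{0}]$ in the basis $\{\ket{\alpha_i},\ket{0}\}^{\otimes m}$ and observing that for $\sum t_i$ odd, every mixed pattern $\vec{s}\notin\{\vec{0},\vec{1}\}$ cancels while $\vec{s}=\vec{0}$ and $\vec{s}=\vec{1}$ contribute with opposite signs) collapses the sum to $2^{m-1}\bigl(\bigotimes\ket{\alpha_i}-\ket{0}^{\otimes m}\bigr)$, yielding
\begin{equation}
{\Omega'}_{2m-1}^{\text{\rm GHZ-}m}\ket{\phi} \;=\; \frac{x-y}{2}\Biggl(\bigotimes_{i=1}^{m}\ket{\alpha_i}-\ket{0}^{\otimes m}\Biggr).
\end{equation}
Imposing ${\Omega'}_{2m-1}^{\text{\rm GHZ-}m}\ket{\phi}=\ket{\phi}$ and matching coefficients on the two basis branches forces $x=-y$, so after normalization $\ket{\phi}=\ket{{\psi'}^{\text{\rm GHZ-}m}}$, which proves that the three measurement families jointly have a unique $+1$ eigenstate and hence verify the state; efficiency then follows from Eq.~\eqref{eq:v_opt} by the same optimization argument as before.

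The main obstacle I anticipate is the bookkeeping for the combinatorial cancellation in the last step: one has to verify carefully that the partial sums over $\vec{t}$ with fixed Hamming weight $k=|\vec{s}|$ vanish for $0<k<m$, and only the extremal patterns $\vec{s}=\vec{0}$ and $\vec{s}=\vec{1}$ survive. This is where the sign $(-1)^{\sum t_i}=-1$ from the minus-projector changes the output from symmetric to antisymmetric, so getting the signs and phase conventions from the displacement operators consistent is what needs the most care; the rest of the proof is a structural rerun of Appendix~F.
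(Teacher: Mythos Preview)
Your proposal is correct and follows essentially the same approach as the paper's own proof: reuse the $2(m-1)$-setting argument from Appendix~F to reduce to $\ket{\phi}=x\bigotimes_i\ket{\alpha_i}+y\ket{0}^{\otimes m}$, then expand $(\pi^{\otimes m})^{-}$ over odd-weight $\vec t$ to obtain ${\Omega'}_{2m-1}^{\text{GHZ-}m}\ket{\phi}=\tfrac{x-y}{2}\bigl(\bigotimes_i\ket{\alpha_i}-\ket{0}^{\otimes m}\bigr)$ and conclude $y=-x$. Your explicit combinatorial justification of the cancellation for $0<|\vec s|<m$ is slightly more detailed than the paper's, which simply asserts the final collapsed form.
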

\begin{proof}
  Same as the proof in Appendix F, the eigenstates of all the $2(m\!-\!1)$ measurement settings ${\Omega'}_{2l-1}^{\text{\rm GHZ-}m}$ and ${\Omega'}_{2l}^{\text{\rm GHZ-}m}$ take the form
  \begin{eqnarray}
    \ket{\phi}= x\bigotimes_{i=1}^{m}\ket{\alpha_i} + y\ket{0}^{\otimes m}\,.
  \end{eqnarray}
  Considering the symmetry of the parity measurement
  \begin{eqnarray}
    \bigl(\pi^{\otimes m}\bigr)^{-}=\sum_{\substack{t_i\in\{0,1\}\\\sum t_i=\text{odd}}} \bigotimes_{i=1}^{m} \pi^{t_i}\,,\quad(\text{here }\pi^{0}=\pi^{+},\pi^{1}=\pi^{-}\text{ for clarity})\,,
  \end{eqnarray}
  we have
  \begin{eqnarray}
    {\Omega'}_{2m-1}^{\text{\rm GHZ-m}}\ket{\phi}
    &=& \biggl[\bigotimes_{i=1}^{m}D^{\dagger}(-\frac{\alpha_i}{2})\biggr]
    \bigl(\pi^{\otimes m}\bigr)^{-}
    \biggl[x\bigotimes_{i=1}^{m}\ket{\frac{\alpha_i}{2}}
    +y\bigotimes_{i=1}^{m}\ket{\frac{-\alpha_i}{2}}\biggr]
    \nonumber\\
    &=& \frac{x}{2^m}
    \sum_{ \substack{t_i\in\{0,1\}\\\sum t_i=\text{odd}}}
    \bigotimes_{i=1}^{m}\Bigl[\ket{\alpha_i}+(-1)^{t_i}\ket{0}\Bigr]
    +\frac{y}{2^m}
    \sum_{ \substack{t_i\in\{0,1\}\\\sum t_i=\text{odd}}}
    (-1)^{\sum t_i}
    \bigotimes_{i=1}^{m}\Bigl[\ket{\alpha_i}+(-1)^{t_i}\ket{0}\Bigr]\nonumber\\
    &=& \frac{x-y}{2^m}
    \sum_{ \substack{t_i\in\{0,1\}\\\sum t_i=\text{odd}}}
    \bigotimes_{i=1}^{m}\Bigl[\ket{\alpha_i}+(-1)^{t_i}\ket{0}\Bigr]\nonumber\\
    &=& \frac{x-y}{2} \Biggl(\bigotimes_{i=1}^{m}\ket{\alpha_i}-\ket{0}^{\otimes m}\Biggr).
  \end{eqnarray}
  With the constraint ${\Omega'}_{2m-1}^{\text{\rm GHZ-m}}\ket{\phi}=\ket{\phi}$, we have $y=-x$.
  Hence, the eigenstate of all the $(2m\!-\!1)$ measurement settings is the multimode entangled coherent state
  \begin{eqnarray}
    \ket{\phi}=\ket{{\psi^{\prime}}^{\text{\rm GHZ-}m}}
    =\frac1{\sqrt{C^{\prime}}}\Biggl(\bigotimes_{i=1}^{m}\ket{\alpha_i}-\ket{0}^{\otimes m}\Biggr).
  \end{eqnarray}
\end{proof}

Furthermore, a class of multimode entangled coherent state with the general form
\begin{eqnarray}\label{eq:GGHZ2}
  \ket{\Tilde{\psi'}^{\text{\rm GHZ-}m}}
  =\frac1{\sqrt{\tilde{C'}}}\Biggl(\bigotimes_{i=1}^{m}\ket{\alpha_i}-\bigotimes_{i=1}^{m}\ket{\beta_i}\Biggr)
\end{eqnarray}
is locally equivalent to \ket{{\psi'}^{\text{\rm GHZ-}m}} whenever the constraint
\begin{eqnarray}
  \sum_i\Im(\alpha_i\beta_i^{*})=2n\pi\,, \quad(n\in\mathbb{Z})
\end{eqnarray}
is satisfied.
The normalization is given by $\tilde{C'}=2-\exp\bigl[-\sum(|\alpha_i|^2+|\beta_i|^2)/2\bigr]\bigl[\exp(\sum\alpha_i^{*}\beta_i)+\exp(\sum\alpha_i\beta_i^{*})\bigr]$.
Moreover, \ket{\Tilde{\psi'}^{\text{\rm GHZ-}m}} is locally equivalent to \ket{{\psi}^{\text{\rm GHZ-}m}} whenever the constraint
\begin{eqnarray}
  \sum_i\Im(\alpha_i\beta_i^{*})=(2n+1)\pi\,, \quad(n\in\mathbb{Z})
\end{eqnarray}
is satisfied.
Hence, we conclude that the state \ket{\Tilde{\psi'}^{\text{\rm GHZ-}m}} can be verified by ${\Omega}^{\text{\rm GHZ-}m}$ or ${\Omega'}^{\text{\rm GHZ-}m}$ under proper local displacement operations whenever we have
\begin{eqnarray}
  \sum_i\Im(\alpha_i\beta_i^{*})=n\pi\,, \quad(n\in\mathbb{Z})\,.
\end{eqnarray}
Similarly, the constraints for verifying \ket{\Tilde{\psi}^{\text{\rm GHZ-}m}} as in Eq.~\eqref{eq:GGHZ} can be relaxed as the same.


\begin{thebibliography}{72}%
\makeatletter
\providecommand \@ifxundefined [1]{%
 \@ifx{#1\undefined}
}%
\providecommand \@ifnum [1]{%
 \ifnum #1\expandafter \@firstoftwo
 \else \expandafter \@secondoftwo
 \fi
}%
\providecommand \@ifx [1]{%
 \ifx #1\expandafter \@firstoftwo
 \else \expandafter \@secondoftwo
 \fi
}%
\providecommand \natexlab [1]{#1}%
\providecommand \enquote  [1]{``#1''}%
\providecommand \bibnamefont  [1]{#1}%
\providecommand \bibfnamefont [1]{#1}%
\providecommand \citenamefont [1]{#1}%
\providecommand \href@noop [0]{\@secondoftwo}%
\providecommand \href [0]{\begingroup \@sanitize@url \@href}%
\providecommand \@href[1]{\@@startlink{#1}\@@href}%
\providecommand \@@href[1]{\endgroup#1\@@endlink}%
\providecommand \@sanitize@url [0]{\catcode `\\12\catcode `\$12\catcode
  `\&12\catcode `\#12\catcode `\^12\catcode `\_12\catcode `\%12\relax}%
\providecommand \@@startlink[1]{}%
\providecommand \@@endlink[0]{}%
\providecommand \url  [0]{\begingroup\@sanitize@url \@url }%
\providecommand \@url [1]{\endgroup\@href {#1}{\urlprefix }}%
\providecommand \urlprefix  [0]{URL }%
\providecommand \Eprint [0]{\href }%
\providecommand \doibase [0]{https://doi.org/}%
\providecommand \selectlanguage [0]{\@gobble}%
\providecommand \bibinfo  [0]{\@secondoftwo}%
\providecommand \bibfield  [0]{\@secondoftwo}%
\providecommand \translation [1]{[#1]}%
\providecommand \BibitemOpen [0]{}%
\providecommand \bibitemStop [0]{}%
\providecommand \bibitemNoStop [0]{.\EOS\space}%
\providecommand \EOS [0]{\spacefactor3000\relax}%
\providecommand \BibitemShut  [1]{\csname bibitem#1\endcsname}%
\let\auto@bib@innerbib\@empty
\bibitem [{\citenamefont {Braunstein}\ and\ \citenamefont {van
  Loock}(2005)}]{Braunstein.etal2005}%
  \BibitemOpen
  \bibfield  {author} {\bibinfo {author} {\bibfnamefont {S.~L.}\ \bibnamefont
  {Braunstein}}\ and\ \bibinfo {author} {\bibfnamefont {P.}~\bibnamefont {van
  Loock}},\ }\bibfield  {title} {\bibinfo {title} {Quantum information with
  continuous variables},\ }\href {https://doi.org/10.1103/RevModPhys.77.513}
  {\bibfield  {journal} {\bibinfo  {journal} {Rev. Mod. Phys.}\ }\textbf
  {\bibinfo {volume} {77}},\ \bibinfo {pages} {513} (\bibinfo {year}
  {2005})}\BibitemShut {NoStop}%
\bibitem [{\citenamefont {Weedbrook}\ \emph {et~al.}(2012)\citenamefont
  {Weedbrook}, \citenamefont {Pirandola}, \citenamefont {Garc\'{\i}a-Patr\'on},
  \citenamefont {Cerf}, \citenamefont {Ralph}, \citenamefont {Shapiro},\ and\
  \citenamefont {Lloyd}}]{Weedbrook.etal2012}%
  \BibitemOpen
  \bibfield  {author} {\bibinfo {author} {\bibfnamefont {C.}~\bibnamefont
  {Weedbrook}}, \bibinfo {author} {\bibfnamefont {S.}~\bibnamefont
  {Pirandola}}, \bibinfo {author} {\bibfnamefont {R.}~\bibnamefont
  {Garc\'{\i}a-Patr\'on}}, \bibinfo {author} {\bibfnamefont {N.~J.}\
  \bibnamefont {Cerf}}, \bibinfo {author} {\bibfnamefont {T.~C.}\ \bibnamefont
  {Ralph}}, \bibinfo {author} {\bibfnamefont {J.~H.}\ \bibnamefont {Shapiro}},\
  and\ \bibinfo {author} {\bibfnamefont {S.}~\bibnamefont {Lloyd}},\ }\bibfield
   {title} {\bibinfo {title} {Gaussian quantum information},\ }\href
  {https://doi.org/10.1103/RevModPhys.84.621} {\bibfield  {journal} {\bibinfo
  {journal} {Rev. Mod. Phys.}\ }\textbf {\bibinfo {volume} {84}},\ \bibinfo
  {pages} {621} (\bibinfo {year} {2012})}\BibitemShut {NoStop}%
\bibitem [{\citenamefont {D'Ariano}\ \emph {et~al.}(1994)\citenamefont
  {D'Ariano}, \citenamefont {Macchiavello},\ and\ \citenamefont
  {Paris}}]{DAriano.etal1994}%
  \BibitemOpen
  \bibfield  {author} {\bibinfo {author} {\bibfnamefont {G.~M.}\ \bibnamefont
  {D'Ariano}}, \bibinfo {author} {\bibfnamefont {C.}~\bibnamefont
  {Macchiavello}},\ and\ \bibinfo {author} {\bibfnamefont {M.~G.~A.}\
  \bibnamefont {Paris}},\ }\bibfield  {title} {\bibinfo {title} {Detection of
  the density matrix through optical homodyne tomography without filtered back
  projection},\ }\href {https://doi.org/10.1103/PhysRevA.50.4298} {\bibfield
  {journal} {\bibinfo  {journal} {Phys. Rev. A}\ }\textbf {\bibinfo {volume}
  {50}},\ \bibinfo {pages} {4298} (\bibinfo {year} {1994})}\BibitemShut
  {NoStop}%
\bibitem [{\citenamefont {Leonhardt}(1997)}]{Leonhardt1997}%
  \BibitemOpen
  \bibfield  {author} {\bibinfo {author} {\bibfnamefont {U.}~\bibnamefont
  {Leonhardt}},\ }\href@noop {} {\emph {\bibinfo {title} {Measuring the quantum
  state of light}}},\ Vol.~\bibinfo {volume} {22}\ (\bibinfo  {publisher}
  {Cambridge University Press},\ \bibinfo {year} {1997})\BibitemShut {NoStop}%
\bibitem [{\citenamefont {D'Ariano}\ \emph {et~al.}(2003)\citenamefont
  {D'Ariano}, \citenamefont {Paris},\ and\ \citenamefont
  {Sacchi}}]{DAriano.etal2003}%
  \BibitemOpen
  \bibfield  {author} {\bibinfo {author} {\bibfnamefont {G.~M.}\ \bibnamefont
  {D'Ariano}}, \bibinfo {author} {\bibfnamefont {M.~G.~A.}\ \bibnamefont
  {Paris}},\ and\ \bibinfo {author} {\bibfnamefont {M.~F.}\ \bibnamefont
  {Sacchi}},\ }\bibfield  {title} {\bibinfo {title} {Quantum tomography},\
  }\href@noop {} {\bibfield  {journal} {\bibinfo  {journal} {Adv. Imag. Elect.
  Phys.}\ }\textbf {\bibinfo {volume} {128}},\ \bibinfo {pages} {205} (\bibinfo
  {year} {2003})}\BibitemShut {NoStop}%
\bibitem [{\citenamefont {Gu{\c{t}}{\u{a}}}\ and\ \citenamefont
  {Artiles}(2007)}]{Guta.etal2007}%
  \BibitemOpen
  \bibfield  {author} {\bibinfo {author} {\bibfnamefont {M.}~\bibnamefont
  {Gu{\c{t}}{\u{a}}}}\ and\ \bibinfo {author} {\bibfnamefont {L.}~\bibnamefont
  {Artiles}},\ }\bibfield  {title} {\bibinfo {title} {Minimax estimation of the
  {W}igner function in quantum homodyne tomography with ideal detectors},\
  }\href {https://doi.org/10.3103/S1066530707010012} {\bibfield  {journal}
  {\bibinfo  {journal} {Math. Meth. Stat.}\ }\textbf {\bibinfo {volume} {16}},\
  \bibinfo {pages} {1} (\bibinfo {year} {2007})}\BibitemShut {NoStop}%
\bibitem [{\citenamefont {Glancy}\ and\ \citenamefont
  {de~Vasconcelos}(2008)}]{Glancy.etal2008}%
  \BibitemOpen
  \bibfield  {author} {\bibinfo {author} {\bibfnamefont {S.}~\bibnamefont
  {Glancy}}\ and\ \bibinfo {author} {\bibfnamefont {H.~M.}\ \bibnamefont
  {de~Vasconcelos}},\ }\bibfield  {title} {\bibinfo {title} {Methods for
  producing optical coherent state superpositions},\ }\href
  {https://doi.org/10.1364/JOSAB.25.000712} {\bibfield  {journal} {\bibinfo
  {journal} {J. Opt. Soc. Am. B}\ }\textbf {\bibinfo {volume} {25}},\ \bibinfo
  {pages} {712} (\bibinfo {year} {2008})}\BibitemShut {NoStop}%
\bibitem [{\citenamefont {Lvovsky}\ and\ \citenamefont
  {Raymer}(2009)}]{Lvovsky.etal2009}%
  \BibitemOpen
  \bibfield  {author} {\bibinfo {author} {\bibfnamefont {A.~I.}\ \bibnamefont
  {Lvovsky}}\ and\ \bibinfo {author} {\bibfnamefont {M.~G.}\ \bibnamefont
  {Raymer}},\ }\bibfield  {title} {\bibinfo {title} {Continuous-variable
  optical quantum-state tomography},\ }\href
  {https://doi.org/10.1103/RevModPhys.81.299} {\bibfield  {journal} {\bibinfo
  {journal} {Rev. Mod. Phys.}\ }\textbf {\bibinfo {volume} {81}},\ \bibinfo
  {pages} {299} (\bibinfo {year} {2009})}\BibitemShut {NoStop}%
\bibitem [{\citenamefont {Donaldson}\ \emph {et~al.}(2015)\citenamefont
  {Donaldson}, \citenamefont {Collins}, \citenamefont {Eleftheriadou},
  \citenamefont {Barnett}, \citenamefont {Jeffers},\ and\ \citenamefont
  {Buller}}]{Donaldson.etal2015}%
  \BibitemOpen
  \bibfield  {author} {\bibinfo {author} {\bibfnamefont {R.~J.}\ \bibnamefont
  {Donaldson}}, \bibinfo {author} {\bibfnamefont {R.~J.}\ \bibnamefont
  {Collins}}, \bibinfo {author} {\bibfnamefont {E.}~\bibnamefont
  {Eleftheriadou}}, \bibinfo {author} {\bibfnamefont {S.~M.}\ \bibnamefont
  {Barnett}}, \bibinfo {author} {\bibfnamefont {J.}~\bibnamefont {Jeffers}},\
  and\ \bibinfo {author} {\bibfnamefont {G.~S.}\ \bibnamefont {Buller}},\
  }\bibfield  {title} {\bibinfo {title} {Experimental implementation of a
  quantum optical state comparison amplifier},\ }\href
  {https://doi.org/10.1103/PhysRevLett.114.120505} {\bibfield  {journal}
  {\bibinfo  {journal} {Phys. Rev. Lett.}\ }\textbf {\bibinfo {volume} {114}},\
  \bibinfo {pages} {120505} (\bibinfo {year} {2015})}\BibitemShut {NoStop}%
\bibitem [{\citenamefont {Aolita}\ \emph {et~al.}(2015)\citenamefont {Aolita},
  \citenamefont {Gogolin}, \citenamefont {Kliesch},\ and\ \citenamefont
  {Eisert}}]{Aolita.etal2015}%
  \BibitemOpen
  \bibfield  {author} {\bibinfo {author} {\bibfnamefont {L.}~\bibnamefont
  {Aolita}}, \bibinfo {author} {\bibfnamefont {C.}~\bibnamefont {Gogolin}},
  \bibinfo {author} {\bibfnamefont {M.}~\bibnamefont {Kliesch}},\ and\ \bibinfo
  {author} {\bibfnamefont {J.}~\bibnamefont {Eisert}},\ }\bibfield  {title}
  {\bibinfo {title} {Reliable quantum certification of photonic state
  preparations},\ }\href {https://doi.org/10.1038/ncomms9498} {\bibfield
  {journal} {\bibinfo  {journal} {Nat. Commun.}\ }\textbf {\bibinfo {volume}
  {6}},\ \bibinfo {pages} {8498} (\bibinfo {year} {2015})}\BibitemShut
  {NoStop}%
\bibitem [{\citenamefont {Liu}\ \emph {et~al.}(2019{\natexlab{a}})\citenamefont
  {Liu}, \citenamefont {Demarie}, \citenamefont {Tan}, \citenamefont {Aolita},\
  and\ \citenamefont {Fitzsimons}}]{LiuNana.etal2019}%
  \BibitemOpen
  \bibfield  {author} {\bibinfo {author} {\bibfnamefont {N.}~\bibnamefont
  {Liu}}, \bibinfo {author} {\bibfnamefont {T.~F.}\ \bibnamefont {Demarie}},
  \bibinfo {author} {\bibfnamefont {S.-H.}\ \bibnamefont {Tan}}, \bibinfo
  {author} {\bibfnamefont {L.}~\bibnamefont {Aolita}},\ and\ \bibinfo {author}
  {\bibfnamefont {J.~F.}\ \bibnamefont {Fitzsimons}},\ }\bibfield  {title}
  {\bibinfo {title} {Client-friendly continuous-variable blind and verifiable
  quantum computing},\ }\href {https://doi.org/10.1103/PhysRevA.100.062309}
  {\bibfield  {journal} {\bibinfo  {journal} {Phys. Rev. A}\ }\textbf {\bibinfo
  {volume} {100}},\ \bibinfo {pages} {062309} (\bibinfo {year}
  {2019}{\natexlab{a}})}\BibitemShut {NoStop}%
\bibitem [{\citenamefont {{U. Chabaud, T. Douce, F. Grosshans, E. Kashefi, and
  D. Markham}}()}]{Chabaud.etal2019}%
  \BibitemOpen
  \bibfield  {author} {\bibinfo {author} {\bibnamefont {{U. Chabaud, T. Douce,
  F. Grosshans, E. Kashefi, and D. Markham}}},\ }\bibfield  {title} {\bibinfo
  {title} {Building trust for continuous variable quantum states},\ }\href@noop
  {} {\ }\Eprint {https://arxiv.org/abs/1905.12700} {arXiv:1905.12700}
  \BibitemShut {NoStop}%
\bibitem [{\citenamefont {{U. Chabaud, F. Grosshans, E. Kashefi, and D.
  Markham}}()}]{Chabaud.etal2020a}%
  \BibitemOpen
  \bibfield  {author} {\bibinfo {author} {\bibnamefont {{U. Chabaud, F.
  Grosshans, E. Kashefi, and D. Markham}}},\ }\bibfield  {title} {\bibinfo
  {title} {Efficient verification of {B}oson sampling},\ }\href@noop {} {\
  }\Eprint {https://arxiv.org/abs/2006.03520} {arXiv:2006.03520} \BibitemShut
  {NoStop}%
\bibitem [{\citenamefont {Chabaud}\ \emph {et~al.}(2021)\citenamefont
  {Chabaud}, \citenamefont {Roeland}, \citenamefont {Walschaers}, \citenamefont
  {Grosshans}, \citenamefont {Parigi}, \citenamefont {Markham},\ and\
  \citenamefont {Treps}}]{Chabaud.etal2021}%
  \BibitemOpen
  \bibfield  {author} {\bibinfo {author} {\bibfnamefont {U.}~\bibnamefont
  {Chabaud}}, \bibinfo {author} {\bibfnamefont {G.}~\bibnamefont {Roeland}},
  \bibinfo {author} {\bibfnamefont {M.}~\bibnamefont {Walschaers}}, \bibinfo
  {author} {\bibfnamefont {F.}~\bibnamefont {Grosshans}}, \bibinfo {author}
  {\bibfnamefont {V.}~\bibnamefont {Parigi}}, \bibinfo {author} {\bibfnamefont
  {D.}~\bibnamefont {Markham}},\ and\ \bibinfo {author} {\bibfnamefont
  {N.}~\bibnamefont {Treps}},\ }\bibfield  {title} {\bibinfo {title}
  {Certification of non-{G}aussian states with operational measurements},\
  }\href {https://doi.org/10.1103/PRXQuantum.2.020333} {\bibfield  {journal}
  {\bibinfo  {journal} {PRX Quantum}\ }\textbf {\bibinfo {volume} {2}},\
  \bibinfo {pages} {020333} (\bibinfo {year} {2021})}\BibitemShut {NoStop}%
\bibitem [{\citenamefont {Takeuchi}\ \emph {et~al.}(2019)\citenamefont
  {Takeuchi}, \citenamefont {Mantri}, \citenamefont {Morimae}, \citenamefont
  {Mizutani},\ and\ \citenamefont {Fitzsimons}}]{Takeuchi.etal2019}%
  \BibitemOpen
  \bibfield  {author} {\bibinfo {author} {\bibfnamefont {Y.}~\bibnamefont
  {Takeuchi}}, \bibinfo {author} {\bibfnamefont {A.}~\bibnamefont {Mantri}},
  \bibinfo {author} {\bibfnamefont {T.}~\bibnamefont {Morimae}}, \bibinfo
  {author} {\bibfnamefont {A.}~\bibnamefont {Mizutani}},\ and\ \bibinfo
  {author} {\bibfnamefont {J.~F.}\ \bibnamefont {Fitzsimons}},\ }\bibfield
  {title} {\bibinfo {title} {Resource-efficient verification of quantum
  computing using {S}erfling’s bound},\ }\href
  {https://doi.org/10.1038/s41534-019-0142-2} {\bibfield  {journal} {\bibinfo
  {journal} {npj Quantum Inf.}\ }\textbf {\bibinfo {volume} {5}},\ \bibinfo
  {pages} {27} (\bibinfo {year} {2019})}\BibitemShut {NoStop}%
\bibitem [{\citenamefont {Hayashi}\ \emph {et~al.}(2006)\citenamefont
  {Hayashi}, \citenamefont {Matsumoto},\ and\ \citenamefont
  {Tsuda}}]{Hayashi.etal2006}%
  \BibitemOpen
  \bibfield  {author} {\bibinfo {author} {\bibfnamefont {M.}~\bibnamefont
  {Hayashi}}, \bibinfo {author} {\bibfnamefont {K.}~\bibnamefont {Matsumoto}},\
  and\ \bibinfo {author} {\bibfnamefont {Y.}~\bibnamefont {Tsuda}},\ }\bibfield
   {title} {\bibinfo {title} {A study of {LOCC}-detection of a maximally
  entangled state using hypothesis testing},\ }\href
  {https://doi.org/10.1088/0305-4470/39/46/013} {\bibfield  {journal} {\bibinfo
   {journal} {J. Phys. A: Math. Gen.}\ }\textbf {\bibinfo {volume} {39}},\
  \bibinfo {pages} {14427} (\bibinfo {year} {2006})}\BibitemShut {NoStop}%
\bibitem [{\citenamefont {Pallister}\ \emph {et~al.}(2018)\citenamefont
  {Pallister}, \citenamefont {Linden},\ and\ \citenamefont
  {Montanaro}}]{Pallister.etal2018}%
  \BibitemOpen
  \bibfield  {author} {\bibinfo {author} {\bibfnamefont {S.}~\bibnamefont
  {Pallister}}, \bibinfo {author} {\bibfnamefont {N.}~\bibnamefont {Linden}},\
  and\ \bibinfo {author} {\bibfnamefont {A.}~\bibnamefont {Montanaro}},\
  }\bibfield  {title} {\bibinfo {title} {Optimal verification of entangled
  states with local measurements},\ }\href
  {https://doi.org/10.1103/PhysRevLett.120.170502} {\bibfield  {journal}
  {\bibinfo  {journal} {Phys. Rev. Lett.}\ }\textbf {\bibinfo {volume} {120}},\
  \bibinfo {pages} {170502} (\bibinfo {year} {2018})}\BibitemShut {NoStop}%
\bibitem [{\citenamefont {Morimae}\ \emph {et~al.}(2017)\citenamefont
  {Morimae}, \citenamefont {Takeuchi},\ and\ \citenamefont
  {Hayashi}}]{Morimae.etal2017}%
  \BibitemOpen
  \bibfield  {author} {\bibinfo {author} {\bibfnamefont {T.}~\bibnamefont
  {Morimae}}, \bibinfo {author} {\bibfnamefont {Y.}~\bibnamefont {Takeuchi}},\
  and\ \bibinfo {author} {\bibfnamefont {M.}~\bibnamefont {Hayashi}},\
  }\bibfield  {title} {\bibinfo {title} {Verification of hypergraph states},\
  }\href {https://doi.org/10.1103/PhysRevA.96.062321} {\bibfield  {journal}
  {\bibinfo  {journal} {Phys. Rev. A}\ }\textbf {\bibinfo {volume} {96}},\
  \bibinfo {pages} {062321} (\bibinfo {year} {2017})}\BibitemShut {NoStop}%
\bibitem [{\citenamefont {Takeuchi}\ and\ \citenamefont
  {Morimae}(2018)}]{Takeuchi.Morimae2018}%
  \BibitemOpen
  \bibfield  {author} {\bibinfo {author} {\bibfnamefont {Y.}~\bibnamefont
  {Takeuchi}}\ and\ \bibinfo {author} {\bibfnamefont {T.}~\bibnamefont
  {Morimae}},\ }\bibfield  {title} {\bibinfo {title} {Verification of
  many-qubit states},\ }\href {https://doi.org/10.1103/PhysRevX.8.021060}
  {\bibfield  {journal} {\bibinfo  {journal} {Phys. Rev. X}\ }\textbf {\bibinfo
  {volume} {8}},\ \bibinfo {pages} {021060} (\bibinfo {year}
  {2018})}\BibitemShut {NoStop}%
\bibitem [{\citenamefont {Yu}\ \emph {et~al.}(2019)\citenamefont {Yu},
  \citenamefont {Shang},\ and\ \citenamefont {G\"uhne}}]{Yu.etal2019}%
  \BibitemOpen
  \bibfield  {author} {\bibinfo {author} {\bibfnamefont {X.-D.}\ \bibnamefont
  {Yu}}, \bibinfo {author} {\bibfnamefont {J.}~\bibnamefont {Shang}},\ and\
  \bibinfo {author} {\bibfnamefont {O.}~\bibnamefont {G\"uhne}},\ }\bibfield
  {title} {\bibinfo {title} {Optimal verification of general bipartite pure
  states},\ }\href {https://doi.org/10.1038/s41534-019-0226-z} {\bibfield
  {journal} {\bibinfo  {journal} {npj Quantum Inf.}\ }\textbf {\bibinfo
  {volume} {5}},\ \bibinfo {pages} {112} (\bibinfo {year} {2019})}\BibitemShut
  {NoStop}%
\bibitem [{\citenamefont {Li}\ \emph {et~al.}(2019)\citenamefont {Li},
  \citenamefont {Han},\ and\ \citenamefont {Zhu}}]{Li.etal2019}%
  \BibitemOpen
  \bibfield  {author} {\bibinfo {author} {\bibfnamefont {Z.}~\bibnamefont
  {Li}}, \bibinfo {author} {\bibfnamefont {Y.-G.}\ \bibnamefont {Han}},\ and\
  \bibinfo {author} {\bibfnamefont {H.}~\bibnamefont {Zhu}},\ }\bibfield
  {title} {\bibinfo {title} {Efficient verification of bipartite pure states},\
  }\href {https://doi.org/10.1103/PhysRevA.100.032316} {\bibfield  {journal}
  {\bibinfo  {journal} {Phys. Rev. A}\ }\textbf {\bibinfo {volume} {100}},\
  \bibinfo {pages} {032316} (\bibinfo {year} {2019})}\BibitemShut {NoStop}%
\bibitem [{\citenamefont {Wang}\ and\ \citenamefont
  {Hayashi}(2019)}]{Wang.Hayashi2019}%
  \BibitemOpen
  \bibfield  {author} {\bibinfo {author} {\bibfnamefont {K.}~\bibnamefont
  {Wang}}\ and\ \bibinfo {author} {\bibfnamefont {M.}~\bibnamefont {Hayashi}},\
  }\bibfield  {title} {\bibinfo {title} {Optimal verification of two-qubit pure
  states},\ }\href {https://doi.org/10.1103/PhysRevA.100.032315} {\bibfield
  {journal} {\bibinfo  {journal} {Phys. Rev. A}\ }\textbf {\bibinfo {volume}
  {100}},\ \bibinfo {pages} {032315} (\bibinfo {year} {2019})}\BibitemShut
  {NoStop}%
\bibitem [{\citenamefont {{Zhu}}\ and\ \citenamefont
  {{Hayashi}}(2019)}]{Zhu.Hayashi2019a}%
  \BibitemOpen
  \bibfield  {author} {\bibinfo {author} {\bibfnamefont {H.}~\bibnamefont
  {{Zhu}}}\ and\ \bibinfo {author} {\bibfnamefont {M.}~\bibnamefont
  {{Hayashi}}},\ }\bibfield  {title} {\bibinfo {title} {Optimal verification
  and fidelity estimation of maximally entangled states},\ }\href
  {https://doi.org/10.1103/PhysRevA.99.052346} {\bibfield  {journal} {\bibinfo
  {journal} {Phys. Rev. A}\ }\textbf {\bibinfo {volume} {99}},\ \bibinfo
  {pages} {052346} (\bibinfo {year} {2019})}\BibitemShut {NoStop}%
\bibitem [{\citenamefont {Zhu}\ and\ \citenamefont
  {Hayashi}(2019{\natexlab{a}})}]{Zhu.Hayashi2019b}%
  \BibitemOpen
  \bibfield  {author} {\bibinfo {author} {\bibfnamefont {H.}~\bibnamefont
  {Zhu}}\ and\ \bibinfo {author} {\bibfnamefont {M.}~\bibnamefont {Hayashi}},\
  }\bibfield  {title} {\bibinfo {title} {Efficient verification of hypergraph
  states},\ }\href {https://doi.org/10.1103/PhysRevApplied.12.054047}
  {\bibfield  {journal} {\bibinfo  {journal} {Phys. Rev. Appl.}\ }\textbf
  {\bibinfo {volume} {12}},\ \bibinfo {pages} {054047} (\bibinfo {year}
  {2019}{\natexlab{a}})}\BibitemShut {NoStop}%
\bibitem [{\citenamefont {Zhu}\ and\ \citenamefont
  {Hayashi}(2019{\natexlab{b}})}]{Zhu.Hayashi2019c}%
  \BibitemOpen
  \bibfield  {author} {\bibinfo {author} {\bibfnamefont {H.}~\bibnamefont
  {Zhu}}\ and\ \bibinfo {author} {\bibfnamefont {M.}~\bibnamefont {Hayashi}},\
  }\bibfield  {title} {\bibinfo {title} {Efficient verification of pure quantum
  states in the adversarial scenario},\ }\href
  {https://doi.org/10.1103/PhysRevLett.123.260504} {\bibfield  {journal}
  {\bibinfo  {journal} {Phys. Rev. Lett.}\ }\textbf {\bibinfo {volume} {123}},\
  \bibinfo {pages} {260504} (\bibinfo {year} {2019}{\natexlab{b}})}\BibitemShut
  {NoStop}%
\bibitem [{\citenamefont {Zhu}\ and\ \citenamefont
  {Hayashi}(2019{\natexlab{c}})}]{Zhu.Hayashi2019d}%
  \BibitemOpen
  \bibfield  {author} {\bibinfo {author} {\bibfnamefont {H.}~\bibnamefont
  {Zhu}}\ and\ \bibinfo {author} {\bibfnamefont {M.}~\bibnamefont {Hayashi}},\
  }\bibfield  {title} {\bibinfo {title} {General framework for verifying pure
  quantum states in the adversarial scenario},\ }\href
  {https://doi.org/10.1103/PhysRevA.100.062335} {\bibfield  {journal} {\bibinfo
   {journal} {Phys. Rev. A}\ }\textbf {\bibinfo {volume} {100}},\ \bibinfo
  {pages} {062335} (\bibinfo {year} {2019}{\natexlab{c}})}\BibitemShut
  {NoStop}%
\bibitem [{\citenamefont {Liu}\ \emph {et~al.}(2019{\natexlab{b}})\citenamefont
  {Liu}, \citenamefont {Yu}, \citenamefont {Shang}, \citenamefont {Zhu},\ and\
  \citenamefont {Zhang}}]{Liu.etal2019b}%
  \BibitemOpen
  \bibfield  {author} {\bibinfo {author} {\bibfnamefont {Y.-C.}\ \bibnamefont
  {Liu}}, \bibinfo {author} {\bibfnamefont {X.-D.}\ \bibnamefont {Yu}},
  \bibinfo {author} {\bibfnamefont {J.}~\bibnamefont {Shang}}, \bibinfo
  {author} {\bibfnamefont {H.}~\bibnamefont {Zhu}},\ and\ \bibinfo {author}
  {\bibfnamefont {X.}~\bibnamefont {Zhang}},\ }\bibfield  {title} {\bibinfo
  {title} {Efficient verification of {D}icke states},\ }\href
  {https://doi.org/10.1103/PhysRevApplied.12.044020} {\bibfield  {journal}
  {\bibinfo  {journal} {Phys. Rev. Appl.}\ }\textbf {\bibinfo {volume} {12}},\
  \bibinfo {pages} {044020} (\bibinfo {year} {2019}{\natexlab{b}})}\BibitemShut
  {NoStop}%
\bibitem [{\citenamefont {Li}\ \emph {et~al.}(2020)\citenamefont {Li},
  \citenamefont {Han},\ and\ \citenamefont {Zhu}}]{Li.etal2020b}%
  \BibitemOpen
  \bibfield  {author} {\bibinfo {author} {\bibfnamefont {Z.}~\bibnamefont
  {Li}}, \bibinfo {author} {\bibfnamefont {Y.-G.}\ \bibnamefont {Han}},\ and\
  \bibinfo {author} {\bibfnamefont {H.}~\bibnamefont {Zhu}},\ }\bibfield
  {title} {\bibinfo {title} {Optimal verification of
  {G}reenberger-{H}orne-{Z}eilinger states},\ }\href
  {https://doi.org/10.1103/PhysRevApplied.13.054002} {\bibfield  {journal}
  {\bibinfo  {journal} {Phys. Rev. Appl.}\ }\textbf {\bibinfo {volume} {13}},\
  \bibinfo {pages} {054002} (\bibinfo {year} {2020})}\BibitemShut {NoStop}%
\bibitem [{\citenamefont {Dangniam}\ \emph {et~al.}(2020)\citenamefont
  {Dangniam}, \citenamefont {Han},\ and\ \citenamefont
  {Zhu}}]{Dangniam.etal2020}%
  \BibitemOpen
  \bibfield  {author} {\bibinfo {author} {\bibfnamefont {N.}~\bibnamefont
  {Dangniam}}, \bibinfo {author} {\bibfnamefont {Y.-G.}\ \bibnamefont {Han}},\
  and\ \bibinfo {author} {\bibfnamefont {H.}~\bibnamefont {Zhu}},\ }\bibfield
  {title} {\bibinfo {title} {Optimal verification of stabilizer states},\
  }\href {https://doi.org/10.1103/PhysRevResearch.2.043323} {\bibfield
  {journal} {\bibinfo  {journal} {Phys. Rev. Research}\ }\textbf {\bibinfo
  {volume} {2}},\ \bibinfo {pages} {043323} (\bibinfo {year}
  {2020})}\BibitemShut {NoStop}%
\bibitem [{\citenamefont {Zhang}\ \emph
  {et~al.}(2020{\natexlab{a}})\citenamefont {Zhang}, \citenamefont {Zhang},
  \citenamefont {Chen}, \citenamefont {Peng}, \citenamefont {Xu}, \citenamefont
  {Yin}, \citenamefont {Yu}, \citenamefont {Ye}, \citenamefont {Han},
  \citenamefont {Xu}, \citenamefont {Chen}, \citenamefont {Li},\ and\
  \citenamefont {Guo}}]{Zhang.etal2020a}%
  \BibitemOpen
  \bibfield  {author} {\bibinfo {author} {\bibfnamefont {W.-H.}\ \bibnamefont
  {Zhang}}, \bibinfo {author} {\bibfnamefont {C.}~\bibnamefont {Zhang}},
  \bibinfo {author} {\bibfnamefont {Z.}~\bibnamefont {Chen}}, \bibinfo {author}
  {\bibfnamefont {X.-X.}\ \bibnamefont {Peng}}, \bibinfo {author}
  {\bibfnamefont {X.-Y.}\ \bibnamefont {Xu}}, \bibinfo {author} {\bibfnamefont
  {P.}~\bibnamefont {Yin}}, \bibinfo {author} {\bibfnamefont {S.}~\bibnamefont
  {Yu}}, \bibinfo {author} {\bibfnamefont {X.-J.}\ \bibnamefont {Ye}}, \bibinfo
  {author} {\bibfnamefont {Y.-J.}\ \bibnamefont {Han}}, \bibinfo {author}
  {\bibfnamefont {J.-S.}\ \bibnamefont {Xu}}, \bibinfo {author} {\bibfnamefont
  {G.}~\bibnamefont {Chen}}, \bibinfo {author} {\bibfnamefont {C.-F.}\
  \bibnamefont {Li}},\ and\ \bibinfo {author} {\bibfnamefont {G.-C.}\
  \bibnamefont {Guo}},\ }\bibfield  {title} {\bibinfo {title} {Experimental
  optimal verification of entangled states using local measurements},\ }\href
  {https://doi.org/10.1103/PhysRevLett.125.030506} {\bibfield  {journal}
  {\bibinfo  {journal} {Phys. Rev. Lett.}\ }\textbf {\bibinfo {volume} {125}},\
  \bibinfo {pages} {030506} (\bibinfo {year} {2020}{\natexlab{a}})}\BibitemShut
  {NoStop}%
\bibitem [{\citenamefont {Jiang}\ \emph {et~al.}(2020)\citenamefont {Jiang},
  \citenamefont {Wang}, \citenamefont {Qian}, \citenamefont {Chen},
  \citenamefont {Chen}, \citenamefont {Lu}, \citenamefont {Xia}, \citenamefont
  {Song}, \citenamefont {Zhu},\ and\ \citenamefont {Ma}}]{Jiang.etal2020}%
  \BibitemOpen
  \bibfield  {author} {\bibinfo {author} {\bibfnamefont {X.}~\bibnamefont
  {Jiang}}, \bibinfo {author} {\bibfnamefont {K.}~\bibnamefont {Wang}},
  \bibinfo {author} {\bibfnamefont {K.}~\bibnamefont {Qian}}, \bibinfo {author}
  {\bibfnamefont {Z.}~\bibnamefont {Chen}}, \bibinfo {author} {\bibfnamefont
  {Z.}~\bibnamefont {Chen}}, \bibinfo {author} {\bibfnamefont {L.}~\bibnamefont
  {Lu}}, \bibinfo {author} {\bibfnamefont {L.}~\bibnamefont {Xia}}, \bibinfo
  {author} {\bibfnamefont {F.}~\bibnamefont {Song}}, \bibinfo {author}
  {\bibfnamefont {S.}~\bibnamefont {Zhu}},\ and\ \bibinfo {author}
  {\bibfnamefont {X.}~\bibnamefont {Ma}},\ }\bibfield  {title} {\bibinfo
  {title} {Towards the standardization of quantum state verification using
  optimal strategies},\ }\href {https://doi.org/10.1038/s41534-020-00317-7}
  {\bibfield  {journal} {\bibinfo  {journal} {npj Quantum Inf.}\ }\textbf
  {\bibinfo {volume} {6}},\ \bibinfo {pages} {90} (\bibinfo {year}
  {2020})}\BibitemShut {NoStop}%
\bibitem [{\citenamefont {Zhang}\ \emph
  {et~al.}(2020{\natexlab{b}})\citenamefont {Zhang}, \citenamefont {Liu},
  \citenamefont {Yin}, \citenamefont {Peng}, \citenamefont {Li}, \citenamefont
  {Xu}, \citenamefont {Yu}, \citenamefont {Hou}, \citenamefont {Han},
  \citenamefont {Xu}, \citenamefont {Zhou}, \citenamefont {Chen}, \citenamefont
  {Li},\ and\ \citenamefont {Guo}}]{Zhang.etal2020b}%
  \BibitemOpen
  \bibfield  {author} {\bibinfo {author} {\bibfnamefont {W.-H.}\ \bibnamefont
  {Zhang}}, \bibinfo {author} {\bibfnamefont {X.}~\bibnamefont {Liu}}, \bibinfo
  {author} {\bibfnamefont {P.}~\bibnamefont {Yin}}, \bibinfo {author}
  {\bibfnamefont {X.-X.}\ \bibnamefont {Peng}}, \bibinfo {author}
  {\bibfnamefont {G.-C.}\ \bibnamefont {Li}}, \bibinfo {author} {\bibfnamefont
  {X.-Y.}\ \bibnamefont {Xu}}, \bibinfo {author} {\bibfnamefont
  {S.}~\bibnamefont {Yu}}, \bibinfo {author} {\bibfnamefont {Z.-B.}\
  \bibnamefont {Hou}}, \bibinfo {author} {\bibfnamefont {Y.-J.}\ \bibnamefont
  {Han}}, \bibinfo {author} {\bibfnamefont {J.-S.}\ \bibnamefont {Xu}},
  \bibinfo {author} {\bibfnamefont {Z.-Q.}\ \bibnamefont {Zhou}}, \bibinfo
  {author} {\bibfnamefont {G.}~\bibnamefont {Chen}}, \bibinfo {author}
  {\bibfnamefont {C.-F.}\ \bibnamefont {Li}},\ and\ \bibinfo {author}
  {\bibfnamefont {G.-C.}\ \bibnamefont {Guo}},\ }\bibfield  {title} {\bibinfo
  {title} {Classical communication enhanced quantum state verification},\
  }\href {https://doi.org/10.1038/s41534-020-00328-4} {\bibfield  {journal}
  {\bibinfo  {journal} {npj Quantum Inf.}\ }\textbf {\bibinfo {volume} {6}},\
  \bibinfo {pages} {103} (\bibinfo {year} {2020}{\natexlab{b}})}\BibitemShut
  {NoStop}%
\bibitem [{\citenamefont {Li}\ \emph {et~al.}(2021)\citenamefont {Li},
  \citenamefont {Han}, \citenamefont {Sun}, \citenamefont {Shang},\ and\
  \citenamefont {Zhu}}]{Li.etal2020a}%
  \BibitemOpen
  \bibfield  {author} {\bibinfo {author} {\bibfnamefont {Z.}~\bibnamefont
  {Li}}, \bibinfo {author} {\bibfnamefont {Y.-G.}\ \bibnamefont {Han}},
  \bibinfo {author} {\bibfnamefont {H.-F.}\ \bibnamefont {Sun}}, \bibinfo
  {author} {\bibfnamefont {J.}~\bibnamefont {Shang}},\ and\ \bibinfo {author}
  {\bibfnamefont {H.}~\bibnamefont {Zhu}},\ }\bibfield  {title} {\bibinfo
  {title} {Verification of phased {D}icke states},\ }\href
  {https://doi.org/10.1103/PhysRevA.103.022601} {\bibfield  {journal} {\bibinfo
   {journal} {Phys. Rev. A}\ }\textbf {\bibinfo {volume} {103}},\ \bibinfo
  {pages} {022601} (\bibinfo {year} {2021})}\BibitemShut {NoStop}%
\bibitem [{\citenamefont {Liu}\ \emph {et~al.}(2021)\citenamefont {Liu},
  \citenamefont {Shang}, \citenamefont {Han},\ and\ \citenamefont
  {Zhang}}]{Liu.etal2020b}%
  \BibitemOpen
  \bibfield  {author} {\bibinfo {author} {\bibfnamefont {Y.-C.}\ \bibnamefont
  {Liu}}, \bibinfo {author} {\bibfnamefont {J.}~\bibnamefont {Shang}}, \bibinfo
  {author} {\bibfnamefont {R.}~\bibnamefont {Han}},\ and\ \bibinfo {author}
  {\bibfnamefont {X.}~\bibnamefont {Zhang}},\ }\bibfield  {title} {\bibinfo
  {title} {Universally optimal verification of entangled states with
  nondemolition measurements},\ }\href
  {https://doi.org/10.1103/PhysRevLett.126.090504} {\bibfield  {journal}
  {\bibinfo  {journal} {Phys. Rev. Lett.}\ }\textbf {\bibinfo {volume} {126}},\
  \bibinfo {pages} {090504} (\bibinfo {year} {2021})}\BibitemShut {NoStop}%
\bibitem [{\citenamefont {Liu}\ \emph {et~al.}(2020)\citenamefont {Liu},
  \citenamefont {Shang}, \citenamefont {Yu},\ and\ \citenamefont
  {Zhang}}]{Liu.etal2020a}%
  \BibitemOpen
  \bibfield  {author} {\bibinfo {author} {\bibfnamefont {Y.-C.}\ \bibnamefont
  {Liu}}, \bibinfo {author} {\bibfnamefont {J.}~\bibnamefont {Shang}}, \bibinfo
  {author} {\bibfnamefont {X.-D.}\ \bibnamefont {Yu}},\ and\ \bibinfo {author}
  {\bibfnamefont {X.}~\bibnamefont {Zhang}},\ }\bibfield  {title} {\bibinfo
  {title} {Efficient verification of quantum processes},\ }\href
  {https://doi.org/10.1103/PhysRevA.101.042315} {\bibfield  {journal} {\bibinfo
   {journal} {Phys. Rev. A}\ }\textbf {\bibinfo {volume} {101}},\ \bibinfo
  {pages} {042315} (\bibinfo {year} {2020})}\BibitemShut {NoStop}%
\bibitem [{\citenamefont {Zhu}\ and\ \citenamefont
  {Zhang}(2020)}]{Zhu.Zhang2020}%
  \BibitemOpen
  \bibfield  {author} {\bibinfo {author} {\bibfnamefont {H.}~\bibnamefont
  {Zhu}}\ and\ \bibinfo {author} {\bibfnamefont {H.}~\bibnamefont {Zhang}},\
  }\bibfield  {title} {\bibinfo {title} {Efficient verification of quantum
  gates with local operations},\ }\href
  {https://doi.org/10.1103/PhysRevA.101.042316} {\bibfield  {journal} {\bibinfo
   {journal} {Phys. Rev. A}\ }\textbf {\bibinfo {volume} {101}},\ \bibinfo
  {pages} {042316} (\bibinfo {year} {2020})}\BibitemShut {NoStop}%
\bibitem [{\citenamefont {Zeng}\ \emph {et~al.}(2020)\citenamefont {Zeng},
  \citenamefont {Zhou},\ and\ \citenamefont {Liu}}]{Zeng.etal2020}%
  \BibitemOpen
  \bibfield  {author} {\bibinfo {author} {\bibfnamefont {P.}~\bibnamefont
  {Zeng}}, \bibinfo {author} {\bibfnamefont {Y.}~\bibnamefont {Zhou}},\ and\
  \bibinfo {author} {\bibfnamefont {Z.}~\bibnamefont {Liu}},\ }\bibfield
  {title} {\bibinfo {title} {Quantum gate verification and its application in
  property testing},\ }\href {https://doi.org/10.1103/PhysRevResearch.2.023306}
  {\bibfield  {journal} {\bibinfo  {journal} {Phys. Rev. Research}\ }\textbf
  {\bibinfo {volume} {2}},\ \bibinfo {pages} {023306} (\bibinfo {year}
  {2020})}\BibitemShut {NoStop}%
\bibitem [{\citenamefont {Menicucci}\ \emph {et~al.}(2006)\citenamefont
  {Menicucci}, \citenamefont {van Loock}, \citenamefont {Gu}, \citenamefont
  {Weedbrook}, \citenamefont {Ralph},\ and\ \citenamefont
  {Nielsen}}]{Menicucci.etal2006}%
  \BibitemOpen
  \bibfield  {author} {\bibinfo {author} {\bibfnamefont {N.~C.}\ \bibnamefont
  {Menicucci}}, \bibinfo {author} {\bibfnamefont {P.}~\bibnamefont {van
  Loock}}, \bibinfo {author} {\bibfnamefont {M.}~\bibnamefont {Gu}}, \bibinfo
  {author} {\bibfnamefont {C.}~\bibnamefont {Weedbrook}}, \bibinfo {author}
  {\bibfnamefont {T.~C.}\ \bibnamefont {Ralph}},\ and\ \bibinfo {author}
  {\bibfnamefont {M.~A.}\ \bibnamefont {Nielsen}},\ }\bibfield  {title}
  {\bibinfo {title} {Universal quantum computation with continuous-variable
  cluster states},\ }\href {https://doi.org/10.1103/PhysRevLett.97.110501}
  {\bibfield  {journal} {\bibinfo  {journal} {Phys. Rev. Lett.}\ }\textbf
  {\bibinfo {volume} {97}},\ \bibinfo {pages} {110501} (\bibinfo {year}
  {2006})}\BibitemShut {NoStop}%
\bibitem [{\citenamefont {Mattioli}\ \emph {et~al.}(2016)\citenamefont
  {Mattioli}, \citenamefont {Zhou}, \citenamefont {Gaggero}, \citenamefont
  {Gaudio}, \citenamefont {Leoni},\ and\ \citenamefont
  {Fiore}}]{Mattioli.etal2016}%
  \BibitemOpen
  \bibfield  {author} {\bibinfo {author} {\bibfnamefont {F.}~\bibnamefont
  {Mattioli}}, \bibinfo {author} {\bibfnamefont {Z.}~\bibnamefont {Zhou}},
  \bibinfo {author} {\bibfnamefont {A.}~\bibnamefont {Gaggero}}, \bibinfo
  {author} {\bibfnamefont {R.}~\bibnamefont {Gaudio}}, \bibinfo {author}
  {\bibfnamefont {R.}~\bibnamefont {Leoni}},\ and\ \bibinfo {author}
  {\bibfnamefont {A.}~\bibnamefont {Fiore}},\ }\bibfield  {title} {\bibinfo
  {title} {Photon-counting and analog operation of a 24-pixel photon number
  resolving detector based on superconducting nanowires},\ }\href
  {https://doi.org/10.1364/OE.24.009067} {\bibfield  {journal} {\bibinfo
  {journal} {Opt. Express}\ }\textbf {\bibinfo {volume} {24}},\ \bibinfo
  {pages} {9067} (\bibinfo {year} {2016})}\BibitemShut {NoStop}%
\bibitem [{\citenamefont {Fukuda}\ \emph {et~al.}(2011)\citenamefont {Fukuda},
  \citenamefont {Fujii}, \citenamefont {Numata}, \citenamefont {Amemiya},
  \citenamefont {Yoshizawa}, \citenamefont {Tsuchida}, \citenamefont {Fujino},
  \citenamefont {Ishii}, \citenamefont {Itatani}, \citenamefont {Inoue},\ and\
  \citenamefont {Zama}}]{Fukuda.etal2011}%
  \BibitemOpen
  \bibfield  {author} {\bibinfo {author} {\bibfnamefont {D.}~\bibnamefont
  {Fukuda}}, \bibinfo {author} {\bibfnamefont {G.}~\bibnamefont {Fujii}},
  \bibinfo {author} {\bibfnamefont {T.}~\bibnamefont {Numata}}, \bibinfo
  {author} {\bibfnamefont {K.}~\bibnamefont {Amemiya}}, \bibinfo {author}
  {\bibfnamefont {A.}~\bibnamefont {Yoshizawa}}, \bibinfo {author}
  {\bibfnamefont {H.}~\bibnamefont {Tsuchida}}, \bibinfo {author}
  {\bibfnamefont {H.}~\bibnamefont {Fujino}}, \bibinfo {author} {\bibfnamefont
  {H.}~\bibnamefont {Ishii}}, \bibinfo {author} {\bibfnamefont
  {T.}~\bibnamefont {Itatani}}, \bibinfo {author} {\bibfnamefont
  {S.}~\bibnamefont {Inoue}},\ and\ \bibinfo {author} {\bibfnamefont
  {T.}~\bibnamefont {Zama}},\ }\bibfield  {title} {\bibinfo {title}
  {Titanium-based transition-edge photon number resolving detector with 98\%
  detection efficiency with index-matched small-gap fiber coupling},\ }\href
  {https://doi.org/10.1364/OE.19.000870} {\bibfield  {journal} {\bibinfo
  {journal} {Opt. Express}\ }\textbf {\bibinfo {volume} {19}},\ \bibinfo
  {pages} {870} (\bibinfo {year} {2011})}\BibitemShut {NoStop}%
\bibitem [{\citenamefont {Divochiy}\ \emph {et~al.}(2008)\citenamefont
  {Divochiy}, \citenamefont {Marsili}, \citenamefont {Bitauld}, \citenamefont
  {Gaggero}, \citenamefont {Leoni}, \citenamefont {Mattioli}, \citenamefont
  {Korneev}, \citenamefont {Seleznev}, \citenamefont {Kaurova}, \citenamefont
  {Minaeva}, \citenamefont {Gol'tsman}, \citenamefont {Lagoudakis},
  \citenamefont {Benkhaoul}, \citenamefont {L{\'e}vy},\ and\ \citenamefont
  {Fiore}}]{Divochiy.etal2008}%
  \BibitemOpen
  \bibfield  {author} {\bibinfo {author} {\bibfnamefont {A.}~\bibnamefont
  {Divochiy}}, \bibinfo {author} {\bibfnamefont {F.}~\bibnamefont {Marsili}},
  \bibinfo {author} {\bibfnamefont {D.}~\bibnamefont {Bitauld}}, \bibinfo
  {author} {\bibfnamefont {A.}~\bibnamefont {Gaggero}}, \bibinfo {author}
  {\bibfnamefont {R.}~\bibnamefont {Leoni}}, \bibinfo {author} {\bibfnamefont
  {F.}~\bibnamefont {Mattioli}}, \bibinfo {author} {\bibfnamefont
  {A.}~\bibnamefont {Korneev}}, \bibinfo {author} {\bibfnamefont
  {V.}~\bibnamefont {Seleznev}}, \bibinfo {author} {\bibfnamefont
  {N.}~\bibnamefont {Kaurova}}, \bibinfo {author} {\bibfnamefont
  {O.}~\bibnamefont {Minaeva}}, \bibinfo {author} {\bibfnamefont
  {G.}~\bibnamefont {Gol'tsman}}, \bibinfo {author} {\bibfnamefont {K.~G.}\
  \bibnamefont {Lagoudakis}}, \bibinfo {author} {\bibfnamefont
  {M.}~\bibnamefont {Benkhaoul}}, \bibinfo {author} {\bibfnamefont
  {F.}~\bibnamefont {L{\'e}vy}},\ and\ \bibinfo {author} {\bibfnamefont
  {A.}~\bibnamefont {Fiore}},\ }\bibfield  {title} {\bibinfo {title}
  {Superconducting nanowire photon-number-resolving detector at
  telecommunication wavelengths},\ }\href
  {https://doi.org/10.1038/nphoton.2008.51} {\bibfield  {journal} {\bibinfo
  {journal} {Nat. Photon.}\ }\textbf {\bibinfo {volume} {2}},\ \bibinfo {pages}
  {302} (\bibinfo {year} {2008})}\BibitemShut {NoStop}%
\bibitem [{\citenamefont {Kardynał}\ \emph {et~al.}(2008)\citenamefont
  {Kardynał}, \citenamefont {Yuan},\ and\ \citenamefont
  {Shields}}]{Kardynal.etal2008}%
  \BibitemOpen
  \bibfield  {author} {\bibinfo {author} {\bibfnamefont {B.~E.}\ \bibnamefont
  {Kardynał}}, \bibinfo {author} {\bibfnamefont {Z.~L.}\ \bibnamefont
  {Yuan}},\ and\ \bibinfo {author} {\bibfnamefont {A.~J.}\ \bibnamefont
  {Shields}},\ }\bibfield  {title} {\bibinfo {title} {An
  avalanche‐photodiode-based photon-number-resolving detector},\ }\href
  {https://doi.org/10.1038/nphoton.2008.101} {\bibfield  {journal} {\bibinfo
  {journal} {Nat. Photon.}\ }\textbf {\bibinfo {volume} {2}},\ \bibinfo {pages}
  {425} (\bibinfo {year} {2008})}\BibitemShut {NoStop}%
\bibitem [{\citenamefont {Morais}\ \emph {et~al.}()\citenamefont {Morais},
  \citenamefont {Weinhold}, \citenamefont {de~Almeida}, \citenamefont {Lita},
  \citenamefont {Gerrits}, \citenamefont {Nam}, \citenamefont {White},\ and\
  \citenamefont {Gillett}}]{Morais.etal2020}%
  \BibitemOpen
  \bibfield  {author} {\bibinfo {author} {\bibfnamefont {L.~A.}\ \bibnamefont
  {Morais}}, \bibinfo {author} {\bibfnamefont {T.}~\bibnamefont {Weinhold}},
  \bibinfo {author} {\bibfnamefont {M.~P.}\ \bibnamefont {de~Almeida}},
  \bibinfo {author} {\bibfnamefont {A.}~\bibnamefont {Lita}}, \bibinfo {author}
  {\bibfnamefont {T.}~\bibnamefont {Gerrits}}, \bibinfo {author} {\bibfnamefont
  {S.~W.}\ \bibnamefont {Nam}}, \bibinfo {author} {\bibfnamefont {A.~G.}\
  \bibnamefont {White}},\ and\ \bibinfo {author} {\bibfnamefont
  {G.}~\bibnamefont {Gillett}},\ }\href@noop {} {\bibinfo {title} {Precisely
  determining photon-number in real-time}},\ \Eprint
  {https://arxiv.org/abs/2012.10158} {arXiv:2012.10158} \BibitemShut {NoStop}%
\bibitem [{\citenamefont {Wu}\ \emph {et~al.}(2021)\citenamefont {Wu},
  \citenamefont {Bai}, \citenamefont {Chiribella},\ and\ \citenamefont
  {Liu}}]{Wu.etal2021}%
  \BibitemOpen
  \bibfield  {author} {\bibinfo {author} {\bibfnamefont {Y.-D.}\ \bibnamefont
  {Wu}}, \bibinfo {author} {\bibfnamefont {G.}~\bibnamefont {Bai}}, \bibinfo
  {author} {\bibfnamefont {G.}~\bibnamefont {Chiribella}},\ and\ \bibinfo
  {author} {\bibfnamefont {N.}~\bibnamefont {Liu}},\ }\bibfield  {title}
  {\bibinfo {title} {Efficient verification of continuous-variable quantum
  states and devices without assuming identical and independent operations},\
  }\href {https://doi.org/10.1103/PhysRevLett.126.240503} {\bibfield  {journal}
  {\bibinfo  {journal} {Phys. Rev. Lett.}\ }\textbf {\bibinfo {volume} {126}},\
  \bibinfo {pages} {240503} (\bibinfo {year} {2021})}\BibitemShut {NoStop}%
\bibitem [{\citenamefont {van Enk}\ and\ \citenamefont
  {Hirota}(2001)}]{vanEnk.etal2001}%
  \BibitemOpen
  \bibfield  {author} {\bibinfo {author} {\bibfnamefont {S.~J.}\ \bibnamefont
  {van Enk}}\ and\ \bibinfo {author} {\bibfnamefont {O.}~\bibnamefont
  {Hirota}},\ }\bibfield  {title} {\bibinfo {title} {Entangled coherent states:
  Teleportation and decoherence},\ }\href
  {https://doi.org/10.1103/PhysRevA.64.022313} {\bibfield  {journal} {\bibinfo
  {journal} {Phys. Rev. A}\ }\textbf {\bibinfo {volume} {64}},\ \bibinfo
  {pages} {022313} (\bibinfo {year} {2001})}\BibitemShut {NoStop}%
\bibitem [{\citenamefont {Wang}(2001)}]{Xiaoguang2001}%
  \BibitemOpen
  \bibfield  {author} {\bibinfo {author} {\bibfnamefont {X.}~\bibnamefont
  {Wang}},\ }\bibfield  {title} {\bibinfo {title} {Quantum teleportation of
  entangled coherent states},\ }\href
  {https://doi.org/10.1103/PhysRevA.64.022302} {\bibfield  {journal} {\bibinfo
  {journal} {Phys. Rev. A}\ }\textbf {\bibinfo {volume} {64}},\ \bibinfo
  {pages} {022302} (\bibinfo {year} {2001})}\BibitemShut {NoStop}%
\bibitem [{\citenamefont {Cochrane}\ \emph {et~al.}(1999)\citenamefont
  {Cochrane}, \citenamefont {Milburn},\ and\ \citenamefont
  {Munro}}]{Cochrane.etal1999}%
  \BibitemOpen
  \bibfield  {author} {\bibinfo {author} {\bibfnamefont {P.~T.}\ \bibnamefont
  {Cochrane}}, \bibinfo {author} {\bibfnamefont {G.~J.}\ \bibnamefont
  {Milburn}},\ and\ \bibinfo {author} {\bibfnamefont {W.~J.}\ \bibnamefont
  {Munro}},\ }\bibfield  {title} {\bibinfo {title} {Macroscopically distinct
  quantum-superposition states as a bosonic code for amplitude damping},\
  }\href {https://doi.org/10.1103/PhysRevA.59.2631} {\bibfield  {journal}
  {\bibinfo  {journal} {Phys. Rev. A}\ }\textbf {\bibinfo {volume} {59}},\
  \bibinfo {pages} {2631} (\bibinfo {year} {1999})}\BibitemShut {NoStop}%
\bibitem [{\citenamefont {Jeong}\ and\ \citenamefont
  {Kim}(2002)}]{Jeong.etal2002}%
  \BibitemOpen
  \bibfield  {author} {\bibinfo {author} {\bibfnamefont {H.}~\bibnamefont
  {Jeong}}\ and\ \bibinfo {author} {\bibfnamefont {M.~S.}\ \bibnamefont
  {Kim}},\ }\bibfield  {title} {\bibinfo {title} {Efficient quantum computation
  using coherent states},\ }\href {https://doi.org/10.1103/PhysRevA.65.042305}
  {\bibfield  {journal} {\bibinfo  {journal} {Phys. Rev. A}\ }\textbf {\bibinfo
  {volume} {65}},\ \bibinfo {pages} {042305} (\bibinfo {year}
  {2002})}\BibitemShut {NoStop}%
\bibitem [{\citenamefont {Ralph}\ \emph {et~al.}(2003)\citenamefont {Ralph},
  \citenamefont {Gilchrist}, \citenamefont {Milburn}, \citenamefont {Munro},\
  and\ \citenamefont {Glancy}}]{Ralph.etal2003}%
  \BibitemOpen
  \bibfield  {author} {\bibinfo {author} {\bibfnamefont {T.~C.}\ \bibnamefont
  {Ralph}}, \bibinfo {author} {\bibfnamefont {A.}~\bibnamefont {Gilchrist}},
  \bibinfo {author} {\bibfnamefont {G.~J.}\ \bibnamefont {Milburn}}, \bibinfo
  {author} {\bibfnamefont {W.~J.}\ \bibnamefont {Munro}},\ and\ \bibinfo
  {author} {\bibfnamefont {S.}~\bibnamefont {Glancy}},\ }\bibfield  {title}
  {\bibinfo {title} {Quantum computation with optical coherent states},\ }\href
  {https://doi.org/10.1103/PhysRevA.68.042319} {\bibfield  {journal} {\bibinfo
  {journal} {Phys. Rev. A}\ }\textbf {\bibinfo {volume} {68}},\ \bibinfo
  {pages} {042319} (\bibinfo {year} {2003})}\BibitemShut {NoStop}%
\bibitem [{\citenamefont {Lund}\ \emph {et~al.}(2008)\citenamefont {Lund},
  \citenamefont {Ralph},\ and\ \citenamefont {Haselgrove}}]{Lund.etal2008}%
  \BibitemOpen
  \bibfield  {author} {\bibinfo {author} {\bibfnamefont {A.~P.}\ \bibnamefont
  {Lund}}, \bibinfo {author} {\bibfnamefont {T.~C.}\ \bibnamefont {Ralph}},\
  and\ \bibinfo {author} {\bibfnamefont {H.~L.}\ \bibnamefont {Haselgrove}},\
  }\bibfield  {title} {\bibinfo {title} {Fault-tolerant linear optical quantum
  computing with small-amplitude coherent states},\ }\href
  {https://doi.org/10.1103/PhysRevLett.100.030503} {\bibfield  {journal}
  {\bibinfo  {journal} {Phys. Rev. Lett.}\ }\textbf {\bibinfo {volume} {100}},\
  \bibinfo {pages} {030503} (\bibinfo {year} {2008})}\BibitemShut {NoStop}%
\bibitem [{\citenamefont {Joo}\ \emph {et~al.}(2011)\citenamefont {Joo},
  \citenamefont {Munro},\ and\ \citenamefont {Spiller}}]{Joo.etal2011}%
  \BibitemOpen
  \bibfield  {author} {\bibinfo {author} {\bibfnamefont {J.}~\bibnamefont
  {Joo}}, \bibinfo {author} {\bibfnamefont {W.~J.}\ \bibnamefont {Munro}},\
  and\ \bibinfo {author} {\bibfnamefont {T.~P.}\ \bibnamefont {Spiller}},\
  }\bibfield  {title} {\bibinfo {title} {Quantum metrology with entangled
  coherent states},\ }\href {https://doi.org/10.1103/PhysRevLett.107.083601}
  {\bibfield  {journal} {\bibinfo  {journal} {Phys. Rev. Lett.}\ }\textbf
  {\bibinfo {volume} {107}},\ \bibinfo {pages} {083601} (\bibinfo {year}
  {2011})}\BibitemShut {NoStop}%
\bibitem [{\citenamefont {Joo}\ \emph {et~al.}(2012)\citenamefont {Joo},
  \citenamefont {Park}, \citenamefont {Jeong}, \citenamefont {Munro},
  \citenamefont {Nemoto},\ and\ \citenamefont {Spiller}}]{Joo.etal2012}%
  \BibitemOpen
  \bibfield  {author} {\bibinfo {author} {\bibfnamefont {J.}~\bibnamefont
  {Joo}}, \bibinfo {author} {\bibfnamefont {K.}~\bibnamefont {Park}}, \bibinfo
  {author} {\bibfnamefont {H.}~\bibnamefont {Jeong}}, \bibinfo {author}
  {\bibfnamefont {W.~J.}\ \bibnamefont {Munro}}, \bibinfo {author}
  {\bibfnamefont {K.}~\bibnamefont {Nemoto}},\ and\ \bibinfo {author}
  {\bibfnamefont {T.~P.}\ \bibnamefont {Spiller}},\ }\bibfield  {title}
  {\bibinfo {title} {Quantum metrology for nonlinear phase shifts with
  entangled coherent states},\ }\href
  {https://doi.org/10.1103/PhysRevA.86.043828} {\bibfield  {journal} {\bibinfo
  {journal} {Phys. Rev. A}\ }\textbf {\bibinfo {volume} {86}},\ \bibinfo
  {pages} {043828} (\bibinfo {year} {2012})}\BibitemShut {NoStop}%
\bibitem [{\citenamefont {Liu}\ \emph {et~al.}(2016)\citenamefont {Liu},
  \citenamefont {Lu}, \citenamefont {Sun},\ and\ \citenamefont
  {Wang}}]{J.Liu.etal2016}%
  \BibitemOpen
  \bibfield  {author} {\bibinfo {author} {\bibfnamefont {J.}~\bibnamefont
  {Liu}}, \bibinfo {author} {\bibfnamefont {X.-M.}\ \bibnamefont {Lu}},
  \bibinfo {author} {\bibfnamefont {Z.}~\bibnamefont {Sun}},\ and\ \bibinfo
  {author} {\bibfnamefont {X.}~\bibnamefont {Wang}},\ }\bibfield  {title}
  {\bibinfo {title} {Quantum multiparameter metrology with generalized
  entangled coherent state},\ }\href
  {https://doi.org/10.1088/1751-8113/49/11/115302} {\bibfield  {journal}
  {\bibinfo  {journal} {J. Phys. A: Math. Theor.}\ }\textbf {\bibinfo {volume}
  {49}},\ \bibinfo {pages} {115302} (\bibinfo {year} {2016})}\BibitemShut
  {NoStop}%
\bibitem [{sup()}]{supp}%
  \BibitemOpen
  \href@noop {} {}\bibinfo {note} {See Supplemental Material for the
  Appendixes.}\BibitemShut {Stop}%
\bibitem [{\citenamefont {Wittmann}\ \emph {et~al.}(2008)\citenamefont
  {Wittmann}, \citenamefont {Takeoka}, \citenamefont {Cassemiro}, \citenamefont
  {Sasaki}, \citenamefont {Leuchs},\ and\ \citenamefont
  {Andersen}}]{Wittmann.etal2008}%
  \BibitemOpen
  \bibfield  {author} {\bibinfo {author} {\bibfnamefont {C.}~\bibnamefont
  {Wittmann}}, \bibinfo {author} {\bibfnamefont {M.}~\bibnamefont {Takeoka}},
  \bibinfo {author} {\bibfnamefont {K.~N.}\ \bibnamefont {Cassemiro}}, \bibinfo
  {author} {\bibfnamefont {M.}~\bibnamefont {Sasaki}}, \bibinfo {author}
  {\bibfnamefont {G.}~\bibnamefont {Leuchs}},\ and\ \bibinfo {author}
  {\bibfnamefont {U.~L.}\ \bibnamefont {Andersen}},\ }\bibfield  {title}
  {\bibinfo {title} {Demonstration of near-optimal discrimination of optical
  coherent states},\ }\href {https://doi.org/10.1103/PhysRevLett.101.210501}
  {\bibfield  {journal} {\bibinfo  {journal} {Phys. Rev. Lett.}\ }\textbf
  {\bibinfo {volume} {101}},\ \bibinfo {pages} {210501} (\bibinfo {year}
  {2008})}\BibitemShut {NoStop}%
\bibitem [{\citenamefont {DiMario}\ and\ \citenamefont
  {Becerra}(2018)}]{DiMario.etal2018}%
  \BibitemOpen
  \bibfield  {author} {\bibinfo {author} {\bibfnamefont {M.~T.}\ \bibnamefont
  {DiMario}}\ and\ \bibinfo {author} {\bibfnamefont {F.~E.}\ \bibnamefont
  {Becerra}},\ }\bibfield  {title} {\bibinfo {title} {Robust measurement for
  the discrimination of binary coherent states},\ }\href
  {https://doi.org/10.1103/PhysRevLett.121.023603} {\bibfield  {journal}
  {\bibinfo  {journal} {Phys. Rev. Lett.}\ }\textbf {\bibinfo {volume} {121}},\
  \bibinfo {pages} {023603} (\bibinfo {year} {2018})}\BibitemShut {NoStop}%
\bibitem [{\citenamefont {Bu\ifmmode~\check{z}\else \v{z}\fi{}ek}\ \emph
  {et~al.}(1992)\citenamefont {Bu\ifmmode~\check{z}\else \v{z}\fi{}ek},
  \citenamefont {Vidiella-Barranco},\ and\ \citenamefont
  {Knight}}]{Buzek.etal1992}%
  \BibitemOpen
  \bibfield  {author} {\bibinfo {author} {\bibfnamefont {V.}~\bibnamefont
  {Bu\ifmmode~\check{z}\else \v{z}\fi{}ek}}, \bibinfo {author} {\bibfnamefont
  {A.}~\bibnamefont {Vidiella-Barranco}},\ and\ \bibinfo {author}
  {\bibfnamefont {P.~L.}\ \bibnamefont {Knight}},\ }\bibfield  {title}
  {\bibinfo {title} {Superpositions of coherent states: Squeezing and
  dissipation},\ }\href {https://doi.org/10.1103/PhysRevA.45.6570} {\bibfield
  {journal} {\bibinfo  {journal} {Phys. Rev. A}\ }\textbf {\bibinfo {volume}
  {45}},\ \bibinfo {pages} {6570} (\bibinfo {year} {1992})}\BibitemShut
  {NoStop}%
\bibitem [{\citenamefont {Takahashi}\ \emph {et~al.}(2008)\citenamefont
  {Takahashi}, \citenamefont {Wakui}, \citenamefont {Suzuki}, \citenamefont
  {Takeoka}, \citenamefont {Hayasaka}, \citenamefont {Furusawa},\ and\
  \citenamefont {Sasaki}}]{Takahashi.etal2008}%
  \BibitemOpen
  \bibfield  {author} {\bibinfo {author} {\bibfnamefont {H.}~\bibnamefont
  {Takahashi}}, \bibinfo {author} {\bibfnamefont {K.}~\bibnamefont {Wakui}},
  \bibinfo {author} {\bibfnamefont {S.}~\bibnamefont {Suzuki}}, \bibinfo
  {author} {\bibfnamefont {M.}~\bibnamefont {Takeoka}}, \bibinfo {author}
  {\bibfnamefont {K.}~\bibnamefont {Hayasaka}}, \bibinfo {author}
  {\bibfnamefont {A.}~\bibnamefont {Furusawa}},\ and\ \bibinfo {author}
  {\bibfnamefont {M.}~\bibnamefont {Sasaki}},\ }\bibfield  {title} {\bibinfo
  {title} {Generation of large-amplitude coherent-state superposition via
  ancilla-assisted photon subtraction},\ }\href
  {https://doi.org/10.1103/PhysRevLett.101.233605} {\bibfield  {journal}
  {\bibinfo  {journal} {Phys. Rev. Lett.}\ }\textbf {\bibinfo {volume} {101}},\
  \bibinfo {pages} {233605} (\bibinfo {year} {2008})}\BibitemShut {NoStop}%
\bibitem [{\citenamefont {{Dodonov}}\ \emph {et~al.}(1974)\citenamefont
  {{Dodonov}}, \citenamefont {{Malkin}},\ and\ \citenamefont
  {{Man'ko}}}]{Dodonov.etal1974}%
  \BibitemOpen
  \bibfield  {author} {\bibinfo {author} {\bibfnamefont {V.~V.}\ \bibnamefont
  {{Dodonov}}}, \bibinfo {author} {\bibfnamefont {I.~A.}\ \bibnamefont
  {{Malkin}}},\ and\ \bibinfo {author} {\bibfnamefont {V.~I.}\ \bibnamefont
  {{Man'ko}}},\ }\bibfield  {title} {\bibinfo {title} {{Even and odd coherent
  states and excitations of a singular oscillator}},\ }\href
  {https://doi.org/10.1016/0031-8914(74)90215-8} {\bibfield  {journal}
  {\bibinfo  {journal} {Physica}\ }\textbf {\bibinfo {volume} {72}},\ \bibinfo
  {pages} {597} (\bibinfo {year} {1974})}\BibitemShut {NoStop}%
\bibitem [{\citenamefont {Gilchrist}\ \emph {et~al.}(2004)\citenamefont
  {Gilchrist}, \citenamefont {Nemoto}, \citenamefont {Munro}, \citenamefont
  {Ralph}, \citenamefont {Glancy}, \citenamefont {Braunstein},\ and\
  \citenamefont {Milburn}}]{Gilchrist.etal2004}%
  \BibitemOpen
  \bibfield  {author} {\bibinfo {author} {\bibfnamefont {A.}~\bibnamefont
  {Gilchrist}}, \bibinfo {author} {\bibfnamefont {K.}~\bibnamefont {Nemoto}},
  \bibinfo {author} {\bibfnamefont {W.~J.}\ \bibnamefont {Munro}}, \bibinfo
  {author} {\bibfnamefont {T.~C.}\ \bibnamefont {Ralph}}, \bibinfo {author}
  {\bibfnamefont {S.}~\bibnamefont {Glancy}}, \bibinfo {author} {\bibfnamefont
  {S.~L.}\ \bibnamefont {Braunstein}},\ and\ \bibinfo {author} {\bibfnamefont
  {G.~J.}\ \bibnamefont {Milburn}},\ }\bibfield  {title} {\bibinfo {title}
  {Schrödinger cats and their power for quantum information processing},\
  }\href {https://doi.org/10.1088/1464-4266/6/8/032} {\bibfield  {journal}
  {\bibinfo  {journal} {J. Opt. B: Quantum Semiclass. Opt.}\ }\textbf {\bibinfo
  {volume} {6}},\ \bibinfo {pages} {S828} (\bibinfo {year} {2004})}\BibitemShut
  {NoStop}%
\bibitem [{\citenamefont {Kuang}\ and\ \citenamefont
  {Zhu}(1996)}]{Kuang.etal1996}%
  \BibitemOpen
  \bibfield  {author} {\bibinfo {author} {\bibfnamefont {L.-M.}\ \bibnamefont
  {Kuang}}\ and\ \bibinfo {author} {\bibfnamefont {J.-Y.}\ \bibnamefont
  {Zhu}},\ }\bibfield  {title} {\bibinfo {title} {Even and odd phase coherent
  states for {H}ermitian phase operator theory},\ }\href
  {https://doi.org/10.1088/0305-4470/29/4/016} {\bibfield  {journal} {\bibinfo
  {journal} {J. Phys. A: Math. Gen.}\ }\textbf {\bibinfo {volume} {29}},\
  \bibinfo {pages} {895} (\bibinfo {year} {1996})}\BibitemShut {NoStop}%
\bibitem [{\citenamefont {Besse}\ \emph {et~al.}(2020)\citenamefont {Besse},
  \citenamefont {Gasparinetti}, \citenamefont {Collodo}, \citenamefont
  {Walter}, \citenamefont {Remm}, \citenamefont {Krause}, \citenamefont
  {Eichler},\ and\ \citenamefont {Wallraff}}]{Besse.etal2020}%
  \BibitemOpen
  \bibfield  {author} {\bibinfo {author} {\bibfnamefont {J.-C.}\ \bibnamefont
  {Besse}}, \bibinfo {author} {\bibfnamefont {S.}~\bibnamefont {Gasparinetti}},
  \bibinfo {author} {\bibfnamefont {M.~C.}\ \bibnamefont {Collodo}}, \bibinfo
  {author} {\bibfnamefont {T.}~\bibnamefont {Walter}}, \bibinfo {author}
  {\bibfnamefont {A.}~\bibnamefont {Remm}}, \bibinfo {author} {\bibfnamefont
  {J.}~\bibnamefont {Krause}}, \bibinfo {author} {\bibfnamefont
  {C.}~\bibnamefont {Eichler}},\ and\ \bibinfo {author} {\bibfnamefont
  {A.}~\bibnamefont {Wallraff}},\ }\bibfield  {title} {\bibinfo {title} {Parity
  detection of propagating microwave fields},\ }\href
  {https://doi.org/10.1103/PhysRevX.10.011046} {\bibfield  {journal} {\bibinfo
  {journal} {Phys. Rev. X}\ }\textbf {\bibinfo {volume} {10}},\ \bibinfo
  {pages} {011046} (\bibinfo {year} {2020})}\BibitemShut {NoStop}%
\bibitem [{\citenamefont {Sanders}(1992)}]{Sanders1992}%
  \BibitemOpen
  \bibfield  {author} {\bibinfo {author} {\bibfnamefont {B.~C.}\ \bibnamefont
  {Sanders}},\ }\bibfield  {title} {\bibinfo {title} {Entangled coherent
  states},\ }\href {https://doi.org/10.1103/PhysRevA.45.6811} {\bibfield
  {journal} {\bibinfo  {journal} {Phys. Rev. A}\ }\textbf {\bibinfo {volume}
  {45}},\ \bibinfo {pages} {6811} (\bibinfo {year} {1992})}\BibitemShut
  {NoStop}%
\bibitem [{\citenamefont {Sanders}(2012)}]{Sanders2012}%
  \BibitemOpen
  \bibfield  {author} {\bibinfo {author} {\bibfnamefont {B.~C.}\ \bibnamefont
  {Sanders}},\ }\bibfield  {title} {\bibinfo {title} {Review of entangled
  coherent states},\ }\href {https://doi.org/10.1088/1751-8113/45/24/244002}
  {\bibfield  {journal} {\bibinfo  {journal} {J. Phys. A: Math. Theor.}\
  }\textbf {\bibinfo {volume} {45}},\ \bibinfo {pages} {244002} (\bibinfo
  {year} {2012})}\BibitemShut {NoStop}%
\bibitem [{\citenamefont {Munro}\ \emph {et~al.}(2000)\citenamefont {Munro},
  \citenamefont {Milburn},\ and\ \citenamefont {Sanders}}]{Munro.etal2000}%
  \BibitemOpen
  \bibfield  {author} {\bibinfo {author} {\bibfnamefont {W.~J.}\ \bibnamefont
  {Munro}}, \bibinfo {author} {\bibfnamefont {G.~J.}\ \bibnamefont {Milburn}},\
  and\ \bibinfo {author} {\bibfnamefont {B.~C.}\ \bibnamefont {Sanders}},\
  }\bibfield  {title} {\bibinfo {title} {Entangled coherent-state qubits in an
  ion trap},\ }\href {https://doi.org/10.1103/PhysRevA.62.052108} {\bibfield
  {journal} {\bibinfo  {journal} {Phys. Rev. A}\ }\textbf {\bibinfo {volume}
  {62}},\ \bibinfo {pages} {052108} (\bibinfo {year} {2000})}\BibitemShut
  {NoStop}%
\bibitem [{\citenamefont {Jeong}\ \emph {et~al.}(2001)\citenamefont {Jeong},
  \citenamefont {Kim},\ and\ \citenamefont {Lee}}]{Jeong.etal2001}%
  \BibitemOpen
  \bibfield  {author} {\bibinfo {author} {\bibfnamefont {H.}~\bibnamefont
  {Jeong}}, \bibinfo {author} {\bibfnamefont {M.~S.}\ \bibnamefont {Kim}},\
  and\ \bibinfo {author} {\bibfnamefont {J.}~\bibnamefont {Lee}},\ }\bibfield
  {title} {\bibinfo {title} {Quantum-information processing for a coherent
  superposition state via a mixedentangled coherent channel},\ }\href
  {https://doi.org/10.1103/PhysRevA.64.052308} {\bibfield  {journal} {\bibinfo
  {journal} {Phys. Rev. A}\ }\textbf {\bibinfo {volume} {64}},\ \bibinfo
  {pages} {052308} (\bibinfo {year} {2001})}\BibitemShut {NoStop}%
\bibitem [{Note1()}]{Note1}%
  \BibitemOpen
  \bibinfo {note} {Be reminded that the noisy states in Eq.~\protect \textup
  {\hbox {\mathsurround \z@ \protect \normalfont (\ignorespaces \ref
  {eq:noisyStates}\unskip \@@italiccorr )}} are different in form as the ones
  in Eq.~\protect \textup {\hbox {\mathsurround \z@ \protect \normalfont
  (\ignorespaces \ref {eq:noisy}\unskip \@@italiccorr )}}, but direct
  calculations can establish the relation between its passing probability and
  the infidelity $\epsilon $; see Appendix~E \cite {supp}}\BibitemShut
  {NoStop}%
\bibitem [{\citenamefont {Israel}\ \emph {et~al.}(2019)\citenamefont {Israel},
  \citenamefont {Cohen}, \citenamefont {Song}, \citenamefont {Joo},
  \citenamefont {Eisenberg},\ and\ \citenamefont
  {Silberberg}}]{Israel.etal2019}%
  \BibitemOpen
  \bibfield  {author} {\bibinfo {author} {\bibfnamefont {Y.}~\bibnamefont
  {Israel}}, \bibinfo {author} {\bibfnamefont {L.}~\bibnamefont {Cohen}},
  \bibinfo {author} {\bibfnamefont {X.-B.}\ \bibnamefont {Song}}, \bibinfo
  {author} {\bibfnamefont {J.}~\bibnamefont {Joo}}, \bibinfo {author}
  {\bibfnamefont {H.~S.}\ \bibnamefont {Eisenberg}},\ and\ \bibinfo {author}
  {\bibfnamefont {Y.}~\bibnamefont {Silberberg}},\ }\bibfield  {title}
  {\bibinfo {title} {Entangled coherent states created by mixing squeezed
  vacuum and coherent light},\ }\href {https://doi.org/10.1364/OPTICA.6.000753}
  {\bibfield  {journal} {\bibinfo  {journal} {Optica}\ }\textbf {\bibinfo
  {volume} {6}},\ \bibinfo {pages} {753} (\bibinfo {year} {2019})}\BibitemShut
  {NoStop}%
\bibitem [{\citenamefont {Jeong}\ and\ \citenamefont
  {An}(2006)}]{Jeong.etal2006}%
  \BibitemOpen
  \bibfield  {author} {\bibinfo {author} {\bibfnamefont {H.}~\bibnamefont
  {Jeong}}\ and\ \bibinfo {author} {\bibfnamefont {N.~B.}\ \bibnamefont {An}},\
  }\bibfield  {title} {\bibinfo {title}
  {{G}reenberger-{H}orne-{Z}eilinger--type and {$W$}-type entangled coherent
  states: {G}eneration and {B}ell-type inequality tests without photon
  counting},\ }\href {https://doi.org/10.1103/PhysRevA.74.022104} {\bibfield
  {journal} {\bibinfo  {journal} {Phys. Rev. A}\ }\textbf {\bibinfo {volume}
  {74}},\ \bibinfo {pages} {022104} (\bibinfo {year} {2006})}\BibitemShut
  {NoStop}%
\bibitem [{\citenamefont {Hayashi}\ and\ \citenamefont
  {Morimae}(2015)}]{Hayashi.Morimae2015}%
  \BibitemOpen
  \bibfield  {author} {\bibinfo {author} {\bibfnamefont {M.}~\bibnamefont
  {Hayashi}}\ and\ \bibinfo {author} {\bibfnamefont {T.}~\bibnamefont
  {Morimae}},\ }\bibfield  {title} {\bibinfo {title} {Verifiable
  measurement-only blind quantum computing with stabilizer testing},\ }\href
  {https://doi.org/10.1103/PhysRevLett.115.220502} {\bibfield  {journal}
  {\bibinfo  {journal} {Phys. Rev. Lett.}\ }\textbf {\bibinfo {volume} {115}},\
  \bibinfo {pages} {220502} (\bibinfo {year} {2015})}\BibitemShut {NoStop}%
\bibitem [{\citenamefont {Munro}\ \emph {et~al.}(2005)\citenamefont {Munro},
  \citenamefont {Nemoto}, \citenamefont {Beausoleil},\ and\ \citenamefont
  {Spiller}}]{Munro.etal2005}%
  \BibitemOpen
  \bibfield  {author} {\bibinfo {author} {\bibfnamefont {W.~J.}\ \bibnamefont
  {Munro}}, \bibinfo {author} {\bibfnamefont {K.}~\bibnamefont {Nemoto}},
  \bibinfo {author} {\bibfnamefont {R.~G.}\ \bibnamefont {Beausoleil}},\ and\
  \bibinfo {author} {\bibfnamefont {T.~P.}\ \bibnamefont {Spiller}},\
  }\bibfield  {title} {\bibinfo {title} {High-efficiency quantum-nondemolition
  single-photon-number-resolving detector},\ }\href
  {https://doi.org/10.1103/PhysRevA.71.033819} {\bibfield  {journal} {\bibinfo
  {journal} {Phys. Rev. A}\ }\textbf {\bibinfo {volume} {71}},\ \bibinfo
  {pages} {033819} (\bibinfo {year} {2005})}\BibitemShut {NoStop}%
\bibitem [{\citenamefont {Guerlin}\ \emph {et~al.}(2007)\citenamefont
  {Guerlin}, \citenamefont {Bernu}, \citenamefont {Deleglise}, \citenamefont
  {Sayrin}, \citenamefont {Gleyzes}, \citenamefont {Kuhr}, \citenamefont
  {Brune}, \citenamefont {Raimond},\ and\ \citenamefont
  {Haroche}}]{Guerlin.etal2007}%
  \BibitemOpen
  \bibfield  {author} {\bibinfo {author} {\bibfnamefont {C.}~\bibnamefont
  {Guerlin}}, \bibinfo {author} {\bibfnamefont {J.}~\bibnamefont {Bernu}},
  \bibinfo {author} {\bibfnamefont {S.}~\bibnamefont {Deleglise}}, \bibinfo
  {author} {\bibfnamefont {C.}~\bibnamefont {Sayrin}}, \bibinfo {author}
  {\bibfnamefont {S.}~\bibnamefont {Gleyzes}}, \bibinfo {author} {\bibfnamefont
  {S.}~\bibnamefont {Kuhr}}, \bibinfo {author} {\bibfnamefont {M.}~\bibnamefont
  {Brune}}, \bibinfo {author} {\bibfnamefont {J.-M.}\ \bibnamefont {Raimond}},\
  and\ \bibinfo {author} {\bibfnamefont {S.}~\bibnamefont {Haroche}},\
  }\bibfield  {title} {\bibinfo {title} {Progressive field-state collapse and
  quantum non-demolition photon counting},\ }\href
  {https://doi.org/10.1038/nature06057} {\bibfield  {journal} {\bibinfo
  {journal} {Nature}\ }\textbf {\bibinfo {volume} {448}},\ \bibinfo {pages}
  {889} (\bibinfo {year} {2007})}\BibitemShut {NoStop}%
\end{thebibliography}
\end{document}